\tikzset{
	ncbar angle/.initial=90,
	ncbar/.style={
		to path=(\tikztostart)
		-- ($(\tikztostart)!#1!\pgfkeysvalueof{/tikz/ncbar angle}:(\tikztotarget)$)
		-- ($(\tikztotarget)!($(\tikztostart)!#1!\pgfkeysvalueof{/tikz/ncbar angle}:(\tikztotarget)$)!\pgfkeysvalueof{/tikz/ncbar angle}:(\tikztostart)$)
		\tikztonodes
		-- (\tikztotarget)
	},
	ncbar/.default=0.5cm,
}
\tikzset{square left brace/.style={ncbar=-0.3cm}}
\tikzset{square right brace/.style={ncbar=0.3cm}}
\tikzset{round left paren/.style={ncbar=0.5cm,out=120,in=-120}}
\tikzset{round right paren/.style={ncbar=0.5cm,out=60,in=-60}}
\tikzset{rootnode/.style={circle,inner sep=1pt}}
\tikzset{anynode/.style={circle,minimum size=15pt}}
\tikzset{extnode/.style={circle,minimum size=12pt,draw}}
\tikzset{roundnode/.style={circle,inner sep=1pt}}
\tikzset{squarednode/.style={rectangle,inner sep=2pt}}
\tikzset{resetnode/.style={regular polygon, regular polygon sides=3, rotate=180,minimum size=15pt,anchor=mid,draw}}
\tikzset{clockchecknode/.style={draw,diamond,minimum size=15pt}}
\tikzset{emptynode/.style={minimum size=15pt,align=center,label=center:{\LARGE$\varnothing$}}}
\tikzset{unionnode/.style={minimum size=15pt,align=center,label=center:{\Large$\cup$}}}
\tikzset{grayunionnode/.style={minimum size=15pt,align=center,label=center:{\textcolor{gray}{\Large$\cup$}}}}
\newcommand{\cA}{\mathcal{A}}
\newcommand{\cS}{\mathcal{S}}
\newcommand{\cO}{\mathcal{O}}
\newcommand{\cT}{\mathcal{T}}
\newcommand{\cC}{\mathcal{C}}
\newcommand{\bbN}{\mathbb{N}}
\newcommand{\bbR}{\mathbb{R}}
\newcommand{\bbQplus}{\mathbb{Q}_{\geq 0}}
\newcommand{\bbQinf}{\mathbb{Q}_{\geq 0}^{\infty}}
\newcommand{\xset}{\mathbf{X}}
\newcommand{\zset}{\mathbf{Z}}
\newcommand{\sem}[1]{{\llbracket{}{#1}\rrbracket}}
\newcommand{\pow}[1]{2^{#1}}
\newcommand{\nequiv}{\not \equiv}
\newcommand{\tset}{\mathbf{T}} %
\newcommand{\aset}{\mathbf{A}} %
\newcommand{\dset}{\mathbf{D}} %
\newcommand{\etype}[1]{\operatorname{type}(#1)} 
\newcommand{\etypes}[1]{\operatorname{types}(#1)} 
\newcommand{\eatt}[1]{\operatorname{att}(#1)} 
\newcommand{\dom}[1]{\operatorname{dom}(#1)} 
\newcommand{\Stream}{\bar{S}} 
\newcommand{\StreamEx}{\bar{S}_{0}}
\newcommand{\pset}{\mathbf{P}} %
\newcommand{\psetbasic}{\mathbf{P}_{\operatorname{basic}}}
\newcommand{\eset}{\mathbf{E}} %
\newcommand{\true}{\texttt{true}} %
\newcommand{\kAS}{~\operatorname{ AS }~}
\newcommand{\kOR}{~\operatorname{ OR }~}
\newcommand{\kAND}{~\operatorname{ AND }~}
\newcommand{\kFILTER}{~\operatorname{ FILTER }~}
\newcommand{\PROJ}[1]{~\pi_{#1}~}
\newcommand{\cas}{\operatorname{AS}}
\newcommand{\cand}{\operatorname{AND}}
\newcommand{\ssq}{\,:\,}
\newcommand{\sq}{\,;\,}
\newcommand{\ks}{+\,}
\newcommand{\sks}{\oplus}
\newcommand{\xssq}[1]{\,:_{#1}\,}
\newcommand{\xsq}[1]{\,;_{#1}\,}
\newcommand{\xks}[1]{+_{#1}\,}
\newcommand{\ctimesq}[1]{;_{\, #1}}
\newcommand{\ctimessq}[1]{:_{\, #1}}
\newcommand{\timesq}[1]{\, \ctimesq{#1}\,}
\newcommand{\timessq}[1]{\ctimessq{#1}}
\newcommand{\timeks}[1]{+_{\,#1}\,}
\newcommand{\timesks}[1]{\oplus_{\,#1}\,}
\newcommand{\treset}{\operatorname{reset}}
\newcommand{\fract}{\operatorname{fract}}
\newcommand*\circled[1]{\tikz[baseline=(char.base)]{
		\node[shape=circle,draw,inner sep=2pt] (char) {#1};}}
\newcommand{\semaux}[1]{\sem{#1}}
\newcommand{\semauxop}[1]{{\llceil{#1}\rrfloor}}
\newcommand{\ctime}[0]{{\operatorname{interval}}}
\newcommand{\type}[0]{{\operatorname{type}}}
\newcommand{\cstart}[0]{{\operatorname{start}}}
\newcommand{\cend}[0]{{\operatorname{end}}}
\newcommand{\TRUE}{{\operatorname{true}}}
\newcommand{\FALSE}{{\operatorname{false}}}
\newcommand{\twin}[2]{\langle #1 \rangle_{#2}}
\newcommand{\nnode}{\texttt{n}}
\newcommand{\onode}{\texttt{o}}
\newcommand{\unode}{\texttt{u}}
\newcommand{\uleft}{\textsf{left}}
\newcommand{\uright}{\textsf{right}}
\newcommand{\maxreset}{\text{max-r}}
\newcommand{\maxset}{\textsc{MS}}
\newcommand{\gmaxset}[1]{\textsc{M}_{#1}}
\newcommand{\ulist}{\texttt{ul}}
\newcommand{\ecs}{\mathcal{E}}
\newcommand{\entry}{\texttt{entry}}
\newcommand{\exit}{\texttt{exit}}
\newcommand{\nodes}{\texttt{nodes}}
\newcommand{\linit}{\textsf{new-ulist}}
\newcommand{\linsert}{\textsf{insert}}
\newcommand{\lmerge}{\textsf{merge}}
\newcommand{\odepth}{\operatorname{odepth}}
\newcommand{\EvalProblem}{\textsc{TimedCEAEvaluation}}
\newcommand{\dbottom}{\textsf{new-bottom}}
\newcommand{\dunion}{\textsf{union}}
\newcommand{\dclockcheck}{\textsf{add-clock-check}}
\newcommand{\dreset}{\textsf{add-reset}}
\newcommand{\dextend}{\textsf{extend}}
\newcommand{\createbottom}{\textsf{create-bottom-node}}
\newcommand{\createunion}{\textsf{create-union-node}}
\newcommand{\createunions}{\textsf{create-union-nodes}}
\newcommand{\createextend}{\textsf{create-extend-node}}
\newcommand{\createempty}{\textsf{create-empty-node}}
\newcommand{\getgadget}{\textsf{get-gadget}}
\newcommand{\createclockcheckgadget}{\textsf{create-clock-check-gadget}}
\newcommand{\swap}{\textsf{swap}}
\newcommand{\createresetgadget}{\textsf{create-reset-gadget}}
\newcommand{\mergegadgets}{\textsf{merge-gadgets}}
\newcommand{\disbottom}{\textsf{is-bottom}}
\newcommand{\disunion}{\textsf{is-union}}
\newcommand{\disextend}{\textsf{is-extend}}
\newcommand{\disempty}{\textsf{is-empty}}
\newcommand{\EnumCAECS}{\texttt{EnumCAECS}}
\newcommand{\dleft}{\ell}
\newcommand{\dright}{r}
\newcommand{\newul}{\textsf{new-union-list}}
\newcommand{\insertul}{\textsf{insert}}
\newcommand{\mergeul}{\textsf{merge}}
\newcommand{\ulclockcheck}{\textsf{ul-clock-check}}
\newcommand{\ulreset}{\textsf{ul-reset}}
\newcommand{\yield}{\textsf{yield}}
\newcommand{\EOS}{\textsf{EOS}}
\newcommand{\DS}{\textsf{DS}}
\newcommand{\ckeys}{\textsf{keys}}
\newcommand{\cordkeys}{\textsf{ordered-keys}}
\newcommand{\hasnext}{\textrm{hasnext}}
\newcommand{\ustack}{\textrm{St}}
\newcommand{\bracketnode}[1]{$\displaystyle\left[\begin{matrix}#1\end{matrix}\right]$}
\renewcommand{\paragraph}[1]{\smallskip
	\noindent\textbf{#1}.}
\renewcommand{\operatorname}[1]{\mathsf{#1}}
\algrenewcommand\algorithmicindent{1.3em}%
\algnewcommand\algorithmicforeach{\textbf{for each}}
\theoremstyle{plain}
\newtheorem{theorem}{Theorem}[section]
\newtheorem*{theorem*}{Theorem}
\newtheorem{proposition}[theorem]{Proposition}
\newtheorem*{proposition*}{Proposition}
\newtheorem{lemma}[theorem]{Lemma}
\newtheorem*{lemma*}{Lemma}
\newtheorem*{corollary*}{Corollary}
\newtheorem*{claim*}{Claim}
\theoremstyle{definition}
\newtheorem*{definition*}{Definition}
\theoremstyle{remark}
\newtheorem*{observation*}{Observation}
\newtheorem{example}[theorem]{Example}
\newtheorem*{example*}{Example}
\title{Complex event recognition under time constraints: \\ towards a formal framework for efficient query evaluation}
\author[1]{Julián García}
\author[1]{Cristian Riveros}
\affil[1]{Pontificia Universidad Católica de Chile, \texttt{jgarcg@uc.cl}, \texttt{cristian.riveros@uc.cl}}
\date{} 
\begin{document}
	
	\maketitle

\begin{abstract}
Complex Event Recognition (CER) establishes a relevant solution for processing streams of events, giving users timely information. CER systems detect patterns in real-time, producing complex events and generating responses to them. In these tasks, time is a first-class citizen. Indeed, the time-sequence model distinguishes CER from other solutions, like data stream management systems. Surprisingly, until now, time constraints are usually included in CER query languages and models in a restricted way, and we still lack an understanding of the expressiveness and of efficient algorithms concerning this crucial feature: time.

This work studies CER under time constraints regarding its query language, computational models, and streaming evaluation algorithms. We start by introducing an extension of Complex Event Logic (CEL), called timed CEL, with simple time operators. We show that timed CEL aids in modeling CER query languages in practice, serving as a proxy to study the expressive power of such languages under time constraints.  
For this purpose, we introduce an automata model for studying timed CEL, called timed Complex Event Automata (timed CEA). This model extends the existing CEA model with clocks, combining CEA and timed automata in a single model. We show that timed CEL and timed CEA are equally expressive, giving the first characterization of CER query languages under time constraints. Then, we move towards understanding the efficient evaluation of timed CEA over streams concerning its determinization and efficient algorithms. We present a class of timed CEA that are closed under determinization; furthermore, we show that this class contains swg-queries, an expressive class of CER queries recently introduced by Kleest-Meißner et al. Finally, we present a streaming evaluation algorithm with constant update time and output-linear delay for evaluating deterministic monotonic timed CEA with a single clock, which have only less equal or greater equal comparisons.

\end{abstract} 	

	\section{Introduction}\label{sec:introduction}

Complex Event Processing (CEP) forms a set of solutions and technologies for processing data streams in real-time~\cite{cugola2012processing,GiatrakosAADG20}. CEP systems model their data sources as an infinite sequence of real-time events coming, for example, from sensor measurements, traffic reports, weather events, or social media messages. Complex Event Recognition (CER) is a subtask in CEP that processes event streams in real-time by detecting patterns, producing complex events, and generating responses to them as the events arrive. Users define these patterns through a structured query language that allows them to specify situations of interest in the real world.

Since data represents real-world events, \emph{time} is a first-class citizen in CER. As stated in~\cite{UgarteV18,cugola2012processing}, the time-sequence model is what distinguishes CER from data stream management systems, both in terms of query languages and evaluation algorithms. As early as the first CER systems~\cite{luckham1995event}, their query languages recognized the importance of time, incorporating special operators to include time constraints. Indeed, all CER systems include the \texttt{WITHIN} clause to specify a time window where the pattern must appear.
In other words, supporting time restrictions is a crucial feature of any CER query language today.

Surprisingly, until now, time constraints are usually included in CER query languages and models in a restricted way~\cite{cugola2012processing}. As a proof of fact, the \texttt{WITHIN} clause is included in all CER query languages but only for specifying a restriction over the overall pattern. In particular, one cannot use this operator internally or nested it inside a pattern. 
Although there are some proposals for more advanced time operators in CER query languages~\cite{li2005composite,GiatrakosAADG20}, there is no formal logic or computational model that helps us understand the expressiveness and efficient algorithms concerning this crucial feature: time.

This paper studies CER under time constraints regarding its query language, computational model, and streaming evaluation algorithms. 
Our first goal is to find a logic that serves as a proxy to study the expressive power of CER query languages under time constraints (Section~\ref{sec:timed-cel}).  
We propose to extend Complex Event Logic (CEL) with time windows and time operators, which we call \emph{timed CEL}. We show that timed CEL aids in modeling CER query languages, including most operators used in practice for declaring time constraints.

To understand the expressiveness of timed CEL, we look for the right automata model to express time constraints in CER (Section~\ref{sec:timed-cea}). Timed automata is an extension of finite state automata with clocks that has a rich theory and has been successful for formal verification of real-time systems. Our strategy is simple: to bridge the theory of timed automata with CER.
We introduce \emph{timed Complex Event Automata} (timed CEA), a model that extends the existing CER automata (called CEA) with clocks, combining CEA and timed automata in a single model. We show that timed CEA is the right model for CER by showing that timed CEL and timed CEA are equally expressive. To the best of our knowledge, this is the first characterization of the expressive power of CER query languages under time constraints.

With a logic and automata model for processing CER queries with time constraints, we move towards finding efficient evaluation algorithms for them. We start by studying the determinization of timed CEA, namely, to convert a timed CEA into a deterministic one (Section~\ref{sec:determinization}). This problem is crucial for evaluation since efficient enumeration algorithms over automata models usually require determinizing them to remove duplicates. Unfortunately, not all timed CEA admit an equivalent deterministic one, and thus, we look for classes of timed CEA that we can determinize efficiently. On this line, we present the class of \emph{synchronous resets} timed CEA, which is simple and easy to determinize. Furthermore, we show that this class contains swg-queries, an expressive class of CER queries recently introduced by Kleest-Meißner et al~\cite{Kleest-Meissner22}.

Having a class of deterministic timed CEA, we study algorithms to evaluate them (Section~\ref{sec:evaluation}). On this line, we present a streaming evaluation algorithm with constant update time and output-linear delay for evaluating monotonic deterministic timed CEA with a single clock, which have only less equal or greater equal comparisons. 

\paragraph{Related work} Here, we do a general review of the relation of this work to CER and timed automata. We discuss more connections throughout the paper whenever they are more suitable.

As stated above, time is a special feature in CER, and all CER systems include event timestamps in their models. Windows is the most common operator for declaring time constraints in query languages~\cite{cugola2012processing,GiatrakosAADG20}. Further, people have proposed efficient algorithms for evaluating queries over a sliding window~\cite{TangwongsanHS21,BucchiGQRV22}. Other proposals used timed restrictions between pairs of events~\cite{li2005composite,Kleest-Meissner22}. Nevertheless, we are not aware of a work in CER that has studied the expressiveness of such query languages, or efficient algorithms for evaluating queries with general time constraints. 

Timed automata~\cite{AlurD94} have a rich theory and applications in formal verification. This work is based on the theory of timed automata making a bridge with CER and, more generally, with data stream management systems. Logic for timed automata and real-time systems has been proposed in the past~\cite{koymans1990specifying,AsarinCM2002}. However, these logics are designed for formal verification of systems, rather than for streaming processing. In particular, they have not been used for CER: they are not defined to produce outputs (i.e., complex events) or studied for efficient evaluation over streams. Some works~\cite{chakraborty2005event,boubeta2017analysis} have used timed automata in CER; however, they did not study the same research questions as this work regarding expressive analysis, and efficient query answering.

	\section{Preliminaries}\label{sec:preliminaries}

\paragraph{Sets and intervals}  Given a set $A$, we denote by $\pow{A}$ the \emph{powerset} of $A$. We denote by $\bbN$ the natural numbers. Given $n,m \in \bbN$ with $n \leq m$, we denote by $[n]$ the set $\{1, \ldots, n\}$ and by $[n..m]$ the interval $\{n, n+1, \ldots, m\}$ over $\bbN$. 
We denote by $\bbQplus$ the non-negative rational numbers, and by $\bbQinf$ the set $\bbQplus \cup \{\infty\}$ where $q < \infty$ for every $q \in \bbQplus$. 
A subset $I \subseteq \bbQplus$ is an \emph{interval over $\bbQplus$} if for every $p, q \in I$ and $r \in \bbQplus$, if $p \leq r \leq q$, then $r \in I$.
We will use the standard notation for specifying intervals over $\bbQplus$, namely, as pairs in $\bbQplus \times \bbQinf$ with squared or circled brackets. For example, $(\frac{1}{2}, 3] = \{p \in \bbQplus \mid \frac{1}{2} < p \leq 3\}$, $(1, 2) = \{p  \in \bbQplus \mid 1 < p < 2\}$, and $[7, \infty) = \{p \in \bbQplus \mid 7 \leq p < \infty\}$ are intervals over $\bbQplus$ under this notation.

\paragraph{Events and streams} We fix a set $\tset$ of \emph{event types}, a set $\aset$ of \emph{attributes names}, and a set $\dset$ of \emph{data values} (e.g., integers, floats, strings). An \emph{event} $e$ is a partial mapping $e: \aset \rightarrow \dset$ that maps attributes names in $\aset$ to data values in $\dset$. We denote $\eatt{e}$ the domain of $e$, called the \emph{attributes of $e$}, and we assume that $\eatt{e}$ is finite. 
We denote by $e(A)$ the data value of the attribute $A \in \aset$ whenever $A \in \eatt{e}$. Further, each event $e$ has a \emph{type} in $\tset$ denoted by~$\etype{e}$. We define the \emph{size} $|e|$ of an event $e$ as the number of RAM registers to encode $e$. We write $\eset$ to denote the \emph{set of all events} over event types $\tset$, attributes names $\aset$, and data values $\dset$.
A \emph{stream} is an (arbitrary long) sequence $\Stream = e_1 e_2 \ldots e_{n}$ of events where $|\Stream| = n$ is the length of the stream. 

\begin{example} \label{ex:stream}
	As a running example (from~\cite[Section 3]{GrezRUV21}), suppose the scenario of a national park sensor network measuring the temperature and humidity of the park. For the sake of simplification, assume that there are two types of events: $T$ and $H$. A \emph{temperature event} $e$ has $\etype{e} = T$ and $\eatt{e} = \{temp\}$ where $e(temp)$ contains the temperature value in  degrees Celsius (e.g., $e(temp) = \text{20°}$). A \emph{humidity event} $e$ has $\etype{e} = H$ and $\eatt{e} = \{hum\}$ where $e(hum)$ contains the value as a percentage (e.g., $e(hum) = \text{40\%}$). In Figure~\ref{fig:stream}, we display a stream $\StreamEx$ for this scenario.
\end{example}

\begin{figure}
	\centering{
		{\small
			\begin{tabular}{|l|c|c|c|c|c|c|c|c|c|c}\hline
				type  &$H$&$T$&$H$&$H$&$T$&$T$&$T$&$H$&$H$ & \dots \\ \hline
				$temp$  &  & 45° &  &  & 40° & 42° & 25° & &  & \multirow{2}{*}{\dots} \\
				$hum$ & 25\% & & 20\%& 25\%& & & & 70\% & 18\%\\ \hline
				time (sec)  & 1.2 & 1.33 & 2.5 & 3.7 & 4.5 & 5.3 & 5.9 & 6.1 & 7.2 & \dots \\ \hline
				\multicolumn{1}{c}{} &
				\multicolumn{1}{c}{1} &  
				\multicolumn{1}{c}{2} &
				\multicolumn{1}{c}{3} &
				\multicolumn{1}{c}{4} &
				\multicolumn{1}{c}{5} &
				\multicolumn{1}{c}{6} &
				\multicolumn{1}{c}{7} &
				\multicolumn{1}{c}{8} & 
				\multicolumn{1}{c}{9} & 
				\multicolumn{1}{c}{\dots} 
		\end{tabular}}
		\caption{A timed stream $\StreamEx$ of temperatures ($T$) and humidities ($H$) with attributes $temp$ (degrees Celcius) and $hum$ (percentage), respectively. The timestamp is the third line, measured in seconds. The fourth line is the index position in the stream.}\label{fig:stream}
	}
\end{figure}

\paragraph{Complex events} Fix a finite set $\xset$ of \emph{variables} and assume that $\tset \subseteq \xset$, where $\tset$ is the set of event types as defined earlier. Let $\Stream$ be a stream of length~$n$. A \emph{complex event}\footnote{In~\cite{GrezRUV21,BucchiGQRV22},triples of the form $(i, j, \mu)$ are called valuations and complex events are triples $(i, j, D)$ where $D \subseteq [i..j]$. In this work, we prefer to call $(i, j, \mu)$ as complex events to simplify the presentation.} of $\Stream$ is a triple $(i, j, \mu)$ where $i, j \in [n]$, $i \leq j$, and $\mu:\xset \rightarrow \pow{[i..j]}$ is a function with finite domain. Intuitively, $i$ and $j$ marks the beginning and end of the interval where the complex event happens, and $\mu$ stores the events in the interval $[i..j]$ that fired the complex event.
In the following, we will usually use $C$ to denote a complex event $(i, j, \mu)$ of $\Stream$ and omit $\Stream$ if the stream is clear from the context. We will use $\ctime(C)$, $\cstart(C)$, and $\cend(C)$ to denote the interval $[i..j]$, the start $i$, and the end $j$ of $C$, respectively. Further, by some abuse of notation we will also use $C(X)$ for $X \in \xset$ to denote the set $\mu(X)$ of $C$.  

\begin{example}\label{ex:complex-events}
	Complex events of the stream $\StreamEx$ of Figure~\ref{fig:stream} can be $C_1=(5,9, \{X \mapsto \{5\}, Y \mapsto \{9\}, H \mapsto \varnothing, T \mapsto \varnothing\})$ and  $C_2=(2,9, \{X \mapsto \{2\}, Y \mapsto \{9\}, H \mapsto \varnothing, T \mapsto \varnothing\})$ where $X$ and $Y$ are variables in $\xset$.
\end{example}

The following operations on complex events will be useful throughout the paper. We define the \emph{union} of complex events $C_1$ and $C_2$, denoted by $C_1 \cup C_2$, as the complex event $C'$ such that $\cstart(C') = \min\{\cstart(C_1), \cstart(C_2)\}$, $\cend(C') = \max\{\cend(C_1), \cend(C_2)\}$, and $C'(X) = C_1(X) \cup C_2(X)$ for every $X \in \xset$. Further, we define the \emph{projection over $L$} of a complex event $C$, denoted by $\pi_L(C)$, as the complex event $C'$ such that $\ctime(C') = \ctime(C)$ and $C'(X) = C(X)$ whenever $X \in L$, and $C'(X) = \emptyset$, otherwise. Finally, we denote by $(i,j, \mu_\emptyset)$ the complex event with trivial mapping $\mu_\emptyset$ such that $\mu_\emptyset(X) = \emptyset$ for every $X \in \xset$.

\paragraph{Predicates of events} A \emph{predicate} is a possibly infinite set $P$ of events. 
For instance (see Example~\ref{ex:stream}), $P$ could be the set of all $T$-events $e$ such that $e(temp) \leq \text{20°}$.
In our examples, we will use the notation $A \sim a$ where $A \in \aset$ and $a \in \dset$ to denote the predicate $P = \{e \mid e(A) \sim a\}$ (e.g., $temp \leq \text{20°}$).
We say that an event $e$ satisfies predicate $P$, denoted $e \models P$, if, and only if, $e \in P$.
We generalize this notation from events to a set of events $E$ such that $E \models P$ if, and only if, $e \models P$ for every~$e \in E$. 

We assume a fixed \emph{set of predicates} $\pset$. Further, we assume that there is a basic set of predicates $\psetbasic \subseteq \pset$ (e.g., $temp \leq \text{20°}$) and $\pset$ is the closure of $\psetbasic$ under intersection and negation (i.e., $P_1 \cap P_2 \in \pset$ and $\eset \setminus P \in \pset$ for every $P, P_1, P_2 \in \pset$) where $\eset$ is a predicate in $\pset$, that we usually denote by $\true$. 
For every $P \in \pset$, we define a size of a predicate $|P|$ as follows: $|P| = 1$ if $P \in \psetbasic$ (i.e., constant size), $|P| = |P_1| + |P_2|$ if $P = P_1 \cap P_2$, and $|P| = |P'|+1$ if $P = \eset \setminus P'$.
For every $P \in \pset$, we assume that the time to check if $e \models P$ is in $\mathcal{O}(|e|)$.

\paragraph{Complex event logic} In this work, we use the \emph{Complex Event Logic} (CEL) introduced in~\cite{GrezRUV21} and implemented in CORE~\cite{BucchiGQRV22} as our basic query language for CER. The syntax of a CEL formula $\varphi$ is given by the grammar:
\[
	\varphi \ := \  R  \ \mid \ \varphi \kAS X \ \mid \ \varphi \kFILTER X[P] \ \mid \   \varphi \kOR \varphi \mid \ \varphi \kAND \varphi \ \mid \ \varphi \sq \varphi \ \mid \ \varphi \ssq \varphi \ \mid \ \varphi\ks \ \mid \ \varphi\sks \ \mid \ \pi_L(\varphi)
\]
where $R \in \tset$ is an event type, $X \in \xset$ is a variable, $P \in \pset$ is a predicate, and $L \subseteq \xset$ is a set of variables.  
We define the semantics of a CEL formula $\varphi$ over a stream $\Stream$, recursively, as a set of complex events over $\Stream$. In Figure~\ref{fig:cel-semantics}, we define the semantics of each CEL operator like in~\cite{BucchiGQRV22,GrezRUV21}. 
\begin{figure}[t]
	\small
	\centering
	\begin{align*}
		\sem{R}(\Stream) & \ = \ \{ (i,i, \mu) \ \mid \!\!
		\begin{array}[t]{l}
			\type(e_i) = R \ \wedge \ \mu(R) = \{i\} \ \wedge \  \forall X \neq R. \ \mu(X) = \emptyset\} 
		\end{array}  \\ 
		\sem{\varphi \kAS X}(\Stream) & \  = \  \{ C \ \mid \!\!
		\begin{array}[t]{l}
			\exists \, C' \in \sem{\varphi}(\Stream). \ \ \ctime(C) = \ctime(C') \ \wedge \ C(X) = \bigcup_{Y} C'(Y)   \\
			\wedge \ \forall Z \neq X. \  C(Z) = C'(Z) \} 
		\end{array} \\
		\sem{\varphi \kFILTER X[P]}(\Stream) & \ = \  \{ C \ \mid \!\!
		\begin{array}[t]{l}
			C \in \sem{\varphi}(\Stream)  \ \wedge \ C(X) \models P \} 
		\end{array} \\
		\sem{\varphi_1 \kOR \varphi_2}(\Stream) & \ = \ \sem{\varphi_1}(\Stream) \ \cup \  \sem{\varphi_2}(\Stream) \\
		\sem{\varphi_1 \kAND \varphi_2}(\Stream) & \ = \ \sem{\varphi_1}(\Stream) \ \cap \  \sem{\varphi_2}(\Stream) \\
		\sem{\varphi_1 \sq \varphi_2}(\Stream) & \ = \ \{ C_1 \cup C_2 \ \mid \!\!
		\begin{array}[t]{l}
			C_1 \in \sem{\varphi_1}(\Stream) \, \wedge \, C_2 \in \sem{\varphi_2}(\Stream) \, \wedge \, \cend(C_1) < \cstart(C_2)   \}
		\end{array}  \\
		\sem{\varphi_1 \ssq \varphi_2}(\Stream) & \ = \ \{ C_1 \cup C_2 \ \mid \!\!
		\begin{array}[t]{l}
			C_1 \in \sem{\varphi_1}(\Stream) \, \wedge \,  C_2 \in \sem{\varphi_2}(\Stream) \, \wedge \,
			\cend(C_1) + 1 = \cstart(C_2) \}
		\end{array}  \\
		\sem{\varphi\ks}(\Stream) & \ = \ \sem{\varphi}(\Stream) \ \cup \ \sem{\varphi \sq \varphi \ks}(\Stream) \\
		\sem{\varphi\sks}(\Stream) & \ = \ \sem{\varphi}(\Stream) \ \cup \ \sem{\varphi \ssq \varphi \sks}(\Stream) \\
		\sem{\PROJ{L}\!(\varphi)}(\Stream) & \ = \ \{ \pi_L(C) \ \mid \!\!
		\begin{array}[t]{l}
			C \in \sem{\varphi}(\Stream) \}
		\end{array}
	\end{align*}
	\vspace{-5mm}
	\caption{The semantics of CEL formulas defined over a stream $\Stream = e_1 e_2 \ldots e_n$ where each $e_i$ is an event.
	}
	\label{fig:cel-semantics}
\end{figure}
\begin{example}\label{ex:cel-query}
Coming back to Examples~\ref{ex:stream} and~\ref{ex:complex-events}, suppose we want to detect a temperature event $X$ above 40° followed by a humidity event $Y$ of less than 25\%, which represents a fire with high probability. In addition, we want to check that there is some $T$-event $e$ between $X$ and $Y$ with a temperature above 40° that confirms the high temperature (e.g., temperature measures are faulty in the sensor network). The following CEL formula defines this complex event:
\[
\varphi_1 \ := \  \pi_{X,Y}\big( (T \kAS X; T; H \kAS Y) \kFILTER (T[temp > \text{40°}] \wedge H[hum < \text{25\%}]) \big) 
\] 
Note that we use variables $X$ and $Y$ to mark the first temperature and the last humidity, respectively, removing the middle event $T$ from the output. 
In other words, we only want to check the existence of such $T$-event but do not want it in the final output. 
One can check that $C_1, C_2 \in \sem{\varphi_1}(\StreamEx)$ among others complex events.
\end{example}

	\section{Complex event logic under time constraints}\label{sec:timed-cel}

In this section, we introduced \emph{timed CEL}, an extension of CEL with time operators for modeling time constraints in complex event recognition (CER). 
We show how to extend the setting and CEL for modeling these constraints in practice. Further, we show through examples that these operators include most operators introduced in previous works.

\paragraph{Timed complex event logic} We first need to extend each event in a stream with a timestamp to include time constraints. Intuitively, the timestamp of an event models how time passes in the systems when events arrive continuously. These timestamps could represent when the event happened, when the event was measured, or when the event arrived in the system. In any possible scenario, we assume that these timestamps strictly increase with the arrival of events.\footnote{In practice, this assumption is not always valid since two events could arrive with the same timestamp or not necessarily in increasing order. Although these are relevant scenarios, we leave them for future work.}

Formally, recall that $\tset$ is a fixed set of event types, $\aset$ is a fixed set of attributes names, and $\dset$ is a fixed set of data values for our events $e:\aset \rightarrow \dset$ with types in $\tset$. A \emph{timed stream} is an (arbitrary long sequence) of pairs:
\[
\Stream \ = \ (e_1, t_1) (e_2, t_2) \ldots (e_n, t_n)
\] 
where $e_i$ is an event and $t_i \in \bbQplus$ is its timestamp for every $i \in \bbN$. We will usually call a pair $(e_i, t_i)$ a \emph{timed event} where $e_i$ is its event and $t_i$ is its timestamp. Further, we assume that timestamps are strictly increasing, namely, $t_1 < t_2 < \ldots < t_n$. The stream $\StreamEx$ of Figure~\ref{fig:stream} is also a timed stream where the last line displayed the timestamp of each event (in seconds).

Note that, although we could have modeled timestamps with a special attribute ``$\operatorname{ts}$'' such that $e(\operatorname{ts}) = t_i$, we preferred to model them as an additional component (i.e., as $(e_i, t_i)$). The reason is that timestamps are first citizens in this framework, and we wanted to clearly distinguish between the data (i.e., the events) and the time, which is constantly increasing. 

We define \emph{timed Complex Event Logic} (timed CEL for short) as an extension of CEL with additional operators for managing timestamps in time stream. 
Specifically, a timed CEL formula $\varphi$ follows the same grammar as a CEL formula %
where we additionally add the following syntax~rules:
\[
\begin{array}{rcll}
	\varphi & := &  \twin{\varphi}{I} & \text{(time window)} \\
	& \mid & \varphi \timesq{I} \varphi & \text{(time non-contiguous sequencing)}  \\
	& \mid & \varphi \timessq{I} \varphi & \text{(time contiguous sequencing)}  \\
	& \mid & \varphi \timeks{I} & \text{(time non-contiguous iteration)}  \\
	& \mid & \varphi \timesks{I} \ \ \ \ \ \ \ \ \ \  & \text{(time contiguous iteration)} 
\end{array}
\]
where $I$ is an interval over $\bbQplus$. We call the above operators the \emph{time operators} of timed CEL, whereas we call the other standard operators just \emph{CEL operators}. 
The previous syntax extends sequencing and iteration operators in CEL (both in their contiguous and non-contiguous versions) with a corresponding time-variant, where we additionally check a time constraint between complex events. Instead, $\twin{\varphi}{I}$ is a natural extension of time windows in CER~\cite{cugola2012processing,GiatrakosAADG20}. 
\begin{figure}[t]
	\small
	\centering
	\begin{align*}
		\sem{\twin{\varphi}{I}}(\Stream) & \ = \ \{ C \ \mid \!\!
		\begin{array}[t]{l}
			C \in \sem{\varphi}(\Stream) \ \wedge \ t_{\cend(C)} - t_{\cstart(C)} \in  I \\ 
		\end{array} \}\\
		\sem{\varphi_1 \timesq{I} \varphi_2}(\Stream) & \ = \ \{ C_1 \cup C_2 \ \mid \!\!
		\begin{array}[t]{l}
			C_1 \in \sem{\varphi_1}(\Stream) \,\wedge\, C_2 \in \sem{\varphi_2}(\Stream)  \\
			\wedge \, \cend(C_1) < \cstart(C_2) \, \wedge \, t_{\cstart(C_2)} - t_{\cend(C_1)} \in I \}
		\end{array}  \\
		\sem{\varphi_1 \timessq{I} \varphi_2}(\Stream) & \ = \ \{ C_1 \cup C_2 \ \mid \!\!
		\begin{array}[t]{l}
			C_1 \in \sem{\varphi_1}(\Stream) \,\wedge\, C_2 \in \sem{\varphi_2}(\Stream) \\
			\wedge \, \cend(C_1) +1 =  \cstart(C_2) \, \wedge \, t_{\cstart(C_2)} - t_{\cend(C_1)} \in I \}
		\end{array}  \\
		\sem{\varphi\timeks{I}}(\Stream) & \ = \ \sem{\varphi}(\Stream) \ \cup \ \sem{\varphi \timesq{I} (\varphi \timeks{I})}(\Stream) \\
		\sem{\varphi\timesks{I}}(\Stream) & \ = \ \sem{\varphi}(\Stream) \ \cup \ \sem{\varphi \timessq{I} (\varphi \timesks{I})}(\Stream) 
	\end{align*}
	\vspace{-5mm}
	\caption{The semantics of time operators of timed CEL for every timed CEL formulas $\varphi, \varphi_1, \varphi_2$, a timed stream $\Stream = (e_1, t_1) \ldots (e_n, t_n)$, and $I$ is an interval over $\bbQplus$. Here, recall that $t_{\cstart(C)}$ is the timestamp of the event in index $\cstart(C)$. 
	}
	\label{tab-time-semantics}
\end{figure}

Timed CEL is an extension of CEL and, as such, we can naturally extend the semantics of CEL operators from streams to timed streams where we just omit the timestamp of timed events (see Figure~\ref{fig:cel-semantics} in Section~\ref{sec:preliminaries}). For the time operators of timed CEL, we define their semantics recursively in Figure~\ref{tab-time-semantics}. %
In the sequel, we will use the notation $\sim\! c$ in timed CEL formulas to specify an interval $I = \{p \in \bbQplus \mid p \sim c\}$ over $\bbQplus$ for any $c \in \bbQplus$ and $\sim \, \in \{\leq, \geq, <, >, =\}$. For example, we write $\twin{A\timesq{> 2} B}{\leq 5}$ to denote the formula $\twin{A\timesq{(2, \infty)} B}{[0, 5]}$. Next, we show some examples on how to use and compose the time operators of timed CEL to define relevant patterns over timed streams.

\begin{example}[Continued Example~\ref{ex:cel-query}] \label{ex:timed-cel1}
	For finding relevant complex events in a fire scenario, we also want to restrict that the time difference between the high temperature $X$ and low humidity $Y$ is less than five seconds. Further, we expect that the confirmation of the high temperature (i.e., the middle event $T$) occurs in less than one second from $X$. We can define these time restrictions in timed CEL as follows:
	\[
	\varphi_1' \ := \  \pi_{X,Y}\big( \twin{T \kAS X\timesq{\leq 1} T; H \kAS Y}{\leq 5} \kFILTER (T[temp > \text{40°}] \wedge H[hum < \text{25\%}]) \big) 
	\] 
	We can check that now $C_1 \in \sem{\varphi_1'}(\StreamEx)$ but $C_2 \notin \sem{\varphi_1'}(\StreamEx)$. 
\end{example}
\begin{example} \label{ex:timed-cel2}
As an example of using time iteration, suppose that we want to see how temperature changes when humidity increases from less than 30\% to over 30\%~\cite[Ex.3]{GrezRUV21}. Further, the time difference between consecutive events must be less than one second. We can define this query as:
\[
\varphi_2 \ := \ \pi_{X,Y,T}\big(\,(H \kAS X \timessq{\leq 1} (T \timesks{\leq 1}) \timessq{\leq 1} H \kAS Y) \kFILTER (X[hum < 30\%] \wedge Y[hum > 30\%]) \,\big)
\]
Note that we use the contiguous sequencing and iteration to capture all consecutive events between $X$ and $Y$. Furthermore, time iteration is useful to impose a repetitive time restriction over a sequence of contiguous events. One can check that $(4, 8, \{X \mapsto \{4\}, Y \mapsto \{8\}, T \mapsto \{5,6,7\}, H \mapsto \varnothing\}) \in \sem{\varphi_2}(\StreamEx)$.
\end{example}

\paragraph{Timed CEL in practice} Arguably, the operators introduced in timed CEL cover most of the use of time restrictions in CER. Below, we discuss how timed CEL can model previous proposals, dividing them into three groups. Our goal is not to be exhaustive but to show how different flavors of timed constraints appear in CER and coincide with timed CEL.

\smallskip
\noindent \emph{Time windows.} Most previous CER query languages introduce time restriction as a form of a sliding window~\cite{GiatrakosAADG20,verwiebe2023survey}, declared by a \texttt{WITHIN} clause. For example, consider the SASE+ query in~\cite{agrawal2008efficient}:
	\begin{center}
		\begin{tabular}{l}
			\texttt{PATTERN SEQ(Shelf a, $\sim$(Register b), Exit c)} \\
			\texttt{WHERE skip\_till\_next\_match(a, b, c) \{ [tag\_id] \} } \\
			\texttt{WITHIN 12 hours}
		\end{tabular}
	\end{center}
	Here, the \texttt{WITHIN} clause forces that the pattern must appear in a time window of the last 12 hours. Similarly, several systems use the \texttt{WITHIN} clause (see, e.g., \cite{PoppeLRM17,MaLPR22,BucchiGQRV22,cugola2012complex}). Specifically, if a user wants to find the pattern $\varphi$ between $c$ hours (or any time scale) for some value $c$, we can express it in timed CEL as $\twin{\varphi}{\leq c}$.
	
	Notice that we limit the above discussion to use time-based windows, contrary to count-based windows where the \texttt{WITHIN} clause specifies how many events must be inside the windows (e.g., \texttt{WITHIN 100 events}). For count-based windows, one can easily use the AND operator to declare that the matched substream must have some specific length. Also, we compared timed CEL with operators in CER for imposing time restrictions over a pattern. Specifically, CER systems and, in general, stream data management systems also use window operators to restrict the amount of data where it must evaluate the query, which we call here a \emph{data window}. For example, \emph{tumbling windows} is a data window that splits the stream into consecutive sequences of equal size (i.e., the same number of events), and the query is evaluated over each sequence. We refer the reader to~\cite{verwiebe2023survey} that surveys several data windows types used in stream processing systems. Of course, timed CEL cannot necessarily express data windows, given that the time operators are designed to impose restrictions related to time and not associated with data. 
	
\smallskip	
\noindent \emph{Time sequencing and time iteration.} These operators are less common in the literature. One example of a system that uses time sequencing is Padres~\cite{li2005composite} which uses $\ctimesq{I}$ for checking a time constraint with two relevant events (not necessarily contiguous). As an illustration, consider the following (simplified) example from~\cite{cugola2012processing}:
\[
\texttt{(B(Y=10);}_{\texttt{[timespan:5]}} \  \texttt{C(Z<5))}_{\texttt{[within:15]}}
\]
Intuitively, this query in Padres means that the user wants to find an event of type $\texttt{B}$ with attribute $\texttt{Y=10}$, followed by an event of type $\texttt{C}$ with attribute $\texttt{Z<5}$. Further, the operator $\texttt{;}_{\texttt{[timespan:5]}}$ restricts that the timespan between $B$ and $C$ has to be greater than $5$ seconds, and $\texttt{(...)}_{\texttt{[within:15]}}$ to be lower than $15$ seconds. This query is equivalent to $\twin{(B \kFILTER B[Y\!=\!10]) \timesq{\geq 5} (C \kFILTER C[Z\!<\!5])}{\leq 15}$.

To the best of our knowledge, we are not aware of CER systems that use time iteration like $\timeks{I}$ or $\timesks{I}$. Probably, the reason behind this limitation is that CER systems struggle with iteration (i.e., without time) and, usually, they provide a limited use of this operator~\cite{DemersGHRW06,agrawal2008efficient}. One can see time iteration as the natural generalization of time sequencing, and for this reason, it is natural to add them as a primary operator of timed CEL. 

\smallskip
\noindent \emph{Combining all time operators.} We are not aware of a CER system or CER query language with the same expressibility as timed CEL. Still, there are query languages that combine time windows and time sequencing. One example is Padres~\cite{li2005composite}, which we previously discussed above. Another more recent proposal is \emph{subsequence-queries with wildcards and gap-size constraints} (swg-queries for short) introduced in~\cite{Kleest-Meissner22}.
A swg-query $Q$ is a triple $(s, w, c)$ where $s = P_1 \ldots P_k$ is a sequence of predicates (called \emph{types} in~\cite{Kleest-Meissner22}), $w \in \bbQplus \cup \{\infty\}$ is a global window size, and $c = (I_1, \ldots, I_{k-1})$ is a tuple of local gap-size constraints where each $I_i$ is an interval over $\bbQplus$ for every $i \in [k-1]$.
A time stream $\Stream = (e_1, t_1) (e_2, t_2) \ldots (e_n, t_n)$ satisfies $Q$ if we can find a subsequence  $(e_1', t_1') \ldots (e_k', t_k')$ of $\Stream$ such that $e_i' \models P_i$, the total time is less than $w$ (i.e., $t_k - t_1 \leq w$), and each consecutive pair of timed events $(e_i', t_i')$ and $(e_{i+1}', t_{i+1}')$ satisfies the local gap-size constraint $I_i$ (i.e., $t_{i+1}' - t_{i}' \in I_i$).
The matches of $Q$ over $\Stream$ are all subsequences $(e_1', t_1') \ldots (e_k', t_k')$ that witness that $\Stream$ satisfies $Q$.

We must note that our definition of swg-queries slightly differs from \cite{Kleest-Meissner22} to fit our purpose better. First, \cite{Kleest-Meissner22} uses window size and gap-sizes in $\bbN \cup\{\infty\}$, since they model time discretely. Instead, we present above the natural extension of swg-queries from $\bbN$ to $\bbQplus$. Second, \cite{Kleest-Meissner22} considers variables (called wildcards) in the sequence $P_1 \ldots P_k$ where each variable repetition must match the same type of event. This feature corresponds to some form of a join operation (called correlation in CER), a feature that we do not include in this work for the sake of simplification (see Section~\ref{sec:conclusions}). 

One can easily see that any swg query $Q = (s, w, c)$ like above can be represented in timed CEL as follows. Let $\tset = \{R_1, \ldots, R_m\}$ be the set of all event types, and define the CEL formula $\varphi_{\tset} :=  R_1 \kOR R_2 \kOR \ldots \kOR R_m$ that its true for every event. 
Then, one can easily check that the following timed CEL formula $\varphi_Q$ is equivalent to $Q$:
\begin{equation} \label{eq:swg-form}
\varphi_Q \ :=  \ \twin{(\varphi_{\tset} \kAS X_1) \kFILTER X_1[P_1]  \ \timesq{I_1} \cdots \timesq{I_{k-1}} (\varphi_{\tset} \kAS X_k) \kFILTER X_k[P_k]}{\leq w} \tag{*}
\end{equation}
Therefore, one can consider swg-queries as a (strict) subfragment of timed CEL. For this reason, in the following we call a timed CEL formula $\varphi$ a \emph{swg query} if it has the form of (\ref{eq:swg-form}).

	\section{Timed complex event automata}

To understand the expressiveness of timed CEL and design efficient algorithms, we introduce here its automata counterpart, which we call timed Complex Event Automata (timed CEA). This model is the natural combination of complex event automata~\cite{GrezRUV21,BucchiGQRV22} (CEA) and timed automata, namely, we extend CEA with clocks. 

In the following, we introduce the model of timed CEA and study its expressive power with timed CEL. Interestingly, we show that both models are equally expressive, providing evidence of the robustness of timed CEL as a logical language for CER with time constraints.
We start by first recalling the notion of CEA to introduce later timed CEA.

\paragraph{Complex Event Automata}
A \emph{Complex Event Automata} (CEA) is a tuple $\cA = (Q, \pset, \xset, \Delta, q_0, F)$ where $Q$ is a finite set of states, $\pset$ is the set of predicates, $\xset$ is a finite set of variables, $\Delta \subseteq Q \times \pset \times 2^\xset \times Q$ is a finite relation (called the transition relation), $q_0 \in Q$ is the initial state, and $F$ is the set of final states.
A \emph{run $\rho$ of $\cA$ over the stream $\Stream = e_1 e_2 \ldots e_{n}$ from position $i$ to $j$} is a sequence:
\[
\rho \ := \  p_i \xrightarrow{P_{i}, L_{i}} p_{i+1} \xrightarrow{P_{i+1}, L_{i+1}} p_{i+2} \xrightarrow{P_{i+2}, L_{i+2}} \dots \xrightarrow{P_{j}, L_{j}} p_{j+1}
\]
where $p_i = q_0$, $(p_k, P_{k}, L_{k}, p_{k+1}) \in \Delta$, and $e_k \models P_k$ for all $k \in [i..j]$. We say that the run is \emph{accepting} if $p_{j+1} \in F$. A run $\rho$ from positions $i$ to $j$ like above  defines the complex event $C_\rho = (i, j, \mu_\rho)$ such that $\mu_\rho(X) = \{k \in [i..j] \mid X \in L_k\}$ for every $X \in \xset$. Note that the starting and ending positions $i,j$ of the run define the interval of the complex event, and the labels $L_k \subseteq \xset$ define the mapping $\mu_\rho$ of~$C_\rho$. We define the set of all complex events of $\cA$ over~$\Stream$~as:
\[
\sem{\cA}(\Stream) = \{C_\rho \mid \rho \text{ is an accepting run of $\cA$ over $\Stream$}\}
\]

\paragraph{Timed CEA}
The model of timed CEA is the natural extension of CEA with \emph{clocks}, similarly as timed automata is the extension of finite automata~\cite{AlurD94}. First we introduce clock conditions to then introduce the model. 

Let $\zset$ be a finite set of \emph{clocks}. A \emph{clock condition} over $\zset$ is an expression following the grammar: 
\[
\gamma \ \  := \ \TRUE  \ \mid \ z = c \ \mid \ z < c \ \mid \ z \leq c \  \mid  \ z \geq c \ \mid \ z > c \ \mid \ \gamma \land \gamma \ \mid \ \gamma \lor \gamma
\] 
where $z \in \zset$ and $c \in \bbQplus$. We define the size $|\gamma|$ of a clock condition $\gamma$ as the number of operations in $\gamma$.
We denote by $\mathcal{C}_\zset$ the set of \emph{all clock conditions} over $\zset$. A \emph{clock valuation} $\nu : \zset \rightarrow \mathbb{R}_{\ge 0}$ is a partial mapping that assigns to a clock $z \in \zset$ a timestamp $\nu(z)$. If $z \notin \dom{\nu}$, it represents that $z$ \emph{has not been initialized}.
Given a clock condition $\gamma$ and a clock valuation $\nu$, we say that $\nu$ \emph{satisfies}~$\gamma$, denoted by $\nu \models \gamma$, if all variables in $\gamma$ are in $\dom{\nu}$ and the expression resulting by replacing every clock $z$ by $\nu(z)$ in $\gamma$ evaluates to true.  For example, the clock valuation $\nu := \{x \mapsto 5, y \mapsto \frac{1}{2}\}$ satisfies  $\gamma_1 := x \geq 4 \vee y < \frac{1}{2}$ but does not satisfy $\gamma_2 := z \leq \frac{5}{3} \wedge y \geq 1$ (i.e., $z \notin \dom{\nu}$).

The following operations on clock valuations will simplify the presentation of timed CEA.
Let $\nu$ be a clock valuation.
For $t \in \bbQplus$, we define the \emph{update of $\nu$ by $t$} as the clock valuation $\nu +t$ such that, for every $z \in \zset$,   $[\nu+t](z) = \nu(z)+t$ whenever $z \in \dom{\nu}$, and undefined, otherwise.
Furthermore, for a set $Z \subseteq \zset$ of clocks we define the \emph{reset of $\nu$ on $Z$} as the clock valuation $\treset_Z(\nu)$ such that $\dom{\treset_Z(\nu)} = \dom{\nu} \cup Z$ where $[\treset_Z(\nu)](z) = 0$ when $z \in Z$ and $[\treset_Z(\nu)](z) = \nu(z)$ when $z \in \dom{\nu} \setminus Z$. 

A \emph{timed Complex Event Automata} (timed CEA for short) is a tuple:
\[
\cT \ = \  (Q, \pset, \xset, \zset, \Delta, q_0, F)
\] 
where $Q$, $\pset$, $\xset$, $q_0$, and $F$ are exactly as for CEA. In addition, $\zset$ is the set of clocks and $\Delta \subseteq Q \times \pset \times \mathcal{C}_\zset \times 2^\xset \times 2^\zset \times Q$ is the finite transition relation. 
The two new components $\gamma$ and $Z$ in a transition $\tau = (p, P, \gamma, L, Z, q)$ represent the condition on the clocks that the automata must satisfy before firing $\tau$ and the set of clocks that are reset after firing $\tau$, respectively. We define the size $|\cT|$ of $\cT$ as the sum of the number of states and transitions, namely, $|\cT| = |Q| + \sum_{(p, P, \gamma, L, Z, q)} |\gamma| + |P| + |L| + |Z|$.

Let $\Stream \ = \ (e_1, t_1) \ldots (e_n, t_n)$ be a timed stream.
A configuration of $\cT$ over $\Stream$ is a pair $(q, \nu)$ where $q \in Q$ is the current state of $\cT$ and $\nu$ is a clock valuation representing the current values of the clocks in $\dom{\nu}$. We define the \emph{time difference at position $i \in [n]$} as $\delta t_i = t_i - t_{i-1}$ for every $i \in [n]$  where $\delta t_1 = t_1$ when $i = 1$.
Then, a \emph{run $\rho$ of $\cT$ over $\Stream$ from position $i$ to $j$} is a sequence:
\[
\rho \ := \ (p_i, \nu_i) \xrightarrow{P_{i}, \gamma_i / L_{i}, Z_i} (p_{i+1}, \nu_{i+1}) \xrightarrow{P_{i+1}, \gamma_{i+1} / L_{i+1}, Z_{i+1}} \dots \xrightarrow{P_{j}, \gamma_j / L_{j}, Z_j} (p_{j+1}, \nu_{j+1})
\] 
where $p_i = q_0$, $\nu_i$ is the \emph{trivial} clock valuation (i.e., $\dom{\nu_i} = \emptyset$), and, for every $k \in [i..j]$, the following four conditions are satisfied: (1) $(p_k, P_k, \gamma_k, L_k, Z_k, p_{k+1}) \in \Delta$, (2) $e_k \models P_k$, (3) $\nu_k + \delta t_k \models \gamma_k$, and (4) $\nu_{k+1} = \treset_{Z_k}(\nu_k + \delta t_k)$.
Conditions (1) and (2) are verbatim from CEA, (3) checks that the clock values must satisfy the clock conditions at each position, and (4) specifies that clocks in $Z_k$ are reset after updating them with the time difference at position $k$.  
We say that $\rho$ like above is an \emph{accepting run} if $p_{j+1} \in F$. Similar to CEA, we define the complex event $C_\rho = (i,j,\mu_\rho)$ of $\rho$ where $\mu_\rho(X) = \{k \in [i..j] \mid X \in L_k\}$ for every $X \in \xset$, and the \emph{output set} of $\cT$ over $\Stream$ as $\sem{\cT}(\Stream) = \{C_\rho \mid \rho \text{ is an accepting run of $\cT$ over $\Stream$}\}$.

Note that a run of a timed CEA must start with all clocks empty (i.e., clocks are not initialized by default). Then, to use a clock $z$, the automata first has to reset $z$ in a run. This technical detail is different from timed automata~\cite{AlurD94}, and it is necessary to be aligned with the principles behind the semantics of CEL. If not, the automata could have the power to check a property with the event just before the interval $[i..j]$ (i.e., $e_{i-1}$), and thus, the complex event $C_\rho$ will not depend only on $[i..j]$.
\begin{example} \label{ex:timed-cea1}
	The following timed CEA uses a single clock $z$ and set of variables $\xset = \{X, Y\}$:
	\begin{center}
		\small
		\begin{tikzpicture}[->,>=stealth, semithick, auto, initial text= {}, initial distance= {3mm}, accepting distance= {4mm}, node distance=3cm, semithick]
			\tikzstyle{every state}=[draw=black,text=black,inner sep=0pt, minimum size=5mm]
			
			\node[initial,state] (0) {$0$};
			\node[state] (1) [right of=0]	{$1$};
			\node[state] (2) [right of=1]	{$2$};
			\node[state,accepting] (3) [right of=2]	{$3$};
			
			\draw[->] (0) edge node {$P_{> \text{40°}} / \{X\}, \{z\}$} (1);
			\draw[->] (1) edge[loop above] node {$\texttt{true} / \emptyset, \emptyset$} (1);
			\draw[->] (1) edge node {$P_{> \text{40°}}, z \!\leq\! 1 / \emptyset, \emptyset$} (2);
			\draw[->] (2) edge[loop above] node {$\texttt{true} / \emptyset, \emptyset$} (2);
			\draw[->] (2) edge node {$P_{< \text{25\%}}, z \!\leq\! 5 / \{Y\}, \emptyset$} (3);
			
			\node at ($(0) + (-0.9,0.05)$) {$\cT_1$:};
			
		\end{tikzpicture}
	\end{center}
	where $P_{> \text{40°}} := temp > \text{40°}$, $P_{< \text{25\%}} := hum < \text{25\%}$, and $\texttt{true}$ denotes the predicate $\eset$. One can check that $\cT_1$ defines the same query as formula $\varphi_1'$ from Example~\ref{ex:timed-cel1}.
\end{example}

\begin{example}\label{ex:cea-overlap}
To illustrate the advantages of more clocks, suppose that to reinforce formula $\varphi_1$ (see Example~\ref{ex:cel-query}), we want to detect two pairs $(X,Y)$ (i.e., high temperature / low humidity) where each pair must be between less than five seconds. Further, to confirm the high temperatures and low humidities, we expect to see the temperatures first and then the humidities. For defining this query with a timed CEA, we need states to check the pattern $T;T;H;H$ and two clocks $z_1$ and $z_2$ to check the time differences for each pair $(X,Y)$, respectively. The timed CEA $\cT_2$ specifies this query:
	\begin{center}
		\small
		\begin{tikzpicture}[->,>=stealth, semithick, auto, initial text= {}, initial distance= {3mm}, accepting distance= {4mm}, node distance=3cm, semithick]
			\tikzstyle{every state}=[draw=black,text=black,inner sep=0pt, minimum size=5mm]
			
			\node[initial,state] (0) {$0$};
			\node[state] (1) [right of=0]	{$1$};
			\node[state] (2) [right of=1]	{$2$};
			\node[state] (3) [right of=2,node distance=3.2cm]	{$3$};
			\node[state,accepting] (4) [right of=3,node distance=3.2cm]	{$4$};
			
			\draw[->] (0) edge node {$P_{> \text{40°}} / \{X\}, \{z_1\}$} (1);
			\draw[->] (1) edge[loop above] node {$\texttt{true} / \emptyset, \emptyset$} (1);
			\draw[->] (1) edge node {$P_{> \text{40°}} / \{X\}, \{z_2\}$} (2);
			\draw[->] (2) edge[loop above] node {$\texttt{true} / \emptyset, \emptyset$} (2);
			\draw[->] (2) edge node {$P_{< \text{25\%}}, z_1 \!\leq\! 5 / \{Y\}, \emptyset$} (3);
			\draw[->] (3) edge[loop above] node {$\texttt{true} / \emptyset, \emptyset$} (3);
			\draw[->] (3) edge node {$P_{< \text{25\%}}, z_2 \!\leq\! 5 / \{Y\}, \emptyset$} (4);
			
			\node at ($(0) + (-0.89,0.05)$) {$\cT_2$:};
		\end{tikzpicture}
	\end{center}
\end{example}

\paragraph{The expressive power of timed CEL} We proceed by measuring the expressiveness of timed CEL with its automata counterpart, timed CEA. We start by showing that any timed CEL can be compiled into a timed CEA that defines the same query as expected.  
 
\begin{proposition} \label{pro:cel2cea}
	For every timed CEL formula $\varphi$ there exists a timed CEA $\cT_\varphi$ such that $\sem{\varphi}(\Stream) = \sem{\cT_\varphi}(\Stream)$ for every stream $\Stream$. 
\end{proposition}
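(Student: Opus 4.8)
The plan is to proceed by structural induction on the formula $\varphi$, constructing the timed CEA $\cT_\varphi$ from the automata obtained for the immediate subformulas. For the base case $\varphi = R$ I would take a two-state automaton with the single transition $(q_0, P_R, \TRUE, \{R\}, \emptyset, q_f)$, where $P_R$ is the predicate accepting exactly the $R$-typed events; this directly realizes $\sem{R}$. Since timed CEL extends CEL, the purely logical operators are handled by the standard CEA constructions of \cite{GrezRUV21,BucchiGQRV22}, adapted to carry clocks along: $\kOR$ is a disjoint union merging the two initial states; $\kAND$ is a product construction whose clock set is the disjoint union $\zset_1 \uplus \zset_2$, synchronizing both the position and the emitted label set (and taking $P_1 \cap P_2$, $\gamma_1 \land \gamma_2$, $Z_1 \cup Z_2$) so that a run survives iff both components produce the same complex event; $\sq$ and $\ssq$ glue the accepting states of $\cT_{\varphi_1}$ to the initial state of $\cT_{\varphi_2}$, the non-contiguous version $\sq$ additionally permitting $\TRUE$-labeled ``skip'' self-loops between the two parts; $\ks$ and $\sks$ add loop-back transitions to iterate; $\kAS X$ adds $X$ to every nonempty label set; $\kFILTER X[P]$ intersects $P$ into the predicate of every transition that labels $X$ (so that $C(X) \models P$ holds exactly when every $X$-position satisfies $P$); and $\pi_L(\varphi)$ deletes from each label set the variables outside $L$.

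The novel part is the four time operators, all of which I would realize with a single fresh clock $z \notin \zset_\varphi$ together with the uninitialized-clock convention of the model. For the window $\twin{\varphi}{I}$, I would reset $z$ on every transition leaving the initial state (so that after processing the first event $z = 0$) and add the clock condition ``$z \in I$'' to every transition entering an accepting state; tracing the run semantics, at the last position the updated value of $z$ equals $t_{\cend(C)} - t_{\cstart(C)}$, exactly the quantity the window constrains. For $\varphi_1 \timesq{I} \varphi_2$ (and its contiguous variant $\timessq{I}$) I would build the corresponding untimed concatenation, reset $z$ on every transition ending the $\cT_{\varphi_1}$-part, and impose ``$z \in I$'' on every transition beginning the $\cT_{\varphi_2}$-part, where the value of $z$ equals $t_{\cstart(C_2)} - t_{\cend(C_1)}$. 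The iteration operators $\timeks{I}$ and $\timesks{I}$ are obtained the same way from the loop-back construction, placing the reset and the check on the transitions joining consecutive iterations. In each case ``$z \in I$'' is expressible in $\mathcal{C}_\zset$, since any interval over $\bbQplus$ with rational endpoints is a conjunction of atoms $z \sim c$ with $\sim \,\in\, \{<, \leq, \geq, >, =\}$.

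The correctness argument in each step is a routine but careful verification that accepting runs of $\cT_\varphi$ from $i$ to $j$ producing $C$ are in bijection with the derivations of $C \in \sem{\varphi}(\Stream)$, invoking the induction hypothesis on the subautomata. The step I expect to be the main obstacle is reconciling the time operators with the convention that clocks start uninitialized and must be reset before use: I must place the fresh reset on precisely the first/last transition of the relevant sub-run, on \emph{all} such edges when the subautomaton has several entry or exit transitions, and check that the reset-after-update semantics $\nu_{k+1} = \treset_{Z_k}(\nu_k + \delta t_k)$ makes $z$ measure the intended difference $t_{\cstart} - t_{\cend}$ rather than being off by one time step. The $\kAND$ product also needs attention to keep the two clock sets disjoint and the two valuations independent, so that neither component's timing constraints leak into the other.
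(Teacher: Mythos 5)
Your construction is essentially the paper's: the same structural induction, the same clock-free adaptations of the CEA constructions for the logical operators, and the same placement of a fresh clock (reset on the entry transitions of the relevant sub-run, checked on its exit transitions) for each time operator. The correctness arguments you sketch match the paper's case analysis.

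There is one concrete corner case your description of $\twin{\varphi}{I}$ gets wrong, and it is exactly the obstacle you flag but do not resolve. For a complex event with $\cstart(C)=\cend(C)$, the accepting run has a single transition that simultaneously leaves the initial state and enters a final state, so under your recipe it must both reset $z$ and check $z \in I$. But the run semantics evaluates the guard \emph{before} the reset ($\nu_k + \delta t_k \models \gamma_k$, then $\nu_{k+1} = \treset_{Z_k}(\nu_k + \delta t_k)$), and $z$ is uninitialized at that point, so the guard fails and every single-event output is rejected, whereas the semantics requires it to be accepted iff $0 \in I$. The paper's construction handles this by adding a separate family of transitions from the initial state directly to the fresh final state whose guard is the \emph{statically evaluated} condition $(0 \in I)$ (i.e., $\TRUE$ or $\FALSE$ decided at construction time), with the ordinary reset-then-check-later transitions reserved for runs of length at least two. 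The analogous static $(0 \in I)$ guard also appears in the paper's constructions for $\timeks{I}$ and $\timesks{I}$, where a single-event iteration must close one gap and open the next on the same transition. Everything else in your proposal goes through as written.
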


\begin{proof}[Proof sketch.]
The proof goes by induction over the formula, showing how to compile each operator into a timed CEA. The standard CEL operators follow a similar construction like in~\cite{GrezRUV21}, but this time one also needs to take care of clocks, keeping disjoint sets of clocks for each subautomaton. 
For the new timed operators, one can use a new clock for each operator, for checking the condition of the time window, time sequencing, or time iteration. We present the proof in Appendix~\ref{app:timed-cea}.
\end{proof}

For the other direction, it is less clear whether any timed CEA can be defined by some timed CEL formula. Indeed, timed CEA can check overlapping timed restrictions (like in Example~\ref{ex:cea-overlap}) while timed CEL can only check nested time constraints with the time window operator (i.e., $\twin{ \cdot }{I}$) or immediate time constraints with the timed sequencing (i.e.,  $\timesq{I}$ or $\timessq{I}$) or timed iteration (i.e., $\timeks{I}$ or $\timesks{I}$). Interestingly, timed CEL formulas can simulate overlapping time constraints by using variables ($\cas$), projection ($\pi_L$), and conjunction ($\cand$). For instance, we can define the query of Example~\ref{ex:cea-overlap} with the following timed CEL query:
\[
\begin{array}{l}
 \varphi_3 \, := \, \pi_{X,Y}\Big( \big[ \twin{T \kAS X; T \kAS X; H \kAS Y}{\leq 5} ; H \kAS Y \ \ \  \kAND \\
	\hspace{20mm} T \kAS X; \twin{T \kAS X; H \kAS Y ; H \kAS Y}{\leq 5} \ \big]
	\kFILTER (X[temp > \text{40°}] \wedge Y[hum < \text{25\%}]) \Big) 
\end{array}
\]
As the following result shows, we can follow a similar strategy for every timed CEA.
\begin{proposition} \label{pro:cea2cel}
	For every timed CEA $\cT$ there exists a timed CEL formula $\varphi_\cT$ such that $\sem{\cT}(\Stream) = \sem{\varphi_\cT}(\Stream)$ for every stream $\Stream$. Furthermore, $\varphi_\cT$ does not use any operator in $\{;, +, \timesq{I}, \timessq{I}, \timeks{I}, \timesks{I}\}$. 
\end{proposition}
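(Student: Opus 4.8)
The plan is to have $\varphi_\cT$ \emph{reconstruct the accepting runs} of $\cT$ rather than reproduce its complex events directly. Concretely, I would work over an enlarged variable set $\xset' \supseteq \xset$ containing one fresh auxiliary variable per transition of $\cT$ (and, after the normalization below, per disjunct), used to record at each position which transition is being fired; only at the very end are these markers discarded with a projection $\pi_{\xset}$. The point of recording the run is that $\kAND$, whose semantics is the intersection of complex-event sets, will then force two subformulas to agree not merely on the output $(\cstart,\cend,\mu)$ but on the \emph{entire transition sequence}. This is exactly the device that lets us split the global clock condition into independent, per-clock pieces and glue them back with $\kAND$, in the spirit of the formula $\varphi_3$ displayed above.

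First I would normalize $\cT$: rewrite each clock condition into disjunctive normal form and create one transition per disjunct (each marked by its own auxiliary variable), so that every transition may be assumed to carry a \emph{conjunction} of atomic comparisons $z\sim c$. Recording the disjunct in the run is essential, since otherwise two clocks could witness different disjuncts of the same condition and their conjunction would accept a run that no single disjunct justifies. With conjunctive conditions in hand, for each clock $z\in\zset$ I define a single-clock automaton $\cT_z$ obtained from $\cT$ by keeping all predicates, variable labels and auxiliary markers, but deleting every reset of a clock other than $z$ and replacing every atomic comparison not mentioning $z$ by $\TRUE$. Because the markers pin down the run, one checks that $\sem{\cT}(\Stream)=\pi_{\xset}\big(\bigcap_{z}\sem{\cT_z}(\Stream)\big)$: a complex event survives the intersection precisely when a \emph{single} run of $\cT$ simultaneously satisfies the clock-$z$ constraints for every $z$. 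It then remains to express each $\cT_z$ in the allowed fragment and set $\varphi_\cT := \pi_{\xset}\big(\bigwedge_{z}\varphi_{\cT_z}\big)$.

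The core of the argument is the single-clock case. Here $z$ partitions each run into \emph{epochs} delimited by its resets, and a comparison $z\sim c$ fired at position $k$ whose governing (most recent) reset occurred at position $m$ holds iff $t_k-t_m\sim c$, i.e. iff the sub-complex-event spanning $[m..k]$ lies in the time window $\twin{\cdot}{\sim c}$. I would therefore run a Kleene/state-elimination construction on the finite control of $\cT_z$, assembling the formula from single-transition gadgets — each of the form $\varphi_{\tset}\kAS\cdots\kFILTER\cdots[P]$, which reads one event, assigns its labels and checks $P$ — combined with contiguous sequencing $\ssq$ and contiguous iteration $\sks$. These are the only structural operators needed, since a CEA consumes a \emph{contiguous} block of events, matching exactly the untimed CEA-to-CEL construction of~\cite{GrezRUV21}. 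A comparison is enforced by wrapping the sub-expression running from its governing reset to the comparison inside $\twin{\cdot}{I}$; when such a segment lies inside a loop, the window is pushed inside the iteration as $(\twin{\psi}{I})\sks$, imposing the constraint on every pass; and several comparisons sharing one reset yield overlapping windows with a common left endpoint, reconciled by a $\kAND$ of synchronized threads exactly as in $\varphi_3$.

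The hardest part is this last point: controlling the interaction between the unbounded repetition coming from loops and clock epochs whose windows overlap and share endpoints. A naive state elimination does not respect clock epochs, because a reset appearing in one sub-expression and its comparison in another produce a window that straddles the decomposition; the auxiliary-variable synchronization together with the conjunction of per-window threads is precisely what tames this, and getting its bookkeeping right — bounding the number of distinct windows per epoch by $|\cT_z|$, and handling the prefix before the first reset (where $z$ is uninitialized and no comparison may fire, per the semantics $\nu\models\gamma$) — is where the technical effort concentrates. Once the construction is fixed, correctness is a structural induction matching runs of $\cT$ to derivations of $\varphi_\cT$, and a syntactic inspection confirms that only $R$, $\kAS$, $\kFILTER$, $\kOR$, $\kAND$, $\ssq$, $\sks$, $\pi_L$ and $\twin{\cdot}{I}$ occur, so none of the forbidden operators $\sq$, $\ks$, $\timesq{I}$, $\timessq{I}$, $\timeks{I}$, $\timesks{I}$ appears.
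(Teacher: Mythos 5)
Your overall architecture coincides with the paper's: normalize clock conditions to DNF with one transition per disjunct, tag every transition with a fresh variable so that $\kAND$ synchronizes subformulas on entire runs rather than on outputs, decompose into one single-clock automaton per clock, conjoin and project, and then express each single-clock automaton by a Kleene-style construction using only $\ssq$, $\sks$, $\kOR$, $\kAND$, $\kAS$, $\kFILTER$, $\pi_L$ and $\twin{\cdot}{I}$. Up to that point the proposal is sound and matches the paper's proof.

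The gap sits exactly where you say ``the technical effort concentrates,'' and it is a missing idea rather than routine bookkeeping. Two obstacles must be removed before the Kleene construction goes through. First, a loop can fire unboundedly many comparisons against a single reset, so a priori one epoch needs unboundedly many windows anchored at the same left endpoint; the paper resolves this with a dedicated automaton transformation (the $k$-check-bounded reduction) exploiting that, for a fixed reset, later checks against a larger constant are implied by an earlier check against a smaller one, so only a bounded number of non-redundant checks --- tracked in the state by remembering the last checked and last skipped bound --- need to be materialized, the rest being replaced by $\TRUE$. Your phrase ``bounding the number of distinct windows per epoch by $|\cT_z|$'' asserts the conclusion without this pruning argument. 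Second, a transition that simultaneously checks the clock and resets it closes one window and opens the next at the same event, producing an unbounded chain of windows that pairwise share an endpoint; a single formula thread cannot nest these, since within one thread $\twin{\cdot}{I}$ only realizes windows that are nested or disjoint. The paper's device is to split the single-clock automaton into two \emph{reset-distinct} automata --- one enforcing the windows opened at even-numbered resets along the run, the other those at odd-numbered resets --- and conjoin them; within each thread the enforced windows no longer collide, and the per-transition markers force both threads onto the same run. Your appeal to ``a $\kAND$ of synchronized threads exactly as in $\varphi_3$'' is the right instinct, but $\varphi_3$ handles a fixed, bounded overlap pattern; to tame the unbounded chain arising from iteration you need the parity observation that only adjacent windows overlap, hence two threads suffice, and that is the key lemma your proposal does not supply.
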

\begin{proof}[Proof sketch.]
	The proof starts by reducing the construction from a timed CEA $\cT$ with multiple clocks to a timed CEA with a single clock $\cT_1$. For each clock $z$, one can construct a timed CEA $\cT_z$ that simulates $\cT$, but it only verifies the clock $z$ and marks with temporal variables all the transitions that are used during a run. Then, for all the clocks $z_1, \ldots, z_k$ used by $\cT$, we can take the conjunction of $\cT_{z_1}, \ldots, \cT_{z_k}$ and then project out the temporal variables. Therefore, the problem reduces to encode with timed CEL the behavior of a timed CEA $\cT_1$ with a single clock. 
	
	For encoding a timed CEA $\cT_1$ with a single clock $z$, we use two automata constructions to reduce the complexity of how the clock $z$ is used. For the first construction, we show that $\cT_1$ can be reduced to a timed automaton $\cT_1'$ such that $z$ is checked at most $k$-times after it is reset, where $k$ is a fixed constant. $\cT_1'$ is called a \emph{$k$-check bounded timed CEA}. This reduction helps to limit the number of times a clock is used after a reset, similar to a timed CEL formula where each clock is used exactly once (i.e., after a reset). For the second construction, we show that the $k$-check bounded automaton $\cT_1'$ can be specified as the conjunction of two $k$-check bounded automaton $\cT_1''$ and $\cT_1'''$ such that each automaton can either reset a clock or check a clock during a transition, but not both at the same time. We call this condition a \emph{simplified timed CEA}. 
	
	Finally, we prove that any simplified and $k$-check bounded timed CEA with a single clock can be defined by a timed CEL formula. This construction follows a similar strategy to the Kleene Theorem (for regular expressions), where the formula is constructed by induction on the set of states and the number of checks that are used during a run. More details of the proof in Appendix~\ref{app:timed-cea}.
\end{proof}

By combining Proposition~\ref{pro:cel2cea}~and~\ref{pro:cea2cel}, we can show that both formalism are equally expressive. 
\begin{theorem}\label{theo:main}
	Timed CEL and timed CEA are equally expressive, namely, one can be defined by the other and vice versa. 
\end{theorem}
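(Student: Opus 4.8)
The plan is to read the statement as an equality between two families of queries and then discharge each containment with one of the two propositions already established. Formally, say that a query $Q$ (a map sending each timed stream $\Stream$ to a set of complex events) is \emph{expressible in timed CEL} if there is a timed CEL formula $\varphi$ with $Q(\Stream) = \sem{\varphi}(\Stream)$ for every $\Stream$, and likewise \emph{expressible in timed CEA} if there is a timed CEA $\cT$ with $Q(\Stream) = \sem{\cT}(\Stream)$ for every $\Stream$. The claim ``equally expressive'' then amounts to saying that these two families coincide, i.e.\ to two set inclusions, and I would prove each one separately.

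First I would establish that every query expressible in timed CEL is expressible in timed CEA. This is precisely Proposition~\ref{pro:cel2cea}: given a formula $\varphi$, the proposition supplies a timed CEA $\cT_\varphi$ with $\sem{\varphi}(\Stream) = \sem{\cT_\varphi}(\Stream)$ for all $\Stream$, so the query defined by $\varphi$ is also defined by $\cT_\varphi$, witnessing the inclusion of the timed-CEL family into the timed-CEA family.

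For the converse inclusion I would invoke Proposition~\ref{pro:cea2cel}: given a timed CEA $\cT$, the proposition yields a timed CEL formula $\varphi_\cT$ with $\sem{\cT}(\Stream) = \sem{\varphi_\cT}(\Stream)$ for all $\Stream$, placing the query of $\cT$ inside the timed-CEL family. Combining the two inclusions gives equality of the two query families, which is exactly the asserted equal expressiveness.

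Finally I would note where the difficulty actually lies, since the theorem itself introduces no new obstacle: it is a one-line corollary obtained by pairing the two propositions. The easy direction, Proposition~\ref{pro:cel2cea}, is a routine structural compilation by induction on the formula, allocating fresh clocks for the time operators and keeping the clock sets of subautomata disjoint. All the genuine work is concentrated in the automata-to-logic direction, Proposition~\ref{pro:cea2cel}, whose argument first reduces a multi-clock timed CEA to a single-clock one via conjunction over per-clock simulations, then normalizes to $k$-check bounded and simplified forms, and closes with a Kleene-style induction on states and checks; at the level of Theorem~\ref{theo:main} these constructions are simply cited as black boxes.
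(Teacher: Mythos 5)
Your proposal is correct and matches the paper exactly: the theorem is stated as an immediate consequence of combining Proposition~\ref{pro:cel2cea} (logic to automata) with Proposition~\ref{pro:cea2cel} (automata to logic), with all the substantive work residing in those two propositions. Your additional remarks about where the difficulty is concentrated accurately reflect the paper's proof sketches.
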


It is important to note that an analog result was proven in~\cite{AsarinCM2002}, showing that timed automata are equally expressive than Timed Regular Expressions (TREs). TREs are an extension of regular expressions with time windows, renaming, conjunction, and the so-called absorbing concatenation and absorbing iteration. The paper aimed to extend the expressive power of regular expressions to capture timed automata. Although Proposition~\ref{pro:cea2cel} is inspired by~\cite{AsarinCM2002}, arguably Theorem~\ref{theo:main} is novel, both in its statement and proof, for the following reasons. First, for reaching Proposition~\ref{pro:cea2cel} we only have to extend CEL with time windows, an operator already presented in practical proposals in CER. In particular, CEL already included all other operators needed for the characterization. In contrast, TREs needed several new operators for regular expressions to capture the expressive power of time automata. Second, although one can make a parallel between the operators between timed CEL and TREs, they are different in syntax and semantics. Indeed, timed CEL includes outputs (i.e., complex events) and operators like projection and filtering, which are features that are not present in TREs. For this reason, the proof of Proposition~\ref{pro:cea2cel} is not a direct consequence of the results in~\cite{AsarinCM2002}. Finally, the proof of Proposition~\ref{pro:cea2cel} differs from the one in~\cite{AsarinCM2002}. The proof strategy of reducing the problem to single-clock automata is similar; however, the proof differs from there. Indeed, we cannot follow the same approach as in~\cite{AsarinCM2002} given the semantics of timed CEL, namely, by using the so-called \emph{absorbing} concatenation and iteration and constructing quasilinear equations. We cannot simulate these operators in timed CEL, and therefore, we have to construct the formula for a single-clock timed CEA directly, as we explained in the proof sketch of Proposition~\ref{pro:cea2cel}.  %
\label{sec:timed-cea}
	
	\section{On the determinization of timed complex event automata}

\newcommand{\varphinodet}{\varphi_{\text{non-det}}}

Now that we understand the expressive power of timed CEL concerning timed CEA, we move next to the efficient evaluation of timed CEL or, more concretely, of timed CEA. A critical aspect for efficient query evaluation is whether timed CEA are determinizable or not. We say that a timed CEA $\cT = (Q, \pset, \xset, \zset, \Delta, q_0, F)$ is \emph{deterministic} if for every pair of transitions $\tau_1 = (p, P_1, \gamma_1, L_1, Z_1, q_1)$ and $\tau_2 = (p, P_2, \gamma_2, L_2, Z_2, q_2)$ in $\Delta$, if $P_1 \cap P_2 \neq \emptyset$ and $\gamma_1 \wedge \gamma_2$ is satisfiable (i.e., there exists $\nu$ such that $\nu \models \gamma_1$ and $\nu \models \gamma_2$), then $L_1 \neq L_2$.
In other words, for any configuration $(p, \nu)$ of $\cT$, an event $e$ can trigger transitions $\tau_1$ and $\tau_2$ simultaneously only if they produce different outputs.
\begin{example}\label{ex:det-cea}
$\cT_2$ in Example~\ref{ex:cea-overlap} is a deterministic timed CEA, but $\cT_1$ in Example~\ref{ex:timed-cea1} is not since the predicates of the outgoing transitions of state \circled{1} intersect and produce the same output.
\end{example}

The importance of determinism for a timed CEA $\cT$ is that for every timed stream $\Stream$ and complex event $C \in \sem{\cT}(\Stream)$, there exists a unique run $\rho$ of $\cT$ over $\Stream$ that produces $C$, namely, $C = C_\rho$. This implies that there exists a correspondence between runs and outputs, which is very useful for designing efficient streaming (and non-streaming) enumeration algorithms. Indeed, most of the algorithms in the query evaluation literature assumed that the automata model that represents a query is deterministic~(e.g., \cite{Bagan06,FlorenzanoRUVV18,SchmidS21}). Therefore, being deterministic is desirable for the efficient evaluation of timed CEA and the enumeration of their outputs. Nevertheless, it is important to say that the size of the determinized automaton is exponential. However, if we think about data complexity, the blow-up of the automata is in the size of the query (which is small) and not in the size of the stream (which is big).

Unfortunately, not all timed CEA admit an equivalent deterministic one. Specifically, we say that a timed CEA $\cT$ is \emph{determinizable} if there exists a deterministic timed CEA $\cT'$ equivalent to $\cT$ (i.e., $\sem{\cT}(\Stream) = \sem{\cT'}(\Stream)$). This issue of non-determinizable timed CEA already holds for timed automata, where it is well-known that they are not always determinizable~\cite{AlurD94,AlurFH99,Baier2009}. Moreover, determining if a timed automaton is determinizable is an undecidable problem~\cite{Finkel2006}, and that property is easy to extend to timed CEA. As an example, for the following timed CEL formula, there exists no deterministic timed CEA that defines it:
\[
\varphinodet := \pi_{\emptyset}(A; A \timesq{=1} A).
\]
Note that the above formula is analog to the non-determinizable timed automata in~\cite[Figure 12]{AlurD94}.
Intuitively, $\varphinodet$ looks for complex events $(i, j, \mu_\emptyset)$ of three $A$-events $e_i$, $e_k$, and $e_j$ with $k \in [i..j]$ such that the time difference between $e_k$ and $e_j$ is exactly $1$. 
A deterministic timed CEA must keep track of the timestamp of all $A$-events after $e_i$ to verify that such $e_k$ exists when the last event $e_j$ arrives. Thus, we will need an unbounded number of clocks to define~$\varphinodet$.

Although timed CEA are not always determinizable, we strive to find a relevant class of timed CEA where each automaton is determinizable. Formally, we say that a class $\mathcal{C}$ of timed CEA is \emph{closed under determinization} if, and only if, every $\cT \in \mathcal{C}$ is determinizable. In the following, we present and study a class of timed CEA that is closed under determinization. Furthermore, we show that this class already contained an interesting class of timed CEL formulas that admit a representation as a deterministic timed CEA. 

\paragraph{Synchronous resets timed CEA} A timed CEA $\cT = (Q, \pset, \xset, \zset, \Delta, q_0, F)$ has \emph{synchronous resets} if for every 
$\Stream \ = \ (e_1, t_1) \ldots (e_n, t_n)$ and for every two runs of $\cT$ over $\Stream$ from position $i$ to $j$:
\[
(p_i, \nu_i) \xrightarrow{P_{i}, \gamma_i / L_{i}, Z_i} \dots \xrightarrow{P_{j}, \gamma_j / L_{j}, Z_j} (p_{j+1}, \nu_{j+1}) \ \ \ \ \ \text{and}  \ \ \ \ \ 
(p_i', \nu_i') \xrightarrow{P_{i}', \gamma_i' / L_{i}, Z_i'} \dots \xrightarrow{P_{j}', \gamma_j' / L_{j}, Z_j'} (p_{j+1}', \nu_{j+1}')
\]
it holds that $Z_{k} = Z_{k}'$ for every $k \in [i..j]$. Note that the two runs must have the same output $L_i, \ldots, L_j$, and they are not necessarily accepting runs; thus, the condition must hold for every run (i.e., accepting or not) or extension of them. We define the class of \emph{synchronous resets timed CEA} as all timed CEA that have synchronous resets. 
\begin{example} \label{ex:synch-cea}
	One can check that, although $\cT_1$ is not deterministic (Example~\ref{ex:det-cea}), it has synchronous resets since the clock is only reset at the beginning of the run. 
\end{example}

One can easily check that every deterministic timed CEA has synchronous resets but not necessarily the other way around. Interestingly, we can show that the class of synchronous resets timed CEA is closed under determinization and, thus, $\cT_1$ is determinizable.

\begin{theorem}\label{theo:determinization}
	For every synchronous resets timed CEA $\cT = (Q, \pset, \xset, \zset, \Delta, q_0, F)$, there exists a deterministic timed CEA $\cT'$ such that $\sem{\cT}(\Stream) = \sem{\cT'}(\Stream)$.  Furthermore, $|\cT'| \in \mathcal{O}(2^{|Q| + 2|\Delta|}\cdot|\cT|)$.
\end{theorem}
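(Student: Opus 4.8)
The plan is to adapt the classical subset (powerset) construction to timed CEA, using the synchronous-resets hypothesis precisely to defeat the single obstruction that makes general timed automata non-determinizable: the need to remember several incomparable clock valuations simultaneously. I would build $\cT'$ over the \emph{same} clock set $\zset$, with states that are subsets $S \subseteq Q$, refined by a record of which clocks are currently initialized (concretely, a subset of $\Delta$ marking the transitions whose resets have already fired along the run). The guiding invariant is: along a run of $\cT'$ from a fixed start position $i$ producing output labels $L_i \ldots L_k$ and reaching configuration $(S, \nu)$, the set $S$ is exactly the set of states $q$ reachable by some run of $\cT$ from $i$ with the same output labels, and \emph{all} such $\cT$-runs carry the \emph{same} valuation $\nu$, which $\cT'$ maintains in its own clocks. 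It is synchronous resets that guarantees this shared valuation, so one configuration of $\cT'$ faithfully represents the whole bundle of $\cT$-runs; without it the bundle would split into incomparable valuations, which is exactly what forces the unboundedly many clocks needed for $\varphinodet$.

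For the transitions I would use a minterm construction restricted to each output label, which is all the definition of determinism demands: it only forbids two transitions from a state with intersecting predicates and satisfiable joint clock condition to share the same $L$. Fix a macro-state $S$ and an output $L \subseteq \xset$, and let $\Delta_S^L = \{(p, P, \gamma, L, Z, q) \in \Delta \mid p \in S\}$. For each subset $\Theta \subseteq \Delta_S^L$ whose ``$\Theta$ fires and $\Delta_S^L \setminus \Theta$ does not'' region is non-empty, create a single transition of $\cT'$: its predicate is $\bigcap_{\tau \in \Theta} P_\tau \cap \bigcap_{\tau \notin \Theta} (\eset \setminus P_\tau)$, well-defined since $\pset$ is closed under intersection and complement; its clock condition conjoins the $\gamma_\tau$ for $\tau \in \Theta$ with the negations of the $\gamma_\tau$ for $\tau \notin \Theta$, where negations of atomic conditions are taken through the dual comparisons and the recorded initialized-clock component tells us which clocks are defined so the negations get their correct meaning on uninitialized clocks; its target is the set of targets of $\Theta$; and its reset set is the common reset $Z_L$ of all transitions in $\Theta$.

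The heart of the argument, and the step I expect to be the main obstacle, is the well-definedness of this common reset $Z_L$, i.e.\ that every firing transition in a realizable $\Theta$ resets the same clocks. This is exactly where synchronous resets is used. If $\Theta$ is realizable, there is a reachable configuration together with an event and a time difference witnessing that each $\tau \in \Theta$ fires; composing with a prefix that reaches $S$ with valuation $\nu$ produces, for any two $\tau, \tau' \in \Theta$, two runs of $\cT$ over \emph{one common} timed stream, from the same start, with identical output labels $L_i \ldots L_k L$. By the definition of synchronous resets their reset sets coincide at that last step, so $Z_\tau = Z_{\tau'}$ and $Z_L$ is unambiguous. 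Applying the same observation along the whole run shows the represented bundle of $\cT$-runs never diverges in its clock values, which is precisely what keeps a single tracked valuation sufficient and $\cT'$ a genuine timed CEA.

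It then remains to check correctness and the size bound, both routine once the construction is fixed. Determinism holds because, for each macro-state and each fixed $L$, the chosen minterm regions are pairwise disjoint, so no two equally labeled transitions overlap. Equivalence $\sem{\cT}(\Stream) = \sem{\cT'}(\Stream)$ follows by induction on run length from the invariant: sending a $\cT$-run to the unique $\cT'$-run with the same start and output labels is a bijection onto accepting runs (a $\cT'$-run accepts exactly when its macro-state meets $F$) and preserves the induced complex event. For the size, macro-states range over subsets of $Q$ augmented with the initialized-clock record, at most $2^{|Q| + |\Delta|}$ of them; each emits at most $2^{|\Delta|}$ transitions, one per realizable firing pattern, and each carries a predicate and a clock condition of size $\mathcal{O}(|\cT|)$. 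Multiplying gives $|\cT'| \in \mathcal{O}(2^{|Q| + 2|\Delta|} \cdot |\cT|)$, as claimed.
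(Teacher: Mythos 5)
Your proposal is correct and follows essentially the same route as the paper's proof: a subset construction over $2^Q$ whose transitions are labelled by minterms (mutually exclusive conjunctions) of the transition predicates and of the clock conditions, with synchronous resets invoked at exactly the same point to make the common reset set of all simultaneously firing, equally-labelled transitions well defined, and with correctness shown by the same induction on run length. The only deviations are minor refinements --- you compute minterms per output label rather than globally, and you add an initialized-clock record to the macro-states so that negated clock conditions are meaningful on uninitialized clocks (a detail the paper's proof passes over) --- and your size accounting still lands on the stated bound $\mathcal{O}(2^{|Q|+2|\Delta|}\cdot|\cT|)$.
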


\begin{proof}[Proof sketch]
	The construction of $\cT'$ follows the standard subset construction where $2^Q$ are the states of $\cT'$. One has to additionally solve the use of predicates and time conditions in the transitions. Given that both are closed under conjunction and negation, we can use \emph{types} to represent subsets of them. Finally, given that the resets of competing runs %
	are synchronized, we can reset without considering conflicts between runs. The full proof is in Appendix~\ref{app:determinization}. 
\end{proof}

A natural question at this point is whether one can decide if a timed CEA has synchronous resets or not. The next result shows that this property is decidable in $\textsc{PSPACE}$.
\begin{theorem}\label{theo:algo-synch-resets}
	The problem of deciding if a timed CEA $\cT$ has synchronous resets is $\textsc{PSPACE}$-complete.
\end{theorem}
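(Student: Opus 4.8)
The plan is to establish the two directions separately: membership in $\textsc{PSPACE}$ and $\textsc{PSPACE}$-hardness. For membership, rather than verifying the universally quantified synchronous-resets condition directly, I would decide its negation, namely whether there \emph{exists} a witness of a violation, and then invoke closure of $\textsc{PSPACE}$ under complement together with Savitch's theorem ($\textsc{NPSPACE} = \textsc{PSPACE}$). A violation consists of a timed stream $\Stream$, positions $i \leq j$, and two runs of $\cT$ from $i$ to $j$ that carry the same label sequence $L_i, \dots, L_j$ but whose reset sets disagree at some position $k$. For hardness, I would reduce from the emptiness problem for timed automata, which is $\textsc{PSPACE}$-complete~\cite{AlurD94}, engineering $\cT$ so that it has synchronous resets if and only if the given timed automaton is empty.

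For the upper bound, the key observation is that a violation is a reachability property of a \emph{product} of $\cT$ with itself. I would build a timed automaton whose state set is $Q \times Q$ and whose clock set consists of two disjoint copies $\zset_1 \sqcup \zset_2$ of $\zset$, one tracking the clock valuation $\nu$ of the first run and the other tracking $\nu'$ of the second; since both runs read the same stream, every product step advances both copies by the same time difference. A product step from $((p,\nu),(p',\nu'))$ chooses transitions $(p,P,\gamma,L,Z,q)$ and $(p',P',\gamma',L,Z',q')$ of $\cT$ carrying the \emph{same} label $L$ with $P \cap P' \neq \emptyset$, checks $\gamma$ against $\zset_1$ and $\gamma'$ against $\zset_2$, and resets $Z$ in $\zset_1$ and $Z'$ in $\zset_2$. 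Starting from $((q_0,\nu_\emptyset),(q_0,\nu_\emptyset))$, a violation exists exactly when this product can reach a step with $Z \neq Z'$. Reachability in a timed automaton is in $\textsc{PSPACE}$ via the region construction~\cite{AlurD94}; here the number of clocks $2|\zset|$ and the constants are polynomial in $|\cT|$, so an on-the-fly nondeterministic search of the region graph runs in polynomial space. Any accepting product path yields concrete events (since $P \cap P' \neq \emptyset$) and a strictly increasing timestamp sequence (by taking strictly positive delays), so it genuinely corresponds to a stream and two runs; conversely every violation induces such a path. Two technical points must be handled: clocks start \emph{uninitialized}, which I would model by an extra region value $\bot$ under the convention that any guard mentioning a $\bot$-clock is unsatisfiable; and I assume, as is implicit in the definition of determinism, that emptiness of a predicate in $\pset$ is decidable within $\textsc{PSPACE}$, so the test $P \cap P' \neq \emptyset$ is available.

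For the lower bound, let $\mathcal{B}$ be a timed automaton with initial state $s_0$ and a single target state $s_f$; deciding whether $s_f$ is reachable in $\mathcal{B}$ is $\textsc{PSPACE}$-complete~\cite{AlurD94}. I would build a timed CEA $\cT$ whose clocks are those of $\mathcal{B}$ together with one fresh clock $z$, and which \emph{simulates} $\mathcal{B}$: an initial transition from $q_0$ to $s_0$ resets all clocks of $\mathcal{B}$ (initializing them to $0$), and each transition of $\mathcal{B}$ is copied verbatim into $\cT$ with its guard and reset set. The crucial design choice is to give \emph{every} copied transition a distinct variable label $\{X_\tau\}$, and the initial transition its own label as well. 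Then any two runs of $\cT$ with the \emph{same} label sequence must use the same sequence of transitions, and hence agree on their resets at every position; this is precisely what prevents spurious violations arising from the nondeterminism of $\mathcal{B}$. Finally, at the target state $s_f$ I add \emph{two} outgoing transitions with guard $\TRUE$, predicate $\true$, and the \emph{same} label $L^*$, but with reset sets $\{z\}$ and $\emptyset$ respectively, leading to sink states. These are the only two distinct transitions sharing a label, so a reset mismatch among equal-label runs can occur only here, and it can occur precisely when a run reaches $s_f$. Hence $\cT$ fails to have synchronous resets if and only if $s_f$ is reachable in $\mathcal{B}$; equivalently $\cT$ has synchronous resets if and only if $\mathcal{B}$ is empty, and the construction is clearly polynomial.

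I expect the main obstacle to be controlling the interaction between the universal quantification over \emph{all} pairs of equal-label runs and the nondeterminism of $\cT$. In the hardness reduction this is exactly the danger of spurious violations in the simulation prefix, which the per-transition labelling resolves cleanly; in the membership argument the analogous difficulty is the faithful encoding of ``two runs over one stream'' into a single timed automaton, together with the correct treatment of uninitialized clocks in the region abstraction. Once these are in place, both directions follow from the classical $\textsc{PSPACE}$ results for timed automata.
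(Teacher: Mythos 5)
Your proposal is correct and follows essentially the same route as the paper: deciding the complement in $\textsc{NPSPACE}$ by a region-based search over pairs of equal-label runs, and a hardness reduction from timed-automaton emptiness that labels each simulated transition with a distinct variable and plants two equal-label, differently-resetting transitions at the final state. The only (immaterial) differences are that the paper tracks a single clock region rather than two copies of the clocks --- exploiting that both runs share a valuation until the first reset mismatch --- and uses self-loops at the final states rather than transitions to sinks.
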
 
\begin{proof}[Proof sketch]
	For the $\textsc{PSPACE}$ upper bound, we explore non-deterministically all of the different pairs of runs of the timed CEA. We can do so by grouping all of the runs that go through the same clock regions of the automaton, a technique introduced originally in~\cite{AlurD94}. That way, the number of possible next runs is polynomial. If any two runs do not have the same clock resets, then the timed CEA does not have synchronized resets.
	
	For the $\textsc{PSPACE}$-hardness, we reduce from the time automaton \emph{emptiness} problem. Given a timed automaton $\cA$, we construct a timed CEA $\cT_{\cA}$ where we mark all of the transitions with different variables so that the resulting timed CEA is deterministic. Then, we add two self-loops resetting different sets of clocks in the final states so that a run in $\cA$ reaches a final state if, and only if, two runs reset different sets of clocks in $\cT_{\cA}$. We present the full proof in Appendix~\ref{app:determinization}.
\end{proof}

The determinization of timed automata is a relevant topic in the area. Indeed, one can translate classes of determinizable timed automata to timed CEA, and vice versa, by reinterpreting the notion of output and predicates. For instance, synchronous resets are more powerful than \emph{event-clock automata} introduced in~\cite{AlurFH99} because every event-clock automata has synchronous resets. Other classes of determinizable timed automata, like timed automata with integer resets~\cite{SumanPKM08}, non-Zeno timed automata~\cite{asarin1998controller}, or history-deterministic timed automata~\cite{henzinger2022history}, are incomparable with synchronous resets. Theorem~\ref{theo:determinization} can also be recovered from general determinization procedures (see~\cite{Baier2009,lorber2017bounded}), by unfolding the timed CEA with synchronous resets into an infinite tree and determinizing it afterwards, because the infinite tree would be $|Z|$-bounded, where $|Z|$ is the number of clocks.
All in all, synchronous resets timed CEA are incomparable to previous proposals in terms of the setting and the use of clocks. Moreover, although the determinization of synchronous resets may be recovered from more general proposals, one could argue that the determinization procedure of synchronous resets is simpler, computable \emph{on-the-fly} (i.e., one can compute the next state in linear time), and one can decide the property in $\textsc{PSPACE}$ (i.e., for \cite{Baier2009,lorber2017bounded} it is undecidable). Finally, as we show next, the class of synchronous resets contains a relevant class of timed CEL formulas.

We say that a timed CEL formula $\psi$ is \emph{simple} if $\psi$ does not use projection $\pi_L(\cdot)$ or time window~$\twin{\cdot}{I}$ as a subformula. Further, a timed CEL formula $\phi$ is \emph{windowed} if it respects the following~grammar:
\[
\phi := \varphi \mid \psi \mid \phi \kAS X \mid \phi \kFILTER X[P] \mid \phi \kOR \phi \mid \phi \kAND \phi \mid\twin{\phi}{I} 
\]
where $\varphi$ is a standard CEL formula (i.e., no time operators), $\psi$ is a simple timed CEL formula, $P$ is a predicate, and $I$ is an interval. In other words, windowed CEL formulas have two levels. The first level allows either the free use of all timed CEL operators except for projection and time windows or the use of no time operator at all. The second level allows the use of sets of operators and filters plus time windows.
For instance, all swg queries (i.e., queries of the form~(\ref{eq:swg-form})) are windowed. 
Interestingly, one can compile any windowed timed CEL formula into timed CEA with synchronous resets, and, therefore, they are determinizable. 

\begin{theorem}\label{theo:windowed-sync}
	Let $\phi$ be a windowed timed CEL formula. Then, there exists a two-clock timed CEA with synchronous resets $\cT$ such that $\sem{\cT}(\Stream) = \sem{\phi}(\Stream)$.
\end{theorem}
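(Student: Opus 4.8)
The plan is to prove a stronger statement by structural induction on the windowed formula $\phi$, maintaining throughout a \emph{uniform reset discipline} that forces synchronous resets for free. Concretely, I will build $\cT_\phi$ using at most two clocks, a \emph{gap clock} $z_g$ and a \emph{window clock} $z_w$, and I will preserve the following invariant on every transition $\tau = (p, P, \gamma, L, Z, q)$ of $\cT_\phi$: (i) $z_g \in Z$ if and only if $L \neq \emptyset$; and (ii) $z_w \in Z$ if and only if $p = q_0$, where $q_0$ is a non-re-entrant initial state. Then the reset set of each transition is completely determined by its output label (for $z_g$) and by being the first transition of a run (for $z_w$), so any two runs from $i$ to $j$ with the same label sequence $L_i, \ldots, L_j$ reset the same clocks at the same positions; the invariant thus immediately yields synchronous resets. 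The whole difficulty is therefore to show that these two fixed clocks, reset only according to (i)--(ii), still suffice to implement all time constraints of $\phi$.

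For the base cases, a standard CEL formula $\varphi$ compiles to a clock-free CEA as in Proposition~\ref{pro:cel2cea}, and I merely decorate it with the resets dictated by (i)--(ii), which are harmless since $\varphi$ imposes no clock checks. The crucial base case is a simple timed CEL formula $\psi$. Here I would first observe that, because $\psi$ uses no projection, the labeled positions of every complex event $C \in \sem{\psi}(\Stream)$ are exactly its matched positions; in particular $\cstart(C)$ and $\cend(C)$ are always labeled. Consequently, in any decomposition $C = C_1 \cup C_2$ produced by a time sequencing $\psi_1 \timesq{I} \psi_2$ or $\psi_1 \timessq{I} \psi_2$, or by an iteration step of $\psi' \timeks{I}$ or $\psi' \timesks{I}$, there is no labeled position strictly between $\cend(C_1)$ and $\cstart(C_2)$: every gap is bracketed by two consecutive labeled positions. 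Resetting $z_g$ at every labeled position (rule (i)) therefore guarantees that, at position $\cstart(C_2)$, the clock $z_g$ equals exactly $t_{\cstart(C_2)} - t_{\cend(C_1)}$, so the guard $z_g \in I$ that I place on the transition entering $C_2$ checks precisely the intended gap. This is what lets a single gap clock serve all the (possibly nested or conjoined) sequencing and iteration constraints of $\psi$ simultaneously.

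The inductive cases act on the outer layer and must preserve (i)--(ii). For $\phi \kAS X$ I relabel the transitions; since AS adds $X$ to exactly the already nonempty labels, the predicate $L \neq \emptyset$ is unchanged position-wise and rule (i) still holds. For $\phi \kFILTER X[P]$ I intersect $P$ into the predicate of every transition with $X \in L$, which touches neither labels nor resets. For $\phi_1 \kOR \phi_2$ I take the disjoint union under a shared initial state; because both branches already obey the \emph{same} label-determined reset rule, even two runs that take different branches but emit the same labels reset identically, so synchronous resets survives. For $\phi_1 \kAND \phi_2$ I take the product automaton, and the key point is that four clocks are not needed: any complex event accepted by the product carries a single label sequence shared by both factor runs, so their $z_g$-resets coincide (both follow rule (i)) and their $z_w$-resets coincide (both at the first transition), which lets me identify the two copies of each clock. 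The two window guards of the factors then become two guards on the single $z_w$, i.e. a check of the span against the intersection of the two intervals. Finally, for $\twin{\phi}{I}$ I reset $z_w$ on the unique first transition and add the guard $z_w \in I$ on the transitions witnessing acceptance; since the outer layer has no sequencing or iteration, every window constrains the span $t_{\cend(C)} - t_{\cstart(C)}$ of the \emph{same} shared complex event, and the single window clock reset at $\cstart(C)$ measures exactly this span when read at $\cend(C)$, so one clock suffices for all windows at once.

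The main obstacle is the simple timed CEL base case, and within it the behavior of the time operators under $\kAND$: I must argue that a single gap clock, reset blindly at every labeled position, still realizes several gap constraints that different conjuncts impose at different split points of the same complex event. This rests entirely on the consecutive-labeled-positions property above, which is what guarantees that each guard $z_g \in I$ reads the gap measured from the immediately preceding reset rather than from an earlier one; checking this property for every time operator, and checking that $z_g$ is always reset before its first use so that no uninitialized-clock check ever occurs, is the delicate part. The remaining work is routine: the standard acceptance-duplication needed to place the guard $z_w \in I$ only on accepting transitions without disturbing the reset invariant, and the bookkeeping that keeps $q_0$ non-re-entrant so that $z_w$ is reset exactly once per run.
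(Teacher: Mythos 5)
Your proposal is correct and follows essentially the same route as the paper's proof: the paper likewise establishes the stronger invariant that one clock is reset exactly at the marking transitions (via a lemma for simple formulas) and a second clock exactly at the initial transition, then propagates this through the outer layer of $\kAS$, $\kFILTER$, $\kOR$, $\kAND$, and $\twin{\cdot}{I}$ by identifying the clocks of the component automata, so that synchronous resets follow because the reset sets are determined by the labels. The only cosmetic difference is that you justify the single-gap-clock base case semantically via the consecutive-labeled-positions property of projection-free formulas, whereas the paper argues it syntactically by merging the per-operator clocks in the constructions of Proposition~\ref{pro:cel2cea}; these are the same underlying observation.
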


The previous result pinpointed a timed CEL fragment that generalizes swg queries and single windows queries (e.g., SASE+\cite{agrawal2008efficient}, CORE\cite{BucchiGQRV22}). Moreover, each formula can be compiled into a deterministic timed CEA. Therefore, windowed timed CEL is an expressive fragment and a good candidate for efficient streaming evaluation. Although this does not mean that windowed timed CEL can be evaluated efficiently, at least one could exploit the determinism to process~them.

 \label{sec:determinization}
	
	\section{Towards the efficient evaluation of timed complex event automata}

In this last section, we will look for efficient algorithms for evaluating (deterministic) timed CEA in the streaming scenario of CER. Here, we will provide a partial answer by showing the first algorithm for evaluating a restricted class of deterministic timed CEA with a single clock and monotonic clock conditions. We show that a streaming enumeration algorithm for this class is already non-trivial and, further, that it already includes an interesting subclass of queries. 

The plan for this section is to first present the problem with the algorithmic guarantees we strive for to solve it. Then, we define the class of monotonic timed CEA with the statement of the main algorithmic result. Finally, we provide a overview of the algorithm and some of its consequences. Given space restrictions, we move all the technical details of the algorithm to Appendix~\ref{app:evaluation}. 

\paragraph{The evaluation problem} A timed CEA $\cT$ defines a function from a timed stream $\Stream$ to a set of complex events $\sem{\cT}(\Stream)$. In practice, the stream is arbitrarily long  and, thus, users want results as soon as they occur. Then, it is natural to define the restriction of $\sem{\cT}(\Stream)$ at position $j \in [|\Stream|]$ as
$
\sem{\cT}_j(\Stream) = \{C \in \sem{\cT}(\Stream) \mid \cend(C) = j\}
$, namely, all complex events that end at position $j$. Given the semantics of timed CEL and timed CEA, if a complex event starts and ends at positions $i$ and $j$, respectively, then it only depends on the substream between $i$-th and $j$-th positions. Therefore, as soon as the $j$-th timed event arrives, we can safely return all complex events $\sem{\cT}_j(\Stream)$.

Let $\cC$ be a class of timed CEA. The main evaluation problem that we aim to solve, parameterized by $\cC$, can be formalized as follows:
\begin{center}
		\begin{tabular}{rl}
			\hline\\[-2ex]
			\textbf{Problem:} & $\EvalProblem[\cC]$\\
			\textbf{Input:} & A timed CEA $\cT \in \cC$ and a timed stream $\Stream$ \\\
			\textbf{Output:} & Enumerates all complex events $\sem{\cT}_j(\Stream)$ at each position $j \in [|\Stream|]$ \\\\[-2.2ex]
			\hline
		\end{tabular}
\end{center}
Our main goal is to find algorithms for $\EvalProblem[\cC]$ that evaluates $\cT$ reading $\Stream = (e_1, t_1) \ldots (e_n,t_n)$ event by event and enumerates  $\sem{\cT}_j(\Stream)$ right after reading the $j$-th events. For this, we assume that one can only access $\Stream$ through a method $\yield(\Stream)$ such that the $j$-th call to this function retrieves $(e_j, t_j)$ if $j \leq n$, or \EOS~(End-Of-Stream) otherwise. 

A \emph{streaming evaluation algorithm} for $\EvalProblem[\cC]$ is an algorithm that receives as input $\cT \in \cC$ and consumes $\Stream$ by calling $\yield(\Stream)$, maintaining a data structure $\DS$ such that, after the $j$-th call to $\yield(\Stream)$, it receives $(e_j, t_j)$ and operates over $\DS$ in two phases. The first phase, called the \emph{update phase}, updates $\DS$ with $(e_j, t_j)$. The second phase, called the \emph{enumeration phase}, enumerates the complex events in $\sem{\cT}_j(\Stream)$ one by one using $\DS$. After the second phase is ready, it proceeds to the next call to $\yield(\Stream)$, and so on.

We consider two parameters to measure the efficiency of a streaming evaluation algorithm $\cS$. First, we say $\cS$ has $f$-\emph{update time} for some function $f$ if the time during the update phase of $(e_j, t_j)$ is at most $\cO(f(\cT, e_j))$. Second, we say that $\cS$ has \emph{output-linear delay} if there exists a constant $d \in \bbN$ such that, for every $j$, during the enumeration phase of $\sem{\cT}_j(\Stream) = \{C_1, \ldots, C_m\}$, the time taken between retrieving $C_k$ and $C_{k+1}$ (i.e., the \emph{delay}) is bounded by $d \cdot |C_{k+1}|$. This implies that the time to output the first complex event is at most $d \cdot |C_1|$ and constant time after the last output. In particular, $\cS$ takes constant time to end the enumeration phase if there is no output (i.e., $\sem{\cT}_j(\Stream) = \emptyset$). 

For $\EvalProblem[\cC]$, a streaming evaluation algorithm with $f$-update time and output linear delay is a strong guarantee for efficiency, and it is the standard notion used in previous work~\cite{BerkholzKS17,BucchiGQRV22,MunozR22}. Note that if one considers that $\cT$ is small or constant (i.e., data complexity), the update time per event is proportional to the next event. Further, output-linear delay (see also constant-delay~\cite{Segoufin13}) is the gold standard for enumeration problems, where the delay between outputs only depends on the next one. As it is standard in the literature, for these algorithms we assume the usual RAM model with unit costs and logarithmic size registers (see e.g., \cite{aho1974design,GrandjeanJ22}).

\paragraph{A class of timed CEA that admits a streaming evaluation algorithm} The main open problem is to pinpoint a class $\cC$ of timed CEA such that $\EvalProblem[\cC]$ admits a streaming evaluation algorithm with $f$-update timed and output-linear delay for some function $f$. This problem is already open even for time automata (see, e.g., \cite{GrezMPPR22}), where the output is true or false after each event.
Here, we present a first class of deterministic timed CEA that admits a streaming evaluation, improving our understanding of the efficient evaluation of timed CEA. 

For $\sim \ \in \{\leq, \geq\}$, we say that a timed CEA $\cT = (Q, \pset, \xset, \zset, \Delta, q_0, F)$ has \emph{$\sim$-clock conditions} if all clock conditions are of the form $\bigwedge_{z \in Z} z \sim c_z$ for some $Z \subseteq \zset$ and $c_z \in \bbQplus$. Further, we say that a timed CEA is \emph{monotonic} if it either has $\leq$-clock conditions or $\geq$-clock conditions.

\begin{theorem}\label{theo:algorithm}
	Let $\cC^*$ be the class of all deterministic monotonic timed CEA with a single-clock. Then, $\EvalProblem[\cC^*]$ admits a streaming evaluation algorithm with $\cO(|\cT|^2 \cdot |e|)$-update time and output-linear enumeration, for a timed CEA $\cT \in \cC^*$ and next event $e$ of the input stream.
\end{theorem}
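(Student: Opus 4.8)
The plan is to process the stream event by event, maintaining a compact data structure from which the complex events ending at the current position are enumerated with output-linear delay, while the update per event costs $\cO(|\cT|^2 \cdot |e|)$. Since $\cT \in \cC^*$ is deterministic, every $C \in \sem{\cT}(\Stream)$ is produced by a \emph{unique} run, so it suffices to represent the set of partial accepting runs compactly and enumerate them without duplicates. For the compact representation I would reuse an Enumerable Compact Set in the style of CORE~\cite{BucchiGQRV22}: a fully persistent DAG with bottom, union, and extend nodes, where union and extend cost $\cO(1)$ and the partial outputs below a node enumerate with output-linear delay. The non-timed skeleton of the update is then the usual one: keep, for each state $q$, a node gathering all partial runs that currently end in $q$, and on each event recompute these nodes by taking, for every incoming transition, the union of the predecessors' nodes extended by the transition's label set; the final output at position $j$ is the union over final states, which inherits output-linear enumerability.

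The genuinely new ingredient is the single clock under monotonic conditions, and here the plan is to exploit two facts. First, a run's clock value at position $j$ equals $t_j - t_r$, where $r$ is its last reset position; equivalently each run carries the \emph{fixed} reset timestamp $t_r$. Second, a condition $z \leq c$ read at position $j$ is satisfied exactly by the runs with $t_r \geq t_j - c$, i.e.\ by a contiguous block (the most recently reset runs) when runs are ordered by clock value; for $\geq$-conditions it is the complementary block. I would therefore maintain, for each state $q$, a \emph{union list} of compact-set nodes ordered by clock value, supporting $\newul$, $\insertul$, $\mergeul$, a clock-check $\ulclockcheck$ that returns the contiguous sub-list satisfying a monotone condition, and a $\ulreset$ that places freshly reset runs at the front (clock value $0$). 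The $\cO(|\cT|)$ constants occurring in the clock conditions cut the ordered runs into $\cO(|\cT|)$ \emph{regions} on which all conditions behave identically, so each union list has $\cO(|\cT|)$ segments and there are $\cO(|\cT|^2)$ of them overall. Firing a transition then becomes: evaluate its predicate on $e$ in $\cO(|e|)$, apply $\ulclockcheck$ to the source list, reset and extend as prescribed, and $\mergeul$ the result into the target list; over $\cO(|\cT|)$ transitions each touching $\cO(|\cT|)$ segments this yields the claimed $\cO(|\cT|^2 \cdot |e|)$ bound, with runs whose clock is still uninitialized kept in a separate group.

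The crux — and the step I expect to be the main obstacle — is maintaining this ordered segmentation as time advances. When the timestamp grows by $\delta t_j$, every region boundary $t_j - c_i$ moves forward across the (fixed-key) list of runs, so a run may pass from one region to the next and a single segment may have to \emph{split} at a moved boundary; a naive re-partition would touch many runs individually and destroy the $\cO(|\cT|^2)$ bound. This is exactly where monotonicity is essential: because all conditions point the same way, each boundary sweeps the list strictly forward, each run crosses each boundary at most once and never returns, and every run whose clock has passed the largest constant $c_{\max}$ satisfies no condition and is forever indistinguishable, so all such runs may be collapsed into a single expired node. The hard part will be to show that $\ulclockcheck$ on a monotone condition is a clean prefix/suffix split of an ordered union list, that advancing the $\cO(|\cT|)$ boundaries and collapsing expired runs keeps the number of segments $\cO(|\cT|)$ and costs $\cO(|\cT|)$ union-list operations per state, and that all of this preserves the invariants the compact-set structure needs for output-linear enumeration (no duplicate runs, every stored node reachable in some accepting extension). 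Once these union-list operations are shown correct and within budget, combining them with the deterministic run/output bijection and the ECS enumeration guarantee gives the theorem.
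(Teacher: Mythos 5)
Your overall architecture matches the paper's: an ECS-style persistent DAG per the CORE algorithm, per-state union lists ordered by reset timestamp, and the observation that a monotone condition $z \le c$ selects a contiguous block of that ordered list. But the step you yourself flag as the crux is exactly where your plan diverges from what actually works, and the route you sketch for it has a genuine gap. You propose to bound each union list by $\cO(|\cT|)$ \emph{segments} cut by the clock constants ("regions on which all conditions behave identically"). That invariant is not maintainable: two runs whose reset timestamps $t_r \ne t_r'$ lie in the same region at time $t_j$ can be separated by a check $z \le c$ performed at a \emph{later} position $j'$, when the moving boundary $t_{j'} - c$ falls between them. So runs inside one region cannot be collapsed into a single node, a segment may need to be split at an interior point whose location is not known in advance, and the number of distinguishable reset timestamps alive at a state is not bounded by the number of constants. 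Monotonicity gives you that each run crosses each boundary once, but that bounds the total number of crossings over the whole stream, not the per-event work or the list length, so the $\cO(|\cT|^2 \cdot |e|)$ update bound does not follow.

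The paper closes this gap with a different counting lemma that your proposal is missing: the union list at state $q$ is kept with one entry per distinct \emph{maximum reset value}, and one proves that any max-reset present at position $i$ and older than $t_{i-1}$ must already have been the max-reset of some state's list at every earlier position back to when it was created; since there are only $|Q|$ states, at most $|Q|+2$ distinct max-reset values can coexist, bounding the list length independently of the clock constants. A second ingredient you gloss over is how the filtering is represented inside the compact set without breaking output-linear delay: reset and clock-check nodes produce no output, so the paper must merge consecutive reset/check nodes into bounded "gadgets," keep the output-depth of every node below a fixed constant, annotate nodes with their max-reset so the enumerator can skip union branches that fail a pending check, and guarantee every surviving check node has at least one valid output. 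Saying the structure "preserves the invariants the compact set needs" is precisely the part that requires this machinery. Your high-level decomposition is right, but both the list-length bound and the enumeration-delay argument need to be replaced by arguments of the above kind before the theorem is proved.
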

\begin{proof}[Proof sketch]
	The streaming evaluation algorithm of Theorem~\ref{theo:algorithm} is a non-trivial extension of the algorithm presented in~\cite{BucchiGQRV22} for evaluating CEL over a sliding window. We dedicate this proof sketch to explain the technical novelty of this algorithm with respect to~\cite{BucchiGQRV22}. Given space restrictions, we present all the technical details of the algorithm in the appendix. 
	
	First, Theorem~\ref{theo:algorithm} strictly generalizes the algorithm of~\cite{BucchiGQRV22}: it can manage a more expressive set of deterministic timed CEA. Indeed, although \cite{BucchiGQRV22} does not use timed CEA, the queries in~\cite{BucchiGQRV22} are of the form $\twin{\varphi}{\leq c}$ for a standard CEL formula $\varphi$. This formula can be converted into a deterministic monotonic timed CEA and, thus, can be evaluated with the algorithm of Theorem~\ref{theo:algorithm}. Further, our class of time automata can deal with more expressive queries like Example~\ref{ex:monotonic} (below). 
	
	For managing multiple clock constraints, we extend the data structure of~\cite{BucchiGQRV22}, called timed Enumerable Compact Set (tECS), by introducing two new nodes that we called \emph{reset nodes} and \emph{clock-check nodes}, which are in charge of the intermediate resets and checks during a run. These nodes allow the restriction of the enumeration to a subset of outputs that satisfy a reset or check condition. Adding these two classes of nodes requires adapting the conditions on the data structure and the enumeration procedure to ensure output-linear delay. 
	More importantly, we rethink the operations over the data structure, like union or extend, to ensure that the conditions are satisfied for output-linear enumeration of the results. Specifically, the reset nodes and check nodes do not produce output during the enumeration. Then, we need to enforce that they are not stacked during operations over the data structure. For this purpose, we introduce a \emph{gadget} that combines resets and checks, which can be simplified whenever two of them are combined. 
	
	Finally, we need to revisit the evaluation algorithm of~\cite{BucchiGQRV22} and the use of the so-called \emph{union lists}, a secondary data structure for managing lists of nodes. We generalize the conditions and invariants maintained over states and union lists to include the resets and checks in the transitions. This is the main purpose of the monotonic condition over the timed CEA, which allows us to generalize the main algorithmic ideas in~\cite{BucchiGQRV22}.
\end{proof}

\begin{example}\label{ex:monotonic}
One can check that $\cT_1$ is a monotonic timed CEA. Furthermore, it is determinizable (by Theorem~\ref{theo:determinization}), and the determinization has one clock and is monotonic. By Theorem~\ref{theo:algorithm}, we can efficiently evaluate the formula $\varphi_1$. Similarly for the formula $\varphi_2$ of Example~\ref{ex:timed-cel2}. 
\end{example}

We leave open to find a more general class of timed CEA that admits a streaming evaluation algorithm. To the best of our knowledge, the algorithm of Theorem~\ref{theo:algorithm} is the first streaming evaluation strategy that combines automata evaluation with general time constraints and efficient~enumeration.

 \label{sec:evaluation}
	
	\section{Conclusions and future work}\label{sec:conclusions}

This paper explored the expressibility and efficient evaluation of time constraints in complex event recognition. We showed that timed CEL has the same expressive power as timed CEA, meaning that the time operators suffice to write any CER query with time constraints. We also found a subset of timed CEL that is definable by deterministic timed CEA with two clocks, allowing the processing of the queries without removing duplicate runs. Finally, we constructed an algorithm to process monotonic timed CEAs with one clock efficiently, setting grounds for future real-life implementation of the framework.

This paper leaves several open problems regarding the efficient evaluation of timed CEL queries. First, it is an open question whether one can evaluate any timed CEL formula with $f$-update time and output-linear delay for some function $f$. Even the case of a single clock is open (i.e., without the monotonic restriction). In~\cite{GrezMPPR22}, the streaming evaluation of timed automata with a single clock was solved; however, it is unclear how to extend the approach to more clocks for boolean output or to single-clock timed CEA. Even the case of swg queries would be an interesting problem to study. Second, finding more expressible classes of determinizable timed CEL formulas is a relevant problem for these algorithms. Finally, it will be interesting to consider time restrictions together with other features of CEP, like correlation (i.e., joins).

	\newpage
	\bibliographystyle{abbrv}
	\bibliography{biblio}

\begin{thebibliography}{10}

\bibitem{agrawal2008efficient}
J.~Agrawal, Y.~Diao, D.~Gyllstrom, and N.~Immerman.
\newblock Efficient pattern matching over event streams.
\newblock In {\em SIGMOD}, pages 147--160, 2008.

\bibitem{aho1974design}
A.~V. Aho and J.~E. Hopcroft.
\newblock {\em The design and analysis of computer algorithms}.
\newblock Pearson Education India, 1974.

\bibitem{AlurD94}
R.~Alur and D.~L. Dill.
\newblock A theory of timed automata.
\newblock {\em Theor. Comput. Sci.}, 126(2):183--235, 1994.

\bibitem{AlurFH99}
R.~Alur, L.~Fix, and T.~A. Henzinger.
\newblock Event-clock automata: a determinizable class of timed automata.
\newblock {\em Theoretical Computer Science}, 211(1):253--273, 1999.

\bibitem{AsarinCM2002}
E.~Asarin, P.~Caspi, and O.~Maler.
\newblock Timed regular expressions.
\newblock {\em Journal of the ACM}, 49(2):172--206, 2002.

\bibitem{asarin1998controller}
E.~Asarin, O.~Maler, A.~Pnueli, and J.~Sifakis.
\newblock Controller synthesis for timed automata.
\newblock {\em IFAC Proceedings Volumes}, 31(18):447--452, 1998.

\bibitem{Bagan06}
G.~Bagan.
\newblock {MSO} queries on tree decomposable structures are computable with
  linear delay.
\newblock In {\em CSL}, volume 4207, pages 167--181. Springer, 2006.

\bibitem{Baier2009}
C.~Baier, N.~Bertrand, P.~Bouyer, and T.~Brihaye.
\newblock When are timed automata determinizable?
\newblock In {\em ICALP}, pages 43--54. Springer, 2009.

\bibitem{BerkholzKS17}
C.~Berkholz, J.~Keppeler, and N.~Schweikardt.
\newblock Answering conjunctive queries under updates.
\newblock In {\em PODS}, pages 303--318. {ACM}, 2017.

\bibitem{boubeta2017analysis}
J.~Boubeta-Puig, M.~Bravetti, L.~Llana, and M.~G. Merayo.
\newblock Analysis of temporal complex events in sensor networks.
\newblock {\em Journal of Information and Telecommunication}, 1(3):273--289,
  2017.

\bibitem{BucchiGQRV22}
M.~Bucchi, A.~Grez, A.~Quintana, C.~Riveros, and S.~Vansummeren.
\newblock {CORE:} a complex event recognition engine.
\newblock {\em VLDB}, 15(9):1951--1964, 2022.

\bibitem{chakraborty2005event}
S.~Chakraborty, L.~T. Phan, and P.~Thiagarajan.
\newblock Event count automata: A state-based model for stream processing
  systems.
\newblock In {\em RTSS}, pages 12--pp. IEEE, 2005.

\bibitem{cugola2012complex}
G.~Cugola and A.~Margara.
\newblock Complex event processing with t-rex.
\newblock {\em Journal of Systems and Software}, 85(8):1709--1728, 2012.

\bibitem{cugola2012processing}
G.~Cugola and A.~Margara.
\newblock Processing flows of information: From data stream to complex event
  processing.
\newblock {\em ACM Computing Surveys (CSUR)}, 44(3):1--62, 2012.

\bibitem{DemersGHRW06}
A.~J. Demers, J.~Gehrke, M.~Hong, M.~Riedewald, and W.~M. White.
\newblock Towards expressive publish/subscribe systems.
\newblock In {\em EDBT}, volume 3896, pages 627--644, 2006.

\bibitem{Finkel2006}
O.~Finkel.
\newblock Undecidable problems about timed automata.
\newblock In {\em FORMATS}, pages 187--199. Springer, 2006.

\bibitem{FlorenzanoRUVV18}
F.~Florenzano, C.~Riveros, M.~Ugarte, S.~Vansummeren, and D.~Vrgoc.
\newblock Constant delay algorithms for regular document spanners.
\newblock In {\em PODS}, pages 165--177. {ACM}, 2018.

\bibitem{GiatrakosAADG20}
N.~Giatrakos, E.~Alevizos, A.~Artikis, A.~Deligiannakis, and M.~N. Garofalakis.
\newblock Complex event recognition in the big data era: a survey.
\newblock {\em {VLDB} J.}, 29(1):313--352, 2020.

\bibitem{GrandjeanJ22}
E.~Grandjean and L.~Jachiet.
\newblock Which arithmetic operations can be performed in constant time in the
  {RAM} model with addition?
\newblock {\em CoRR}, abs/2206.13851, 2022.

\bibitem{GrezMPPR22}
A.~Grez, F.~Mazowiecki, M.~Pilipczuk, G.~Puppis, and C.~Riveros.
\newblock Dynamic data structures for timed automata acceptance.
\newblock {\em Algorithmica}, 84(11):3223--3245, 2022.

\bibitem{GrezRUV21}
A.~Grez, C.~Riveros, M.~Ugarte, and S.~Vansummeren.
\newblock A formal framework for complex event recognition.
\newblock {\em {ACM} Trans. Database Syst.}, 46(4):16:1--16:49, 2021.

\bibitem{henzinger2022history}
T.~A. Henzinger, K.~Lehtinen, and P.~Totzke.
\newblock History-deterministic timed automata.
\newblock In {\em CONCUR}, volume 243, 2022.

\bibitem{Kleest-Meissner22}
S.~Kleest{-}Mei{\ss}ner, R.~Sattler, M.~L. Schmid, N.~Schweikardt, and
  M.~Weidlich.
\newblock Discovering event queries from traces: Laying foundations for
  subsequence-queries with wildcards and gap-size constraints.
\newblock In {\em ICDT}, volume 220 of {\em LIPIcs}, pages 18:1--18:21, 2022.

\bibitem{koymans1990specifying}
R.~Koymans.
\newblock Specifying real-time properties with metric temporal logic.
\newblock {\em Real-time systems}, 2(4):255--299, 1990.

\bibitem{li2005composite}
G.~Li and H.-A. Jacobsen.
\newblock Composite subscriptions in content-based publish/subscribe systems.
\newblock In {\em ACM/IFIP/USENIX International Conference on Distributed
  Systems Platforms and Open Distributed Processing}, pages 249--269. Springer,
  2005.

\bibitem{lorber2017bounded}
F.~Lorber, A.~Rosenmann, D.~Ni{\v{c}}kovi{\'c}, and B.~K. Aichernig.
\newblock Bounded determinization of timed automata with silent transitions.
\newblock {\em Real-Time Systems}, 53:291--326, 2017.

\bibitem{luckham1995event}
D.~C. Luckham and J.~Vera.
\newblock An event-based architecture definition language.
\newblock {\em IEEE transactions on Software Engineering}, 21(9):717--734,
  1995.

\bibitem{MaLPR22}
L.~Ma, C.~Lei, O.~Poppe, and E.~A. Rundensteiner.
\newblock Gloria: Graph-based sharing optimizer for event trend aggregation.
\newblock In {\em SIGMOD}, pages 1122--1135. {ACM}, 2022.

\bibitem{MunozR22}
M.~Mu{\~{n}}oz and C.~Riveros.
\newblock Streaming enumeration on nested documents.
\newblock In {\em ICDT}, volume 220 of {\em LIPIcs}, pages 19:1--19:18, 2022.

\bibitem{PoppeLRM17}
O.~Poppe, C.~Lei, E.~A. Rundensteiner, and D.~Maier.
\newblock {GRETA:} graph-based real-time event trend aggregation.
\newblock {\em VLDB}, 11(1):80--92, 2017.

\bibitem{SchmidS21}
M.~L. Schmid and N.~Schweikardt.
\newblock Spanner evaluation over slp-compressed documents.
\newblock In {\em PODS}, pages 153--165. {ACM}, 2021.

\bibitem{Segoufin13}
L.~Segoufin.
\newblock Enumerating with constant delay the answers to a query.
\newblock In {\em ICDT}, pages 10--20, 2013.

\bibitem{SumanPKM08}
P.~V. Suman, P.~K. Pandya, S.~N. Krishna, and L.~Manasa.
\newblock Timed automata with integer resets: Language inclusion and
  expressiveness.
\newblock In {\em FORMATS}, volume 5215, pages 78--92, 2008.

\bibitem{TangwongsanHS21}
K.~Tangwongsan, M.~Hirzel, and S.~Schneider.
\newblock In-order sliding-window aggregation in worst-case constant time.
\newblock {\em {VLDB} J.}, 30(6):933--957, 2021.

\bibitem{UgarteV18}
M.~Ugarte and S.~Vansummeren.
\newblock On the difference between complex event processing and dynamic query
  evaluation.
\newblock In {\em AMW}, 2018.

\bibitem{verwiebe2023survey}
J.~Verwiebe, P.~M. Grulich, J.~Traub, and V.~Markl.
\newblock Survey of window types for aggregation in stream processing systems.
\newblock {\em The VLDB Journal}, 32(5):985--1011, 2023.

\end{thebibliography}
	
	\newpage
	\appendix

	\section{Proofs from Section~\ref{sec:timed-cea}} \label{app:timed-cea}

\subsection*{Proof of Proposition~\ref{pro:cel2cea}}

\begin{proof}
	We prove this result by constructing the timed CEA 
	$\cT_\varphi = (Q_\varphi, \pset, \xset, \zset_\varphi, \Delta_\varphi, q_{0\varphi},F_\varphi)$
	 by induction over the syntax of formula $\varphi$ as follows. Note that with this construction, there are no transitions from any state to the initial state, and every clock is reset before it is used. 
	 Both assumptions are important for the next constructions.
	\begin{itemize}\itemsep3mm
		\item If $\varphi = R$, then $\cT_\varphi$ is defined as $\cT_\varphi = (\{q_1,q_2\},\pset, \xset , \varnothing, \{(q_1,P_R,\TRUE, \{R\}, \varnothing, q_2)\}, q_1,\{q_2\})$, where $P_R$ is the predicate containing all events of type $R$. We will demonstrate that $\semaux{\cT_\varphi}(\Stream) = \semaux{R}(\Stream)$. If $C\in \semaux{\cT_\varphi}(\Stream)$, then an accepting run over $\Stream$ is $\rho: q_1 \xrightarrow{P_R, \TRUE /  \{R\}, \varnothing} q_2$, then $C_\rho = \{i, i,\ \mu\}$ where $\mu = \{(R, \{i\})\}$. Following the semantics that were defined earlier, $C_\rho \in \semaux{R}(\Stream)$. If $C \in \semaux{R}(\Stream)$, then $C = (i, i, \{(R, \{i\})\})$. A run of the automaton over $\Stream$ from position $i$ to $j$ is $q_0\xrightarrow{P_R, \TRUE /  \{R\}, \varnothing}q_1$. Then, $C \in \semaux{\cT_{\varphi}}(\Stream)$.
		
		\item If $\varphi = \psi \kAS X$, where $X$ is a new variable, then $\cT_\varphi = (Q_\psi, \pset, \xset, \zset_\psi, \Delta_\varphi, q_{0\psi},F_\psi)$ where $\Delta_\varphi$ is the result of adding variable $X$ to all marking transitions of $\Delta_\psi$, i.e., $\Delta_\varphi = \{(p,P,\gamma,L,Z,q)\in\Delta_\psi \mid L = \emptyset\}\cup\{(p,P,\gamma,L\cup\{X\},Z,q)\mid (p,P,\gamma,L,Z,q)\in \Delta_\psi \land L \ne \varnothing\}$. The proof of correctness is straightforward.
		
		\item If $\varphi = \psi \kFILTER X[P]$, where $X$ is an existing variable and $P\in\pset$ is a predicate, then $\cT_\varphi = (Q_\psi, \pset,\xset, \zset_{\psi}, \Delta_\varphi, q_{0\psi}, F_{\psi})$, where $\Delta_{\psi}$ is the result of filtering all transitions marked by the variable $X$ with the predicate $P$, i.e., $\Delta_\varphi = \{(p,P',\gamma,L,Z,q)\in\Delta_\psi \mid X \notin L\}\cup\{(p,P' \cap P,\gamma,L,Z,q)\mid (p,P',\gamma,L,Z,q)\in \Delta_\psi \land X \in L\}$. Again, the proof here is straightforward.
		
		\item If $\varphi = \psi_1 \kOR \psi_2$, then $\cT_\varphi$ is the union between $\cT_{\psi_1}$ and $\cT_{\psi_2}$, that is,  $\cT_\varphi = (Q_{\psi_1} \cup Q_{\psi_2}\cup\{q_{0\varphi}\},\pset, \xset, \zset_{\psi_1}\cup \zset_{\psi_2},\Delta_{\varphi}, q_{0\varphi}, F_{\psi_1} \cup F_{\psi_2})$,where $q_{0\varphi}$ is a fresh new state, and $\Delta_{\varphi} = \Delta_{\psi_1}\cup\Delta_{\psi_2}\cup \{(q_{0\varphi}, P, \gamma, L, Z, q)\mid (q_{0\psi_1}, P, \gamma, L, Z, q)\in \Delta_{\psi_1} \lor (q_{0\psi_2}, P, \gamma, L, Z, q)\in \Delta_{\psi_2}\}$. Here, we assume w.l.o.g. that $\cT_{\psi_1}$ and $\cT_{\psi_2}$ have disjoint sets of states.
		We omit the proof of correctness given that it is direct from the construction. 
		
		\item If $\varphi = \psi_1 \kAND \psi_2$, then $\cT_{\varphi} = (Q_{\psi_1} \times Q_{\psi_2}, \pset, \xset, \zset_{\psi_1} \cup \zset_{\psi_2},  \Delta_\varphi, (q_{0\psi_1}, q_{0\psi_2}), F_{\psi_1} \times F_{\psi_2})$, and:
		\[
		\begin{array}{rcl}
			\Delta_\varphi & = & \{((p_1, p_2), P_1 \cap P_2, \gamma_1 \land \gamma_2, L, Z_1 \cup Z_2, (q_1, q_2)) \mid \\
			& & \hspace{2cm} (p_1, P_1,  \gamma_1, L, Z_1, q_1) \in \Delta_{\psi_1} \land (p_2, P_2, \gamma_2, L, Z_2, q_2) \in \Delta_{\psi_2}\}.
		\end{array}
		\]
		We assume that $\cT_{\psi_1}$ and $\cT_{\psi_2}$ have disjoint sets of clocks. Similarly, the proof of correctness is direct here.
		
		\item If $\varphi = \psi_1 \sq \psi_2$, then $\cT_\varphi = (Q_{\psi_1} \cup Q_{\psi_2},\pset,\xset, \zset_{\psi_1}\cup \zset_{\psi_2}, \Delta_\varphi,  q_{0\psi_1}, F_{\psi_2})$ where $\Delta_\varphi = \Delta_{\psi_1} \cup \Delta_{\psi_2} \cup \{(p,P,\gamma,L,Z,q_{0\psi_2}) \mid (p,P,\gamma,L,Z,q_f)\in \Delta_{\psi_1} \land q_f\in F_{\psi_1}\} \cup \{(q_{0\psi_2},P,\TRUE,\varnothing,\varnothing,q_{0\psi_2})\}$. Here, we assume that $\cT_{\psi_1}$ and $\cT_{\psi_2}$ have disjoint sets of states, disjoint sets of clocks, and exploit the fact that there are no transitions from any state to the initial state. We assume by induction that $\semaux{\psi_1}(\Stream) = \semaux{\cT_{\psi_1}}(\Stream)$ and $\semaux{\psi_2}(\Stream) = \semaux{\cT_{\psi_2}}(\Stream)$, and will prove that $\semaux{\psi_1 \sq \psi_2}(\Stream) = \semaux{\cT_{\varphi}}(\Stream)$. 
		
		Let $(i,j, \mu)\in \semaux{\cT_{\varphi}}(\Stream)$ be a complex event. Then, there exists an accepting run over $\Stream$, $\rho: (q_0, \nu_0) \xrightarrow{P_i, \gamma_i / L_i, Z_i} \dots \xrightarrow{P_j, \gamma_j/ L_j, Z_j} (q_{f}, \nu_f)$. By construction, we know that the run has the following structure: $\rho: (q_{0\psi_1}, \nu_0) \xrightarrow{P_i, \gamma_i/ L_i, Z_i} \dots \xrightarrow{P_k, \gamma_k/ L_k, Z_k} (q_{0\psi_2}, \nu_k) \dots (q_{0\psi_2}, \nu_k') \xrightarrow{P_{k'}, \gamma_{k'} / L_{k'}, Z_{k'}} \dots \xrightarrow{P_j, \gamma_j / L_j, Z_j} (q_{f}, \nu_f)$. By construction also, there exists an accepting run of $\cT_{\psi_1}$ over $\Stream$, $\rho_1: (q_{0\psi_1}, \nu_0)\xrightarrow{P_i, \gamma_i/ L_i, Z_i} \dots \xrightarrow{P_k, \gamma_k/ L_k, Z_k} (q_{f\psi_1}, \nu_{f\psi_1})$. and a run of $\cT_{\psi_2}$ over $\Stream$, $\rho_2: (q_{0\psi_2}, \nu_0) \xrightarrow{P_{k'}, \gamma_{k'} / L_{k'}, Z_{k'}} \dots \xrightarrow{P_j, \gamma_j / L_j, Z_j} (q_{f\psi_2}, \nu_{f\psi_2})$. Summarizing, we know that $(i, k, \mu_1) \in \semaux{\cT_{\varphi_1}}(\Stream)$ and $(k', j, \mu_2) \in \semaux{\cT_{\varphi_2}}(\Stream)$. By assumption, we know that $(i, k, \mu_1) \in \semaux{\varphi_1}(\Stream)$ and $(k', j, \mu_2) \in \semaux{\varphi_2}(\Stream)$, with $k < k'$, then $(i, j, \mu) \in \semaux{\varphi}(\Stream)$.
		
		In the other direction, let $(i, j, \mu) \in \semaux{\varphi}(\Stream)$. Then, there exist $(i, k, \mu_1) \in \semaux{\varphi_1}(\Stream)$ and $(k', j, \mu_2) \in \semaux{\varphi_2}(\Stream)$, with $k < k'$. By assumption, we know that there exist runs $\rho_1$ of $\cT_{\varphi_1}$ over $\Stream$ and $\rho_2$ of $\cT_{\varphi_2}$ over $\Stream$, where $\rho_1: (q_{0\psi_1}, \nu_0)\xrightarrow{P_i, \gamma_i/ L_i, Z_i} \dots \xrightarrow{P_k, \gamma_k/ L_k, Z_k} (q_{f\psi_1}, \nu_{f\psi_1})$. and $\rho_2: (q_{0\psi_2}, \nu_0) \xrightarrow{P_{k'}, \gamma_{k'} / L_{k'}, Z_{k'}} \dots \xrightarrow{P_j, \gamma_j / L_j, Z_j} (q_{f\psi_2}, \nu_{f\psi_2})$. By construction, there exists a run $\rho$ of $\cT_\varphi$ over $\Stream$, where
		$\rho: (q_0, \nu_0)\xrightarrow{P_i, \gamma_i/ L_i, Z_i}\dots \xrightarrow{P_j, \gamma_j / L_j, Z_j} (q_f, \nu_f)$ and $q_f\in F$. Then, $(i, j, \mu)\in \semaux{\cT_{\varphi}}(\Stream)$.
		
		\item If $\varphi = \psi_1 \ssq \psi_2$, then $\cT_\varphi = (Q_{\psi_1} \cup Q_{\psi_2},\pset, \xset, \zset_{\psi_1}\cup \zset_{\psi_2}, \Delta_\varphi, q_{0\psi_1}, F_{\psi_2})$ where $\Delta_\varphi = \Delta_{\psi_1} \cup \Delta_{\psi_2} \cup \{(p,P,\gamma,L,Z,q_{0\psi_2}) \mid (p,P,\gamma,L,Z,q_f)\in \Delta_{\psi_1} \land q_f\in F_{\psi_1}\}$. Here, we assume that $\cT_{\psi_1}$ and $\cT_{\psi_2}$ have disjoint sets of states and clocks. This proof is analogous to the proof of operator $\sq$.
		
		\item If $\varphi = \psi \ks$, then $\cT_\varphi = (Q_\psi \,\cup\, \{q_{new}\}, \pset, \xset, \zset_{\psi}, \Delta_\varphi, q_{0\psi}, F_\psi)$ where $q_{new}$ is a fresh state, and 
		\[
		\begin{array}{rl}
			\Delta_\varphi = &\Delta_\psi \,\cup\, \{(p,P,\gamma,L,Z,q_{new}) \mid \exists q' \in F_\psi.\,(p,P,\gamma,L,Z,q') \in \Delta_\psi\}\,\cup\\
			&\{(q_{new},P,\TRUE,\varnothing,\varnothing,q_{new})\}\, \cup\\
			&\{(q_{new},P,\gamma,L,Z,q) \mid (q_{0\psi},P,\gamma,L,Z,q) \in \Delta_\psi\}\,\cup\\
			&\{(q_{new},P,\gamma,L,Z,q_{new}) \mid \exists q_f . \, (q_{0\psi},P,\gamma,L,Z,q_f) \in \Delta_\psi\}
		\end{array}
		\]
		In the previous definition, it is important that clocks are reset before they are used.
		Let $\varphi = \psi \ks$. We assume by induction that $\semaux{\psi}(\Stream) = \semaux{\cT_{\psi}}(\Stream)$, and will prove that $\semaux{\psi\ks}(\Stream) = \semaux{\cT_{\varphi}}(\Stream)$. Let $(i, j, \mu)\in \semaux{\psi\ks}(\Stream)$. Then, by construction, either $(i, j, \mu)\in \semaux{\psi}(\Stream)$ or $(i, j, \mu)\in \semaux{\psi\sq\psi\ks}(\Stream)$. We will continue this proof using induction. As a base case, if $(i, j, \mu)\in \semaux{\psi}(\Stream)$, then by initial assumption $(i, j, \mu)\in \semaux{\cT_\varphi}(\Stream)$, because the transitions from $\cT_{\psi}$ are preserved in $\cT_{\varphi}$. We will then assume that if $(i, k, \mu_1) \in \semaux{\psi}(\Stream)$ and $(k', j, \mu_2) \in \semaux{\psi\ks}(\Stream)$, then $(i, j, \mu) \in \semaux{\cT_{\varphi}}(\Stream)$. We then know that there exists runs $(q_0, \nu_0) \xrightarrow{P_i, \gamma_i / L_i, Z_i} \dots \xrightarrow{P_k,\gamma_k / L_k, Z_k} (q_f, \nu_k)$ over $\cT_{\psi}$ and $(q_0, \nu_0)\xrightarrow{P_{k'+1}, \gamma_{k'+1} /L_{k'+1}, Z_{k'+1}}\dots\xrightarrow{P_j, \gamma_j / L_j, \gamma_j} (q_f, \nu_j)$ over $\cT_{\varphi}$, with $k < k'$. Finally, if $(i, j, \mu)\in \semaux{\psi\sq\psi\ks}(\Stream)$, then there exists a run $\rho: (q_0, \nu_0) \xrightarrow{P_i, \gamma_i / L_i, Z_i} \dots \xrightarrow{P_k,\gamma_k / L_k, Z_k} (q_{new}, \nu_k)\xrightarrow{P_{k+1}, \gamma_{k+1} /L_{k+1}, Z_{k+1}}\dots\xrightarrow{P_{k'}, \gamma_{k'} /L_{k'}, Z_{k'}}(q_{k'}, \nu_{k}')\xrightarrow{P_{k'+1}, \gamma_{k'+1} /L_{k'+1}, Z_{k'+1}}\dots\xrightarrow{P_j, \gamma_j / L_j, \gamma_j} (q_f, \nu_j)$. Therefore $(i, j, \mu)\in \semaux{\cT_{\psi\sq\psi\ks}}(\Stream)$.
		
		On the other hand, let $C=(i, j, \mu)\in \semaux{\cT_{\varphi}}(\Stream)$. Then, there exists a run $\rho: (q_0, \nu_0) \xrightarrow{P_i, \gamma_i / L_i, Z_i} \dots \xrightarrow{P_j, \gamma_j / L_j, Z_j} (q_f, \nu_f)$. We will prove the result by induction. Let's first assume that the run doesn't pass through the state $q_{new}$. Then, the automaton passes only through states that were present in $\cT_\psi$, so it is clear that $C \in \semaux{\psi}(\Stream) \subseteq \semaux{\varphi}(\Stream)$. Now let's assume the run over $\Stream$ passes $n$ times through the new state $q_{new}$. We can take from the stream $n+1$ substreams, such that each sub-event $(k, k', \mu') \in \semaux{\cT_\psi}(\Stream)$ doesn't pass through $q_{new}$ (by construction). Let's call $k$ the last index of the first substream accepted by $\cT_\psi$. We can then take two complex events from $C$, the first one being $C_1 = (i, k, \mu_1)$ and the second one being $C_2 = (k', j, \mu_2)$, where $k'$ is the first index of the second substream. As such, we know that $C_1\in \semaux{\cT_{\psi}}(\Stream)$ and $C_2\in \semaux{\cT_{\varphi}}(\Stream)$, but the second run passes at most $n-1$ times through $q_{new}$. We conclude that $C \in \semaux{\varphi}(\Stream)$.
		
		\item If $\varphi = \psi\sks$, then $\cT_\varphi = (Q_\psi \cup \{q_{new}\}, \pset, \xset, \zset_{\psi}, \Delta_\varphi,  q_{0\psi}, F_\psi)$ where $q_{new}$ is a fresh state, $\Delta_\varphi = \Delta_\psi \cup \{(p,P,\gamma,L,Z,q_{new}) \mid \exists q' \in F_\psi.(p,P,\gamma,L,Z,q') \in \Delta_\psi\} \cup  \{(q_{new},P,\gamma,L,Z,q) \mid (q_{0\psi},P,\gamma,L,Z,q) \in \Delta_\psi\} \cup \{(q_{new},P,\gamma,L,Z,q_{new}) \mid \exists q_f\,.\, (q_{0\psi},P,\gamma,L,Z,q_f) \in \Delta_\psi\}$. The proof of this operator is analogous to the one for operator $\ks$.
		
		\item If $\varphi = \pi_L(\psi)$ for some $L \subseteq \xset$, then $\cT_\varphi = (Q_\psi,\pset, \xset, \zset_{\psi}, \Delta_\varphi, q_{0\psi},F_\psi)$ where $\Delta_\varphi$ is the result of intersecting the labels of each transition in $\Delta_\psi$ with $L$.
		Formally, that is $\Delta_\varphi = \{(p, P, \gamma,L \cap L',Z,q) \mid (p,P,\gamma,L',Z,q) \in \Delta_\psi\}$.
		The correctness is direct from the construction.
		
		\item If $\varphi = \twin{\psi}{I}$, then $\cT_\varphi$ is the result of starting a clock when the automata begins execution and checking it when the automata finishes execution: $\cT_{\varphi} = (Q_{\psi}\cup \{q_f\}, \pset, \xset, \zset_\psi\cup \{z_N\},\Delta_\varphi,  q_{0\psi}, \{q_f\})$, with $z_N$ a new clock, where 
		\[
		\begin{array}{rcl}
			\Delta_\varphi & = & \{(q_{0\psi}, P, \gamma, L, Z\cup \{z_N\}, q)\mid (q_{0\psi}, P, \gamma, L, Z, q)\in \Delta_{\psi}\land q\notin F_\psi\} \ \cup  \\
			& & \{(p, P, \gamma, L, Z, q)\in \Delta_{\psi}\mid p\ne q_{0\psi}\} \ \cup \\
			& & \{(p, P, \gamma\land \gamma_I, L, Z, q_f)\mid (p, P, \gamma, L, Z, q)\in \Delta_{\psi}\land p\ne q_{0\psi}\land q\in F_\psi\} \ \cup \\
			& & \{(q_{0\psi}, P, \gamma\land (0\in I), L, Z, q_f)\mid (q_{0\psi}, P, \gamma, L, Z, q)\in \Delta_{\psi} \land q\in F_\psi\}
		\end{array}
		\]
		where $\gamma_I$ is the clock constraint that checkes if the clock $z_N$ is part of the interval $I$, and $(0 \in I) := \TRUE$ if the expression evaluates to true and $\FALSE$ otherwise.
		\medskip
		
		For the proof of correctness, we assume by induction that $\semaux{\psi}(\Stream) = \semaux{\cT_{\psi}}(\Stream)$ and we will demonstrate that $\semaux{\cT_\varphi}(\Stream) = \semaux{\twin{\psi}{I}}(\Stream)$. Let $(i, j, \mu)\in \semaux{\cT_\varphi}(\Stream)$. Then, there exists a run over $\Stream$ $\rho: (q_0, \nu_0)\xrightarrow{P_i, \gamma_i / L_i, Z_i}\dots \xrightarrow{P_j, \gamma_j / L_j, Z_j}(q_f, \nu_f)$ where $C = \{(X, \{k \mid X\in L_k\})\mid X\in \xset\}$. We also know that $\nu_1(z_N) = 0$ (because the clock $z_N$ is resetted in the first transition) and that $\nu_f(z_N) = t_j-t_i$ because the clock $z_N$ is not resetted after the beginning of the run. We are presented with two types of accepting runs over the automaton: In the first type, the run has only one step, and goes directly from the initial state to the final state $\rho: (q_0, \nu_0)\xrightarrow{P_i, \gamma_i / L_i, Z_i}(q_1, \nu_1)$. In the second type of run, the run goes through more than two states, $\rho: (q_0, \nu_0)\xrightarrow{P_i, \gamma_i / L_i, Z_i} \dots \xrightarrow{P_j, \gamma_j / L_j, Z_j} (q_f, \nu_f)$. In both cases, there exists by definition an accepting run over $\cT_\psi$, in the first case it is $\rho': (q_0, \nu_0)\xrightarrow{P_i, \gamma_i / L_i, Z_i} (q_f, \nu_f)$ (with $i=j$) and in the second case it is $\rho': (q_0, \nu_0)\xrightarrow{P_i, \gamma_i / L_i, Z_i}\dots \xrightarrow{P_j, \gamma_j / L_j, Z_j} (q_f, \nu_f)$. Then, the valuation over this run $(i, j, \mu) \in \semaux{\cT_\psi}(\Stream)$, and by assumption $(i, j, \mu) \in \semaux{\psi}(\Stream)$. By the semantic definition of $\twin{\psi}{I}(\Stream)$, $C\in \semaux{\twin{\psi}{I}}(\Stream)$ if and only if $C\in \semaux{\psi}(\Stream)$ and $t_{\cend(C)} - t_{\cstart(C)}\in I$. Because $\rho$ was an accepting run, then we know that $t_{\cend(C)} - t_{\cstart(C)}  \in I$. Then $C\in \semaux{\twin{\psi}{\in I}}(\Stream)$. On the other side, be $C = (i, j, \mu)\in \semaux{\twin{\psi}{\in I}}(\Stream)$. Then, $t_{\cend(C)} - t_{\cstart(C)} \in I$. We know by construction that $(i, j, \mu)\in \semaux{\psi}(\Stream)$, then by assumption $(i, j, \mu)\in \semaux{\cT_\psi}(\Stream)$. Then there exists an accepting run of $\cT_\psi$ over $\Stream$, namely $\rho: (q_{i-1}, \nu_{i-1})\xrightarrow{P_i, \gamma_i / L_i, Z_i} \dots \xrightarrow{P_j, \gamma_j / L_j, Z_j} (q_j, \nu_j)$ where $q_j\in F$. By construction, there exists a run of $\cT_{\varphi}$ over $\Stream$, namely $\rho': (q_{i-1}, \nu_{i-1})\xrightarrow{P_i, \gamma_i / L_i, Z_i} \dots \xrightarrow{P_j, \gamma_j / L_j, Z_j} (q_f, \nu_f)$. The clock $z_N$ only is resetted at the first transition of the run, so we know that $\nu_1(z_N) = 0$ and $\nu_f(z_N) - \nu_1(z_N) = t_{\cend(C)} - t_{\cstart(C)}$. As the last transition only differs of the previous automaton by adding a clock constraint $\gamma_I$, and because $\nu_f(z_N) = t_{\cend(C)} - t_{\cstart(C)} \in I$, the run is accepting and the valuation is part of $\semaux{\cT_{\varphi}}(\Stream)$.
		
		\item Let $\varphi = \psi_1\xsq{I}\psi_2$, then $\cT_\varphi = (Q_{\psi_1} \cup Q_{\psi_2}\cup \{q_{new}\}, \pset, \xset, \zset_{\psi_1}\cup \zset_{\psi_2} \cup \{z_X\}, \Delta_\varphi, q_{0\psi_1}, F_{\psi_2} )$ where $z_X$ is a new clock, $q_{new}$ is a new state, and $\Delta_\varphi = \Delta_{\psi_1} \cup \Delta_{\psi_2} \cup \{(p,P,\gamma,L,Z\cup \{z_X\},q_{new}) \mid (p,P,\gamma,L,Z,q_f)\in \Delta_{\psi_1} \land q_f\in F_{\psi_1}\} \cup \{(q_{new},P,\TRUE,\varnothing,\varnothing,q_{new})\}\cup \{(q_{new}, P, \gamma \land (\gamma_I), L, Z, q) \mid \exists (q_{0\psi_2}, P, \gamma, L, Z, q) \in \Delta_{\psi_2}\}$, where $\gamma_I$ is the clock constraint that checks if $z_X$ is inside the interval $I$. This automaton is basically the same than the automaton for $\psi_1 \sq \psi_2$, except that it adds a time constraint between the two formulas.
		As before, we assume that the set of states and clocks are disjoint.
		\smallskip
		
		We assume by induction that $\semaux{\psi_1}(\Stream) = \semaux{\cT_{\psi_1}}(\Stream)$ and $\semaux{\psi_2}(\Stream) = \semaux{\cT_{\psi_2}}(\Stream)$, will prove that $\semaux{\psi_1\xsq{I}\psi_2} = \semaux{\cT_{\varphi}}(\Stream)$.
		
		Let $(i, j, \mu) \in \semaux{\cT_{\varphi}}(\Stream)$. Then, there exists an accepting run over $\Stream$, $\rho = (q_{0}, \nu_0) \xrightarrow{P_i, \gamma_i / L_i, Z_i} \dots \xrightarrow{P_j, \gamma_j / L_j, Z_j}(q_f, \nu_f)$. By construction, we know that the described run has the following structure: $\rho: (q_{0\psi_1}, \nu_0)\xrightarrow{P_i, \gamma_i / L_i, Z_i} \dots \xrightarrow{P_k, \gamma_k / L_k, Z_k} (q_{new}, \nu_k)\dots(q_{new}, \nu_{k'-1})\xrightarrow{P_{k'}, \gamma_{k'} / L_{k'}, Z_{k'}}(q_{k'}, \nu_{k'})\dots \xrightarrow{P_j, \gamma_j / L_j, Z_j} (q_f, \nu_f)$, such that $\nu_{k'}(z_X)-\nu_{k}(z_X) \in I$. Then, we know that there exist accepting runs $\rho_1: (q_{0\psi_1}, \nu_0)\xrightarrow{P_i, \gamma_i / L_i, Z_i} \dots \xrightarrow{P_k, \gamma_k / L_k, Z_k} (q_{f}, \nu_f)$ over $\psi_1$ and $(q_{0\psi_2}, \nu_0)\xrightarrow{P_{k'}, \gamma_{k'} / L_{k'}, Z_{k'}}(q_{k'}, \nu_{k'})\dots \xrightarrow{P_j, \gamma_j / L_j, Z_j} (q_f, \nu_f)$ over $\psi_2$. As such, we know that $C_1 = (i, k, \mu_1) \in \semaux{\psi_1}(\Stream)$ and $C_2 = (k', j, \mu_2) \in \semaux{\psi_2}(\Stream)$. As $t_{\cstart(C_2)} - t_{\cend(C_1)} \in I$, then $C \in \semaux{\varphi}(\Stream)$.
		
		On the other side, let's assume that $C = (i, j, \mu) \in \semaux{\varphi}(\Stream)$. Then we know that there exist $C_1 = (i, k, \mu_1) \in \semaux{\psi_1}(\Stream)$ and $C_2 = (k', j, \mu_2) \in \semaux{\psi_2}(\Stream)$ such that $t_{\cstart(C_2)} - t_{\cend(C_1)} \in I$. Then, there exist accepting runs $\rho_1: (q_{0\psi_1}, \nu_0) \xrightarrow{P_i, \gamma_i / L_i, Z_i}\dots \xrightarrow{P_k, \gamma_k / L_k, Z_k}(q_{k}, \nu_k)$ over $\cT_{\psi_1}$ and $\rho_2: (q_{0\psi_2}, \nu_0) \xrightarrow{P_{k'}, \gamma_{k'} / L_{k'}, Z_{k'}}\dots \xrightarrow{P_j, \gamma_j / L_j, Z_j}(q_{j}, \nu_j)$ over $\cT_{\psi_2}$. As we know that $t_{\cstart(C_2)} - t_{\cend(C_1)} \in I$, then we know that the run over $\Stream$ $\rho: (q_{0\psi_1}, \nu_0) \xrightarrow{P_i, \gamma_i / L_i, Z_i} \dots (q_{new}, \nu_k) \dots (q_{new}, \nu_{k'})$ $ \dots \xrightarrow{(P_j, \gamma_j / L_j, Z_j)} (q_f, \nu_j)$ is an accepting run over $\cT_{\varphi}$.
		
		\item Let $\varphi = \psi\xssq{I}\psi$, then $\cT_\varphi = (Q_{\psi_1} \cup Q_{\psi_2}\cup \{q_{new}\}, \pset, \xset, \zset_{\psi_1}\cup \zset_{\psi_2} \cup \{z_X\}, \Delta_\varphi,  q_{0\psi_1}, F_{\psi_2} )$ where $z_X$ is a new clock, $q_{new}$ is a new state, and $\Delta_\varphi = \Delta_{\psi_1} \cup \Delta_{\psi_2} \cup \{(p,P,\gamma,L,Z\cup \{z_X\},q_{new}) \mid (p,P,\gamma,L,Z,q_f)\in \Delta_{\psi_1} \land q_f\in F_{\psi_1}\} \cup \{(q_{new}, P, \gamma \land \gamma_I, L, Z, q) \mid \exists (q_{0\psi_2}, P, \gamma, L, Z, q) \in \Delta_{\psi_2}\}$. This automaton is basically the same than the automaton for $\psi_1 \ssq \psi_2$, except that it adds a time constraint between the two formulas.  The proof for this operator is analogous to the one for operator $\xsq{I}$.
		\smallskip
		
		\item Let $\varphi = \psi\xks{I}$, then $\cT_\varphi = (Q_\psi \cup \{q_{new}\}, \pset, \xset, \zset_{\psi}\cup \{z_X\}, \Delta_\varphi,  q_{0\psi}, F_\psi)$ where $q_{new}$ is a fresh state, $z_X$ is a new clock, $\Delta_\varphi = \Delta_\psi \cup \{(p,P,\gamma,L,Z\cup \{z_X\},q_{new}) \mid \exists q' \in F_\psi.(p,P,\gamma,L,Z,q') \in \Delta_\psi\} \cup \{(q_{new},P,\TRUE,\varnothing,\varnothing,q_{new})\} \cup \{(q_{new},P,\gamma\land \gamma_I,L,Z,q) \mid (q_{0\psi},P,\gamma,L,Z,q) \in \Delta_\psi\} \cup \{(q_{new},P,\gamma \land (0 \in I),L,Z\cup \{z_X\},q_{new}) \mid \exists q_f\,.\, (q_{0\psi},P,\gamma,L,Z,q_f) \in \Delta_\psi\}$, $\gamma_I$ is the clock condition that checks if the value of the clock $z_X$ is in the interval $I$ and $(0 \in I)$ is $\TRUE$ is the expression evaluates to true and $\FALSE$ otherwise.
		
		\smallskip
		
		Let $\varphi = \psi \xks{I}$. We assume by induction that $\semaux{\psi}(\Stream) = \semaux{\cT_{\psi}}(\Stream)$, and will prove that $\semaux{\psi\xks{I}}(\Stream) = \semaux{\cT_{\varphi}}(\Stream)$. Let $(i, j, \mu)\in \semaux{\psi\xks{I}}(\Stream)$. Then, by construction, either $(i, j, \mu)\in \semaux{\psi}(\Stream)$ or $(i, j, \mu)\in \semaux{\psi\xsq{I}\psi\xks{I}}(\Stream)$. We will continue this proof using induction. As a base case, if $(i, j, \mu)\in \semaux{\psi}(\Stream)$, then by initial assumption $(i, j, \mu)\in \semaux{\cT_\varphi}(\Stream)$, because the transitions from $\cT_{\psi}$ are preserved in $\cT_{\varphi}$. We will then assume that if $(i, k, \mu_1)\in \semaux{\psi}(\Stream)$ and $(k', j, \mu_2) \in \semaux{\psi\xks{I}}(\Stream)$, then $(i, j, \mu) \in \semaux{\cT_{\varphi}}(\Stream)$. We then know that there exists runs $(q_0, \nu_0) \xrightarrow{P_i, \gamma_i / L_i, Z_i} \dots \xrightarrow{P_k, \gamma_k / L_k, Z_k} (q_f, \nu_k)$ over $\cT_{\psi}$ and $(q_0, \nu_0)\xrightarrow{P_{k'+1},  \gamma_{k'+1} / L_{k'+1}, Z_{k'+1}}\dots\xrightarrow{P_j, \gamma_j / L_j, Z_j} (q_f, \nu_j)$ over $\cT_{\varphi}$, with $k < k'$, and that $t_{k'} - t_{k} \in I$. Finally, if $(i, j, \mu)\in \semaux{\psi\xsq{I}\psi\xks{I}}(\Stream)$, then there exists a run $\rho: (q_0, \nu_0) \xrightarrow{P_i, \gamma_i/ L_i, Z_i} \dots \xrightarrow{P_k, \gamma_k / L_k, Z_k} (q_{new}, \nu_k)\xrightarrow{P_{k+1},\gamma_{k+1} / L_{k+1}, Z_{k+1}}\dots\xrightarrow{P_{k'}, \gamma_{k'}/ L_{k'}, Z_{k'}}(q_{k'}, \nu_{k}')\xrightarrow{P_{k'+1}, \gamma_{k'+1} / L_{k'+1}, Z_{k'+1}}\dots\xrightarrow{P_j, \gamma_j/ L_j, Z_j} (q_f, \nu_j)$. Therefore $(i, j, \mu)\in \semaux{\cT_{\psi\xsq{I}\psi\xks{I}}}(\Stream)$.
		
		On the other hand, let $C=(i, j, \mu)\in \semaux{\cT_{\varphi}}(\Stream)$. Then, there exists a run $\rho: (q_0, \nu_0) \xrightarrow{P_i, \gamma_i/ L_i, Z_i} \dots \xrightarrow{P_j, \gamma_j / L_j, Z_j} (q_f, \nu_f)$. We will prove the result by induction. Let's first assume that the run doesn't pass through the state $q_{new}$. Then, the automaton passes only through states that were present in $\cT_\psi$, so it is clear that $(i, j, \mu) \in \semaux{\psi}(\Stream) \subseteq \semaux{\varphi}(\Stream)$. Now let's assume the run over $\Stream$ passes $n$ times through the new state $q_{new}$. We can take from the stream $n+1$ substreams, such that each subevent $(k, k', \mu') \in \semaux{\cT_\psi}(\Stream)$ doesn't pass through $q_{new}$ (by construction). Let's call $k$ the last index of the first subevent and $k'$ the first index of the last subevent. We can then take two sub-events from $C$ , the first one being $C_1 = (i, k, \mu_1)$ and the second one being $C_2 = (k', j, \mu_2)$, where $k < k'$. We know by construction that $\nu_{k'-1}(z_X) - \nu_k(z_X) + \delta t_{k'} \in I$. As such, we know that $C_1 \in \semaux{\cT_{\psi}}(\Stream)$ and $C_2 \in \semaux{\cT_{\varphi}}(\Stream)$, but its run passes at most $n-1$ times through $q_{new}$. That said, $C_1 \in \semaux{\psi}(\Stream)$, $C_2\in \semaux{\psi\xks{\in I}}(\Stream)$, and $t_{k'}-t_{k} \in I$. We conclude that $C \in \semaux{\varphi}(\Stream)$.

		\item If $\varphi = \psi \timesks{I}$, then $\cT_\varphi = (Q_\psi \cup \{q_{new}\}, \pset, \xset, \zset_{\psi}\cup \{z_X\}, \Delta_\varphi, q_{0\psi}, F_\psi)$ where $q_{new}$ is a fresh state, $z_X$ is a new clock, $\Delta_\varphi = \Delta_\psi \cup \{(p,P,\gamma,L,Z\cup \{z_X\},q_{new}) \mid \exists q' \in F_\psi.(p,P,\gamma,L,Z,q') \in \Delta_\psi\} \cup \{(q_{new},P,\gamma\land \gamma_I,L,Z,q) \mid (q_{0\psi},P,\gamma,L,Z,q) \in \Delta_\psi\} \cup \{(q_{new},P,\gamma \land (0 \in I),L,Z\cup \{z_X\},q_{new}) \mid \exists q_f\,.\, (q_{0\psi},P,\gamma,L,Z,q_f) \in \Delta_\psi\}$, $\gamma_I$ is the clock constraint that checks if the valuation of the clock $z_X$ is in $I$ and $(0 \in I)$ is $\TRUE$ if the expression evaluates to true and $\FALSE$ otherwise. The proof for this operator is analogous to the one for operator $\xks{I}$.
	\end{itemize}	
	
\end{proof}

\subsection*{Proof of Proposition~\ref{pro:cea2cel}}
We will establish a method to build a timed CEL query equivalent to the language of a timed CEA $\cT = (Q, \pset, \xset, \zset, \Delta, q_0, F)$. For that matter, we will begin introducing the definitions of \emph{disjunction-free} and \emph{strongly-deterministic}, adapted from the definitions introduced in \cite{AsarinCM2002}. A timed CEA $\cT = (Q, \pset, \xset, \zset, \Delta, q_0, F)$ is \emph{disjunction-free} if for every transition $(p, P, \gamma, L, Z, q) \in \Delta$, we have that $\gamma$ is a formula with no disjunctions. A timed CEA $\cT = (Q, \pset, \xset, \zset, \Delta, q_0, F)$ is \emph{strongly-deterministic} if for every two transitions $(p, P_1, \gamma_1, L_1, Z_1, q_1)$, $(p, P_2, \gamma_2, L_2, Z_2, q_2) \in \Delta$ where $P_1 \cap P_2 \ne \varnothing$, we have that $L_1 \ne L_2$, namely, the determinism only depends on the value of the next tuple no matter the values of the clocks.

\begin{lemma}
	Any timed CEA $\cT = (Q, \pset, \xset, \zset, \Delta, q_0, F)$ can be converted into an equivalent disjunction-free automaton.
\end{lemma}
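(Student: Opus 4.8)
The plan is to eliminate disjunctions one transition at a time by rewriting each clock condition in disjunctive normal form and then splitting the corresponding transition into several copies, one per disjunct. Fix a transition $\tau = (p, P, \gamma, L, Z, q) \in \Delta$. Since the grammar for clock conditions uses only the connectives $\land$ and $\lor$ over the atoms $\TRUE, z = c, z < c, z \leq c, z \geq c, z > c$ (and no negation), repeated application of the distributive law $\alpha \land (\beta \lor \beta') \equiv (\alpha \land \beta) \lor (\alpha \land \beta')$ lets me rewrite $\gamma$ as an equivalent condition $\gamma_1 \lor \cdots \lor \gamma_{m}$ in which each $\gamma_i$ is a conjunction of atoms and is thus disjunction-free. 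Here ``equivalent'' means that for every clock valuation $\nu$ we have $\nu \models \gamma$ if and only if $\nu \models \gamma_i$ for some $i$.

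First I would define $\cT' = (Q, \pset, \xset, \zset, \Delta', q_0, F)$ with exactly the same components as $\cT$ except for the transition relation: for each $\tau = (p, P, \gamma, L, Z, q) \in \Delta$ with disjunctive normal form $\gamma \equiv \bigvee_{i=1}^{m} \gamma_i$, I add $(p, P, \gamma_i, L, Z, q)$ to $\Delta'$ for every $i \in [m]$. By construction every condition appearing in $\Delta'$ is a conjunction of atoms, so $\cT'$ is disjunction-free. The crucial observation is that the split copies of $\tau$ agree with $\tau$ on the endpoints $p, q$, the predicate $P$, the marking $L$, and the reset set $Z$; they differ only in the guard.

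Then I would prove $\sem{\cT}(\Stream) = \sem{\cT'}(\Stream)$ for every timed stream $\Stream$ by exhibiting, for each run of one automaton, a run of the other that produces the same complex event. Given an accepting run $\rho$ of $\cT$, at each position $k$ it fires some $\tau_k = (p_k, P_k, \gamma_k, L_k, Z_k, p_{k+1})$ with $e_k \models P_k$ and $\nu_k + \delta t_k \models \gamma_k$. Because $\gamma_k \equiv \bigvee_i \gamma_{k,i}$, the valuation $\nu_k + \delta t_k$ satisfies some disjunct $\gamma_{k,i_k}$, so I may fire instead the split transition $(p_k, P_k, \gamma_{k,i_k}, L_k, Z_k, p_{k+1}) \in \Delta'$. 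Since this transition carries the same $P_k, L_k, Z_k$ and the same endpoints, conditions (1)--(4) in the definition of a run remain satisfied and the updated valuation $\nu_{k+1} = \treset_{Z_k}(\nu_k + \delta t_k)$ is unchanged; hence I obtain a run $\rho'$ of $\cT'$ over the same positions with the same labels, so $C_{\rho'} = C_\rho$, and $\rho'$ is accepting because $\cT$ and $\cT'$ share their final states. The converse is immediate: any split transition $(p, P, \gamma_i, L, Z, q)$ whose guard is satisfied by $\nu + \delta t$ also satisfies $\gamma \equiv \bigvee_i \gamma_i$, so each run of $\cT'$ maps back to a run of $\cT$ with the identical complex event. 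The two inclusions give equality of the output sets.

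The argument is entirely routine; the only point needing care is checking that replacing a transition by its disjunct-copies leaves conditions (3) and (4) of a run untouched, which holds precisely because all copies share the reset set $Z$ and the marking $L$ with the original transition, so only the guard changes and the guards together are logically equivalent to the original. I do not anticipate a genuine obstacle, though I would note that the normal-form rewriting can blow up the number of transitions exponentially in $|\gamma|$; since the lemma only asserts existence, this blow-up is harmless.
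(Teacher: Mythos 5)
Your proposal is correct and follows essentially the same route as the paper: rewrite each guard in disjunctive normal form and replace the transition by one copy per disjunct, observing that the copies share $P$, $L$, $Z$ and endpoints so the run semantics is preserved. You actually spell out the run-by-run correspondence in more detail than the paper does; the only thing the paper adds is a normalization of each resulting conjunction to the form $z \le \varepsilon_1 \land z \ge \varepsilon_2$ (with $\pm\infty$ bounds) for use in later constructions, which is not needed for the lemma itself.
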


\begin{proof}
	The first step is to convert every transition guard to disjunctive normal form (DNF) $\gamma = \gamma_1 \lor\dots\lor \gamma_k$ where each $\gamma_i$ is a conjunction. We then replace every transition $\tau = (q, P, \gamma, L, Z, q')$ by k transitions of the form $(q, P, \gamma_i, L, Z, q')$, for $i \in [1..k]$. Furthermore, every transition will have a clock constraint equivalent to $z_X \le \varepsilon_1 \land z_X \ge \varepsilon_2$, where $\varepsilon_1 \in \bbQplus \cup \{+\infty\}$ and $\varepsilon_2 \in \bbQplus \cup \{-\infty\}$, which we will use instead of the complex clock constraint. This assumption will be use later in other constructions where $z_X \leq +\infty$ or $-\infty \leq z_X$ are always true.
\end{proof}

From now on, in this section, we will only work with disjunction-free automata. To successfully convert a timed CEA into a timed CEL formula, we will start by converting a single-clock timed CEA into a timed CEL formula. This will take several steps, that we will go through next.

Let $\rho$ be a run over $\cT$, and $\tau_1\tau_2\dots \tau_n$ be the transitions taken in that run. A single clock timed CEA is k-check-bounded if for every consecutive transitions $\tau_i, \dots \tau_j$ with $k+1$ transitions where the clock condition is not trivial, then there exists $i \le l < j$ such that $\tau_l$ resets the clock.

Next, we prove that we are able to convert the single-clock timed CEA into a k-check-bounded timed CEA.

\begin{lemma}\label{lemma:check-bounded}
	For every single-clock timed CEA $\cT = (Q, \pset, \xset, \zset, \Delta, q_0, F)$, there exists an equivalent $k$-check-bounded timed CEA for some $k$.
\end{lemma}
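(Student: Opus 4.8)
The plan is to exploit the monotonicity of the single clock. Between two consecutive resets of $z$, the value of $z$ only increases as events arrive, so every atomic comparison behaves monotonically along a run. By the preceding lemma I may assume $\cT$ is disjunction-free, so each transition carries a clock condition equivalent to an interval constraint $z \ge a \wedge z \le b$ with $a,b$ ranging over the finitely many constants $c_1 < \dots < c_m$ occurring in $\cT$ (plus $\pm\infty$). I would first record these constants and the induced regions $\{0\},\,(0,c_1),\,\{c_1\},\dots,(c_m,\infty)$; there are at most $2m+2$ of them, and the truth value of any clock condition is constant on each region. The automaton $\cT'$ will have the same transition labels $L$ and reset sets $Z$ as $\cT$, and will differ only by rewriting the clock conditions and adding bookkeeping states; since $L$ and $Z$ are untouched, equivalence at the level of complex events reduces to showing the rewritten conditions accept exactly the same runs.

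Next I would handle the lower-bound atoms $z \ge a$. Because $z$ never decreases before a reset, once $z \ge a$ holds it holds forever, so this atom need only be verified the first time it is required. I augment the state with a \emph{certified region}: the largest region $z$ has been checked to have entered. This value only advances, through at most $2m+2$ values between resets. A transition whose lower-bound atom is already implied by the certified region is trivialized; otherwise the genuine check $z \ge a$ is performed and, on success, the certified region advances. Hence lower-bound checks between two resets are bounded by the number of regions.

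The delicate part is the upper-bound atoms $z \le b$, which I expect to be the main obstacle. Unlike a lower bound, the fact ``$z \le b$'' is not latching: it may hold now and fail later, so it cannot be cached across time-passing steps, and a naive translation would recheck it at every event spent in the same region. Here I would invoke monotonicity in the opposite direction: if $z \le b$ holds at the \emph{last} step of a run on which that atom is required, then it held at every earlier such step. The construction therefore \emph{defers} each upper-bound obligation: it guesses nondeterministically which occurrence of a $z \le b$ transition is the final one before the next reset, performs the single genuine check $z \le b$ there, and disables later $z \le b$ transitions (within the current reset interval) afterwards; every earlier occurrence is passed optimistically with a trivial condition. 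Soundness holds because the one deferred check retroactively validates all earlier optimistic passes, and completeness holds by taking the branch that guesses the genuinely last occurrence. Since there is one deferred check per constant, upper-bound checks between resets are also bounded.

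Combining the two mechanisms, every non-trivial check in a run of $\cT'$ either advances the certified region or is the unique deferred check for some constant, so the number of non-trivial checks between consecutive resets — and in particular the number of consecutive ones — is at most $k := 3m+2$, a constant depending only on $\cT$. It follows that $\cT'$ is $k$-check-bounded, and the monotonicity argument above shows $\sem{\cT'} = \sem{\cT}$. The only real work is this last equivalence, namely verifying that the optimistic-pass/deferred-check bookkeeping neither loses a valid run nor admits a spurious one.
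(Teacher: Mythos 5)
Your proposal is correct and follows essentially the same route as the paper's proof: exploiting that the single clock is non-decreasing between resets, you latch lower-bound atoms via a monotonically advancing certified threshold and defer upper-bound atoms to one later genuine check that retroactively validates the optimistically skipped occurrences, which is exactly the paper's construction (it stores in the state the last checked bound and the minimum skipped bound, and requires the skipped register to be discharged before acceptance). The only cosmetic difference is that you defer one check per constant at its last occurrence, whereas the paper checks the minimum outstanding constant at its last occurrence, so that a single register suffices; both yield a bound of $O(m)$ non-trivial checks between resets.
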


\begin{proof}
	The main idea behind this demostration is the following: Let $\rho$ be a run of $\cT$ over $w = w_1 \dots w_n$. Let $i$ be an index where the clock $z_X$ is resetted, and $j$ be the next index where it is resetted. Let $z_X \le \varepsilon_1, \dots, z_X \le \varepsilon_m$ be the constraints checked, in order of appearance. If $\varepsilon_k \le \varepsilon_{k'}$, with $k' < k$, then a run that satisfies $\varepsilon_k$ also satisfies $\varepsilon_{k'}$. This is because the clocks can only increase their value and there are no resets. Then, we can do the following procedure: We define $m_1 = \min \{\varepsilon_i \mid i \in [1, m]\}$, the lower clock constraint. We then define $p_1 = \max\{i \mid \varepsilon_i = m_1\}$, the latest appeareance of $\varepsilon_1$. This is the first clock constraint that we will keep in our updated automaton, because it being true implies all of the previous clock constraints. For our next clock constraint, we will define $m_2 = \min \{\varepsilon_i \mid i \in (p_1, m]\}$, the lower clock constraint which is checked after the index $p_1$, and $p_2 = \max \{i \mid \varepsilon_i = m_2\}$. This will be the second clock constraint that we will check in our run, because it is the next one that provides information that is not redundant. This way, we can continue to trim our clock constraints, finally getting a run that has a limited number of constraint checks before a next clock reset.
	
	To build our equivalent automaton, we need to define our sets of clock constraints:
	
	\begin{itemize}
		\item $C_\leq = \{\varepsilon \mid z_X < \varepsilon \text{ or } z_X \le \varepsilon \text{ is a subformula of a clock condition in $\Delta$}\}$
		\item $C_\geq = \{\varepsilon \mid z_X > \varepsilon \text{ or } z_X \ge \varepsilon \text{ is a subformula of a clock condition in $\Delta$}\}$
	\end{itemize}
	
	Note that equal clock conditions are not added. That is because a clock condition $z_X = c$ is treated as a conjunction of clock conditions, $z_X \le c \land z_X \ge c$. That being defined, we will construct the automaton. $\cT_{cb} = (Q', \pset, \xset, \zset, \Delta', q_0', F')$, with: 
	
	\begin{itemize}
		\item $Q' = Q \times (C_{\leq} \cup \{-\infty\}) \times \{<, \le\} \times (C_{\leq} \cup \{\infty\})\times \{<, \le\} \times (C_{\geq} \cup \{-\infty\}) \times \{>, \ge\}$
		\item $q_0' = (q_0, -\infty, <, \infty, \le, -\infty, \ge)$, and
		\item $F' = F \times (C_{\leq}\cup \{-\infty\}) \times \{<, \le\} \times \{\infty\} \times \{\le\} \times (C_{\geq}\cup \{-\infty\})\times \{>, \ge\}$
	\end{itemize}
	
	Each state is added three clock condition bounds, where the first corresponds to the last checked lower-equal clock condition, the second corresponds to the last skipped lower-equal clock condition and the third corresponds to the last checked greater-equal clock condition. We add these guards so we can skip the redundant transitions. The transitions next are for the greater equal and lower equal cases separately, but it is easy to merge them into one automaton. The full proof is omitted because of readability.
	
	We define the operation $(\varepsilon_1, \sim_1) < (\varepsilon_2, \sim_2)$ if either $\varepsilon_1 < \varepsilon_2$, or $\varepsilon_1 = \varepsilon_2$, $\sim_1$ is $<$ and $\sim_2$ is $\le$.	
	Also, $(\varepsilon_1, \sim_1) > (\varepsilon_2, \sim_2)$ if either $\varepsilon_1 > \varepsilon_2$, or $\varepsilon_1 = \varepsilon_2$, $\sim_1$ is $>$ and $\sim_2$ is $\ge$.
	
	We will begin defining $\Delta_\le$. The transitions it contains take six forms. The first ones are transitions of the form $((p, \varepsilon_c, \sim_c, \varepsilon_m, \sim_m), P, \true, L, \varnothing, (q, \varepsilon_c, \sim_c, \varepsilon_{m}', \sim_m'))$, where $(p, P, z_X \le \varepsilon, L, \varnothing, q) \in \Delta$, $(\varepsilon_{c}, \sim_c) < (\varepsilon, \le)$, and $(\varepsilon_{m}', \sim_m') = \min \{(\varepsilon, \le), (\varepsilon_{m}, \sim_m)\}$.
	The second kinds of transitions are transitions of the form $((p, \varepsilon_c, \sim_c, \varepsilon_m, \sim_m), P, \true, L, \varnothing, (q, \varepsilon_c, \sim_c, \varepsilon_{m}', \sim_m'))$, where $(p, P, z_X < \varepsilon, L, \varnothing, q) \in \Delta$, $\varepsilon_{c} < \varepsilon$, and $(\varepsilon_{m}', \sim_m') = \min \{(\varepsilon, <), (\varepsilon_{m}, \sim_m)\}$. These types of transitions skip the clock constraint, but save it so that a later one implies it. 
	The third are of the form $((p, \varepsilon_{c}, \sim_c, \varepsilon_{m}, \sim_m), P, z_X \le \varepsilon, L, \varnothing, (q, \varepsilon, \le, \infty, \le))$, where $(p, P, z_X \le \varepsilon, L, \varnothing, q) \in \Delta$, where $(\varepsilon_{c}, \sim_c) < (\varepsilon, \le)$, $(\varepsilon, \le) \le (\varepsilon_m, \sim_m)$.
	The fourth are of the form $((p, \varepsilon_{c}, \sim_{c}, \varepsilon_{m}, \sim_{m}), P, z_X < \varepsilon, L, \varnothing, (q, \varepsilon, <, \infty, \le))$, where $(p, P, z_X < \varepsilon, L, \varnothing, q) \in \Delta$, where $\varepsilon_{c} < \varepsilon \le \varepsilon_{m}$. These types of transitions force the clock constraint, ensuring that it implies the skipped constraints and clear the skipped constraint. 
	The fifth ones are of the form $((p, \varepsilon_{c}, \sim_c, \varepsilon_{m}, \sim_{m}), P, z_X \le \varepsilon, L, \{z_X\}, (q, -\infty, <, \infty, \le))$, where $(p, P, z_X \le \varepsilon, L, \{z_X\}, q) \in \Delta$, $(\varepsilon_{c}, \sim_c) < (\varepsilon, \le)$, $(\varepsilon, \le) \le (\varepsilon_m, \sim_m)$. 
	The sixth ones are of the form $((p, \varepsilon_{c}, \sim_c, \varepsilon_{m}, \sim_m), P, z_X < \varepsilon, L, \{z_X\}, (q, -\infty, <, \infty, \le))$ where $(p, P, z_X < \varepsilon, L, \{z_X\}, q) \in \Delta$, where $\varepsilon_{c} < \varepsilon\le \varepsilon_{m}$. These types of transitions are the same than the previous ones, but also resets the clock and restores the stored constraints.
	
	Now we will define $\Delta_\ge$. The transitions it contains take eight forms. The first are transitions of the form $((p, \varepsilon_{c}, \sim_c), P, z_X \ge \varepsilon, L, \varnothing, (q, \varepsilon, \ge))$, where $(p, P, z_X \ge \varepsilon, L, \varnothing, q) \in \Delta$ and $\varepsilon > \varepsilon_{c}$.
	The second are transitions of the form $((p, \varepsilon_{c}, \sim_c), P, z_X > \varepsilon, L, \varnothing, (q, \varepsilon, >))$, where $(p, P, z_X > \varepsilon, L, \varnothing, q) \in \Delta$, $(\varepsilon, >) > (\varepsilon_{c}, \sim_c)$. These types of transitions check the constraint and store it so that the next constraint is forced to be greater.
	The third are transitions of the form $((p, \varepsilon_{c}, \sim_c), P, \true, L, \varnothing, (q, \varepsilon_{c}, \sim_c))$, where $(p, P, z_X \ge \varepsilon, L, \varnothing, q) \in \Delta$, and $\varepsilon\le \varepsilon_{c}$.
	The fourth are transitions of the form $((p, \varepsilon_{c}, \sim_{c}), P, \true, L, \varnothing, (q, \varepsilon_{c}, \sim_c))$, where $(p, P, z_X > \varepsilon, L, \varnothing, q) \in \Delta$, and $(\varepsilon, >) \le (\varepsilon_{c}, \sim_c)$. These types of transitions skip the clock constraint, at it is already implied by the previous checks.
	The fifth are transitions of the form $((p, \varepsilon_{c}, \sim_c), P, z_X \ge \varepsilon, L, \{z_X\}, (q, - \infty, \ge))$, where $(p, P, z_X \ge \varepsilon, L, \{z_X\}, q) \in \Delta$ and $\varepsilon > \varepsilon_{c}$.
	The sixth are transitions of the form $((p, \varepsilon_{c}, \sim_c), P, z_X > \varepsilon, L, \{z_X\}, (q, - \infty, \ge))$, where $(p, P, z_X > \varepsilon, L, \{z_X\}, q) \in \Delta$, $(\varepsilon, >) > (\varepsilon_{c}, \sim_c)$.
	The seventh are transitions of the form $((p, \varepsilon_{c}, \sim_c), P, \true, L, \{z_X\}, (q, - \infty, \ge))$, where $(p, P, z_X \ge \varepsilon, L, \{z_X\}, q) \in \Delta$, and $\varepsilon\le \varepsilon_{c}$.
	The eighth are transitions of the form $((p, \varepsilon_{c}, \sim_{c}), P, \true, L, \{z_X\}, (q, - \infty, \ge))$, where $(p, P, z_X > \varepsilon, L, \{z_X\}, q) \in \Delta$, and $(\varepsilon, >) \le (\varepsilon_{c}, \sim_c)$. These are analogous to the previous transitions, but also reset the saved constraint because of the clock reset.
	
	One can check that $\cT_{cb}$ is $k$-check bounded with $k \leq |C_\leq| + |C_\geq|$ and the proof of correctness follows from the construction.
\end{proof}

By Lemma~\ref{lemma:check-bounded}, in the sequel we will only work with \emph{k-check-bounded} timed CEA when they are single-clock. Next, we will introduce the notion of \emph{reset-distinct}. A reset-distinct timed CEA is a timed CEA that does not have transitions that reset a clock and check a clock condition at the same time. We will show that any reset-distinct single-clock timed CEA can be converted into a formula in CEL. 
\begin{lemma}
	A reset-distinct and $k$-check bounded single-clock timed CEA $\cT$ has an equivalent formula in timed CEL $\phi$.
\end{lemma}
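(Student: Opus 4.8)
The plan is to mimic the Kleene/state-elimination proof that finite automata and regular expressions coincide, but carried out on the \emph{stripped} (clock-free) structure of $\cT$, while re-introducing the clock checks as \emph{time windows} $\twin{\cdot}{I}$. The key observation is that in a reset-distinct single-clock automaton the value of $z_X$ at any position equals the time elapsed since its last reset; hence a check $z_X \sim c$ performed at position $p$, when $z_X$ was last reset at position $r$, is exactly the constraint $t_p - t_r \sim c$, which is a time-window constraint on the sub-complex-event spanning $[r..p]$. I would first record two structural facts: (i) a run cannot check the clock before it has been reset (the guard fails on an undefined clock), so the portion of any run before its first reset is both reset-free and check-free; and (ii) by reset-distinctness every transition is either a reset (with trivial guard) or a pure check (no reset), which lets me treat resets as the only points where the time reference changes.

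Next I would build the inner \emph{reset-free} formulas. For states $p,q$ and $0 \le m \le k$, define $\mathrm{RF}^{(m)}_{p,q}$ to capture every complex event produced by a reset-free run from $p$ to $q$ performing exactly $m$ checks, all enforced as if $z_X$ had been reset on entry to $p$. The base $\mathrm{RF}^{(0)}_{p,q}$ (equivalently the check-free $\mathrm{CF}_{p,q}$) ranges over runs using only trivial-guard, non-reset transitions; these involve no clock at all, so they are definable by the classical state-elimination construction with the contiguous operators $\ssq$ and $\sks$, each single transition $(p,P,\TRUE,L,\varnothing,q)$ being rendered as an atomic filtered, variable-marked formula. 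For the inductive step I split off the \emph{last} check: for every check transition $\tau=(p',P,z_X\sim c,L,\varnothing,q')$ I form
\[
\twin{\ \mathrm{RF}^{(m-1)}_{p,p'} \ssq a_\tau\ }{\ I_{\sim c}\ }\ \ssq\ \mathrm{CF}_{q',q},
\]
where $a_\tau$ is the atom for $\tau$ with its guard stripped, $I_{\sim c}=\{t\in\bbQplus\mid t\sim c\}$, and $\mathrm{RF}^{(m-1)}_{p,p'}$ already enforces checks $1,\dots,m-1$; the disjunction over all such $\tau$ defines $\mathrm{RF}^{(m)}_{p,q}$. The crucial point that avoids needing an absorbing concatenation is that contiguous sequencing keeps the \emph{start} of the combined complex event fixed at the entry of $p$, so each freshly added window $\twin{\cdot}{I_{\sim c}}$ measures from $t_{\cstart}=t_r$ up to the current check position — exactly the intended clock reference — and acts purely as an extra filter that does not disturb the already-enforced inner windows. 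Here $k$-check boundedness is essential: it caps the number of checks per reset-free segment by $k$, so the induction on $m$ terminates and the nesting depth (hence the formula) is finite.

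Finally I would assemble the full query by an outer state-elimination whose ``letters'' are whole reset-free segments glued by resets. Treating the states reached immediately after a reset (together with $q_0$) as boundary states, each boundary-to-boundary edge is summarised as a reset-transition atom followed (via $\ssq$) by an inner formula $\mathrm{RF}^{(\le k)}_{\cdot,\cdot}$ (the disjunction over $m\le k$), while the segment before the first reset is the clock-free $\mathrm{CF}_{q_0,\cdot}$. Running the classical construction over this finite boundary graph and taking the disjunction over final states yields a single formula $\phi$ with $\sem{\phi}(\Stream)=\sem{\cT}(\Stream)$; by construction $\phi$ uses only $\twin{\cdot}{I}$, $\ssq$, $\sks$, $\kOR$, $\kAS$, $\kFILTER$, and atoms, so it avoids $\sq$, $\ks$ and all timed sequencing/iteration. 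I expect the main obstacle to be exactly the verification at the heart of the inductive step: showing that the nested time windows independently and soundly impose each clock check relative to the common reset point, and that re-imposing a guard via an outer window after stripping it from the transition reproduces precisely the original automaton's complex events — this is where reset-distinctness (clean separation of checks and resets) and $k$-check boundedness (finiteness of the nesting) both do their work.
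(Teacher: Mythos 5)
Your overall strategy coincides with the paper's: both build the check-free/reset-free segments by a Kleene-style state-elimination with $\ssq$ and $\sks$, both peel off the \emph{last} clock check of a reset-anchored segment and enforce it with a nested time window $\twin{\cdot}{I}$ (inducting on the number of checks, which $k$-check-boundedness makes finite), and both glue the reset-anchored segments together with an outer Kleene construction and handle the pre-first-reset prefix as a purely check-free piece. So the decomposition is not the issue.

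The concrete problem is where you anchor the windows. You define $\mathrm{RF}^{(m)}_{p,q}$ over reset-free runs \emph{starting at state $p$}, i.e.\ over complex events whose first position is $r+1$ when the reset transition consumed event $r$, and you attach the reset-transition atom only afterwards, outside the windows, via $\ssq$. By the run semantics, a check at position $j$ after a reset at position $r$ tests $t_j - t_r \sim c$ (the clock becomes $0$ after reading $e_r$ and accumulates $\delta t_{r+1} + \cdots + \delta t_j$), whereas your window $\twin{\mathrm{RF}^{(m-1)}_{p,p'}\ssq a_\tau}{I_{\sim c}}$ tests $t_j - t_{\cstart}$ with $t_{\cstart} = t_{r+1}$, not $t_r$; your claim that ``$t_{\cstart}=t_r$'' is false under your own decomposition, and the two conditions differ by $t_{r+1}-t_r>0$, so the resulting formula is not equivalent to $\cT$. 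The paper avoids this by putting the reset transition \emph{inside} the windowed segment (its $R^0_{q_iq_j}$ begins with the reset-transition atom, so every subsequently wrapped window starts exactly at the reset position); the fix to your construction is the same. A second, smaller omission: in $\twin{\cdots}{I_{\sim c}} \ssq \mathrm{CF}_{q',q}$ you need a separate case for runs that \emph{end} on the check transition, since CEL has no empty formula to play the role of $\mathrm{CF}_{q',q}$ there; the paper handles this by splitting the inductive case into a ``check is last'' and a ``check is not last'' disjunct.
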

\begin{proof}
	We will create a timed CEL formula from the timed CEA described earlier. Let $\cT = (Q, \pset, \xset, \zset, \Delta, q_0, F)$ be a reset-distinct and $k$-check bounded timed CEA. We will introduce the following variables.
	First, we have the variable $S_{pq}^O$, that creates an expression in timed CEL starting from state $p$ and going to state $q$, passing only through states in $O \subseteq Q$, without any resets or clock-checks. The expressions are build recursively from the following formulas. The base case, for just one transition, $S_{q_iq_j}^{\varnothing} = \kOR_{(q_i, P, \true, L, \varnothing, q_j)\in \Delta}\PROJ{L}(P \kAS L)$, and the inductive case, $S_{q_iq_j}^{O \cup \{q_l\}} = S_{q_iq_l}^O \ssq S_{q_lq_l}^O\sks \ssq S_{q_lq_j}^O \kOR S_{q_iq_l}^O \ssq S_{q_lq_j}^O \kOR S_{q_iq_j}^O$. Here, $P \kAS L$ is an abuse of notation, denoting the use of operator $\kAS$ with every variable in $L$.
	
	Next, we introduce the formula for the subsets of the automaton that start from state $p$, pass first through with a transition that has a reset, and then pass through $h$ time windows, $R^h_{pq}$, arriving to state $q$. The base case ($0$ time windows) is $R_{q_iq_j}^0 = \kOR_{(q_i, P, \true, L, \{z\}, q_l)\in \Delta} \PROJ{L}(P \kAS L)\ssq S_{q_lq_j}^Q$, and the inductive case is divided into the cases that pass through $h+1$ time windows and end with a non-time-window transition, $R_{n,q_iq_j}^{h+1} = \kOR_{(q_l, P, z \in I, L, \varnothing, q_l)\in \Delta} \left ( \twin{R_{q_iq_l}^{h} \ssq \PROJ{L}(P \kAS L)}{I} \ssq S_{q_lq_j}^{Q} \right ) $ and into the cases that pass through $h+1$ time windows and end with a time-window transition, $R_{t,q_iq_j}^{h+1} = \kOR_{(q_l, P, z \in I, L, \varnothing, q_j)\in \Delta} \left ( \twin{R_{q_iq_l}^{h} \ssq \PROJ{L}(P \kAS L)}{I} \right )$. Then, $R_{q_iq_j}^{h+1} = R_{n,q_iq_j}^{h+1} \kOR R_{t,q_iq_j}^{h+1}$.
	
	Next, we introduce the formula for one or more of the previously introduced reset-starting parts of the automaton. The formula $F_{pq}^O$ denotes the contiguous reset-starting parts of the automata starting in $p$, ending in $q$ and having as start or end states of each subpart only states in $O$. The base case is $F_{q_iq_j}^\varnothing = \kOR_{h=0}^{k}R_{q_iq_j}^{h}$, the union of every combination of time windows, and the inductive case is $F_{q_iq_j}^{O \cup \{q_l\}} = F_{q_iq_l}^O \ssq F_{q_lq_l}^O\sks \ssq F_{q_lq_j} \kOR F_{q_iq_l}^O \ssq F_{q_lq_j}\kOR F_{q_iq_j}^O$, similar as the $S_{pq}^O$ formula. Finally, the formula $\varphi$ for the full automaton is stated as 
	\[
	\varphi = \kOR_{q_l\in Q} \left (\kOR_{q_f \in F} \left (S_{q_0q_l}^Q \ssq F_{q_lq_f}\right ) \right ) \kOR \kOR_{q_f \in F}\left ( S_{q_0q_f} \kOR F_{q_0q_f}\right )
	\]
\end{proof}

Now that we know that any reset-distinct single-clock timed CEA can be converted into a CEL formula, we will convert the \emph{timed CEA} into a conjunction of reset-distinct timed CEAs. This transformation will allow us to easily convert the \emph{timed CEA} into a timed CEL formula.

\begin{lemma}\label{lemma:reset-distinct-AND}
	Let $\cT = (Q, \pset, \xset, \{z\}, \Delta, q_0, F)$ be a single-clock timed CEA, where $\xset'$ is the set of variables used by $\cT$. There exist two reset-distinct timed CEAs $\cT_1$ and $\cT_2$ such that, if $\phi_1$ and $\phi_2$ are the equivalent timed CEL formulas of $\cT_1$ and $\cT_2$, respectively, then $L(\PROJ{\xset'}(\phi_1\kAND \phi_2)) = L(\cT)$.
\end{lemma}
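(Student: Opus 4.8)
The plan is to split the single clock $z$ of $\cT$ into two copies, one living in $\cT_1$ and one in $\cT_2$, and make the two automata take turns \emph{owning} the active interval between consecutive resets. First I would apply Lemma~\ref{lemma:check-bounded} so that $\cT$ is $k$-check-bounded, and then colour the resets of any run alternately: the $m$-th reset receives colour $1$ if $m$ is odd and colour $0$ if $m$ is even. I give both $\cT_1$ and $\cT_2$ the state set $Q \times \{0,1\}$, where the bit $b$ records the colour of the most recent reset (with $b=0$ before the first reset), and I force the two automata to follow the \emph{same} transition sequence of $\cT$ by tagging each transition $\tau \in \Delta$ with a fresh variable $X_\tau$: every simulated transition emits its original label together with $X_\tau$. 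Since the $\kAND$-product requires identical labels and the update of $b$ is determined by whether the common transition resets, the two automata stay in lockstep on both the $\cT$-state and the colour $b$. All the $X_\tau$ are then removed by $\PROJ{\xset'}$, recovering the original marking on $\xset'$.

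Next I describe how each $\cT_i$ treats a transition $\tau = (p,P,\gamma,L,Z,q)$ read from a state of colour $b$, so as to be reset-distinct. If $Z=\emptyset$ and $\gamma = \TRUE$, both automata copy $\tau$. If $Z=\emptyset$ and $\gamma$ is a nontrivial check, the check concerns the current interval and is performed only by the automaton whose clock is live, i.e. $\cT_1$ checks $\gamma$ when $b=1$ and $\cT_2$ checks it when $b=0$, while the other automaton replaces $\gamma$ by $\TRUE$. If $Z = \{z\}$ (a reset, possibly carrying a check), the upcoming reset has colour $1-b$: the automaton of that colour performs the reset with condition $\TRUE$, while the other automaton, whose clock is still live on the current interval, performs the (possibly trivial) check $\gamma$ and does \emph{not} reset. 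In every case each automaton either resets without checking or checks without resetting, so $\cT_1$ and $\cT_2$ are reset-distinct; they are single-clock by construction, and since each performs only the checks of one $\cT$-interval between its own (sparser) resets, both remain $k$-check-bounded, so the preceding lemma yields equivalent timed CEL formulas $\phi_1 \equiv \cT_1$ and $\phi_2 \equiv \cT_2$.

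The correctness argument reduces, by the semantics of $\kAND$ and $\PROJ{\xset'}$ together with $\sem{\phi_i}=\sem{\cT_i}$, to proving $\PROJ{\xset'}\big(\sem{\cT_1}(\Stream)\cap\sem{\cT_2}(\Stream)\big)=\sem{\cT}(\Stream)$. The key is a clock-alignment claim: along synchronized runs, the value of $\cT_1$'s clock equals the time since the last reset whenever $b=1$, and likewise for $\cT_2$ when $b=0$; this holds because an automaton traverses the other's resets without resetting, and the next reset of its own colour occurs only after the current interval ends. Given this, for $\supseteq$ I take an accepting run $\tau_i\cdots\tau_j$ of $\cT$ producing $C$, tag position $k$ with $X_{\tau_k}$, and run $\cT_1,\cT_2$ along the colour-driven split; the alignment claim shows every clock condition of $\cT$ is verified by exactly one automaton, so the tagged complex event $C'$ lies in $\sem{\cT_1}\cap\sem{\cT_2}$ with $\pi_{\xset'}(C')=C$. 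For $\subseteq$, any $C'\in\sem{\cT_1}\cap\sem{\cT_2}$ records, through its $X_\tau$-components (whose position sets partition $[\cstart(C')..\cend(C')]$), a single transition sequence of $\cT$; this sequence is an accepting $\cT$-path whose odd-owned checks hold via the $\cT_1$-run and whose even-owned checks hold via the $\cT_2$-run, hence it is an accepting run of $\cT$ producing $\pi_{\xset'}(C')$.

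I expect the main obstacle to be the clock-alignment claim together with guaranteeing that the two automata commit to one common run: the alternating-colour bookkeeping must ensure each clock measures exactly the elapsed time demanded by the checks it is responsible for, even though each automaton is forced to pass through the other's reset without resetting. The $X_\tau$-tagging is what prevents the two automata from drifting onto different reset patterns that would break the alignment; verifying that this tagging indeed forces lockstep, and that projecting it away restores precisely $\sem{\cT}$, is the delicate step.
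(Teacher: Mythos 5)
Your proposal is correct and follows essentially the same route as the paper: tag each transition with a fresh variable to force the two automata into lockstep on a single run of $\cT$ (undone afterwards by $\pi_{\xset'}$), then split the state space into $Q\times\{0,1\}$ so that the two automata alternate ownership of the inter-reset intervals, each one resetting with a trivial guard at the other's reset positions and performing the checks only on the intervals it owns. The only differences are cosmetic (you invoke $k$-check-boundedness before the split rather than observing it is preserved afterwards, and you tag per transition rather than per outgoing transition of a state), and your explicit clock-alignment claim fills in the equivalence argument the paper leaves as "straightforward."
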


\begin{proof}
	First we will create a new strongly-deterministic timed CEA $\cT'$ based on automaton $\cT$. We will do so by replacing any transition $(q, P, \gamma, L, Z, q')$ by $(q, P, \gamma, L\cup \{X_i\}, Z)$, where $X_i$ is a new variable, choosing a different $i$ for every transition going out of the same state $q$. This new automaton is not equivalent to the initial automaton, but we can obtain it easily by applying the operation $L(\PROJ{\xset'}(\phi')) = L(\cT)$, where $\phi'$ is the formula for automaton $\cT'$ and $\xset'$ is the set of variables of the initial automaton.
	
	Having our new strongly-deterministic timed CEA $\cT'$, we will separate it into two different timed CEAs $\cT_1$ and $\cT_2$ with respective formulas $\phi_1$ and $\phi_2$, such that $L(\phi_1 \kAND \phi_2) = L(\phi')$. 
	The idea is to separate the resets of $\cT$ between even and odd resets during a run. Then $\cT_1$ will do the resets and checks at even positions and $\cT_2$ will do the same but at odd positions. 
	The first automaton is $\cT_1 = (Q\times \{1, 2\}, \pset, \xset, \{z\}, \Delta_1, (q_0, 1), F \times \{1, 2\})$, where:
	\[
	\begin{array}{rclc}
		\Delta_1 & = & \{((p, 1), P, \gamma, L, \varnothing, (q, 2))\mid (p, P, \gamma, L, \{z\}, q) \in \Delta\} &  \cup \\
		& & \{((p, 2), P, \true, L, \{z\}, (q, 1))\mid (p, P, \gamma, L, \{z\}, q) \in \Delta\} & \cup \\
		& & \{((p, 1), P, \gamma, L, \varnothing, (q, 1))\mid (p, P, \gamma, L, \varnothing, q) \in \Delta\} & \cup \\
		& &  \{((p, 2), P, \true, L, \varnothing, (q, 2))\mid (p, P, \gamma, L, \varnothing, q) \in \Delta\}.
	\end{array}
	\] 
	This automaton only checks time-windows from resets that are in an even position on the run.
	
	The second automaton is $\cT_2 = (Q\times \{1, 2\}, \pset, \xset, \{z\}, \Delta_2,(q_0, 1), F \times \{1, 2\})$, where:
	\[
	\begin{array}{rclc}
		\Delta_2 & = & \{((p, 1), P, \true, L, \{z\}, (q, 2))\mid (p, P, \gamma, L, \varnothing, q) \in \Delta\} & \cup \\
		& & \{((p, 2), P, \gamma, L, \varnothing, (q, 1))\mid (p, P, \gamma, L, \{z\}, q) \in \Delta\} & \cup \\
		& & \{((p, 1), P, \true, L, \varnothing, (q, 1))\mid (p, P, \gamma, L, \varnothing, q) \in \Delta\} & \cup \\
		& &  \{((p, 2), P, \gamma, L, \varnothing, (q, 2))\mid (p, P, \gamma, L, \varnothing, q) \in \Delta\}. 
	\end{array}
	\]
	This automaton only checks time-windows from resets that are in an odd position on the run. These new timed CEAs are reset-distinct because there are no transitions that have a clock reset and a non-trivial clock condition, and the proof of equivalence is straightforward.
\end{proof}

It is important to note that the resulting reset-distinct automata are still k-check-bounded, strongly deterministic, have a single clock, and are disjunction free, because the transformations do not alter these properties.

The previous lemmas allow us to introduce the following lemma on transforming timed CEAs to timed CEL formulas:

\begin{lemma}
	Any single-clock timed CEA $\cT$ can be converted into an equivalent timed CEL formula.
\end{lemma}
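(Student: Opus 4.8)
The plan is to assemble the preceding lemmas into a single pipeline that gradually simplifies $\cT$ until it becomes expressible in timed CEL. First I would apply the disjunction-free lemma above to replace $\cT$ by an equivalent disjunction-free single-clock timed CEA; the construction splits each guard into its DNF disjuncts and leaves the clock set untouched, so the result is still single-clock. Next I would invoke Lemma~\ref{lemma:check-bounded} to obtain an equivalent $k$-check-bounded single-clock timed CEA $\cT^{cb}$ for some $k$; one checks that the state-annotated construction of that lemma only manipulates single clock conditions of the form $z_X \sim \varepsilon$ and never reintroduces disjunctions, so $\cT^{cb}$ remains disjunction-free and single-clock while now being $k$-check-bounded.

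At this point $\cT^{cb}$ satisfies every hypothesis needed to apply Lemma~\ref{lemma:reset-distinct-AND}. Applying it yields two reset-distinct single-clock timed CEAs $\cT_1$ and $\cT_2$, obtained by first marking every outgoing transition of each state with a fresh auxiliary variable (to make the automaton strongly deterministic) and then distributing the resets of a run across even and odd positions. As noted immediately after Lemma~\ref{lemma:reset-distinct-AND}, this decomposition preserves the $k$-check-bounded, strongly deterministic, single-clock, and disjunction-free properties, so both $\cT_1$ and $\cT_2$ meet the hypotheses of the reset-distinct-to-CEL lemma. I would therefore apply that lemma to each $\cT_i$, obtaining timed CEL formulas $\phi_1$ and $\phi_2$ with $\sem{\cT_i}(\Stream) = \sem{\phi_i}(\Stream)$ for every stream $\Stream$.

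Finally, writing $\xset'$ for the set of variables originally used by $\cT$, I would take $\PROJ{\xset'}(\phi_1 \kAND \phi_2)$ as the desired equivalent of $\cT$. Its correctness is exactly the conclusion of Lemma~\ref{lemma:reset-distinct-AND}: chaining the equivalences gives $\sem{\PROJ{\xset'}(\phi_1 \kAND \phi_2)}(\Stream) = \sem{\cT^{cb}}(\Stream) = \sem{\cT}(\Stream)$ for every $\Stream$, which establishes the claim.

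I expect the main obstacle not to lie in this assembly, which is routine once the pieces are in place, but in justifying that each transformation preserves the invariants consumed by the next lemma — in particular that $k$-check-boundedness survives both the disjunction-free normalization and the reset-distinct split. The genuinely delicate point is the correctness of the conjunction-plus-projection step underlying Lemma~\ref{lemma:reset-distinct-AND}: since $\kAND$ is defined as intersection of complex events rather than of runs, one must use the strong determinism (the fresh per-transition variables) to guarantee that a complex event of $\phi_1 \kAND \phi_2$ pins down a unique run of $\cT^{cb}$, so that the even-position clock checks verified by $\cT_1$ and the odd-position clock checks verified by $\cT_2$ combine to certify exactly the runs of $\cT^{cb}$; the outer projection $\PROJ{\xset'}$ then discards the auxiliary variables to recover the original output.
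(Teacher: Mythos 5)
Your proposal is correct and follows exactly the pipeline the paper intends (the paper itself only remarks that ``the proof is straightforward'' after establishing the same chain of lemmas): disjunction-free normalization, then $k$-check-bounding, then the even/odd reset split of Lemma~\ref{lemma:reset-distinct-AND}, conversion of each reset-distinct automaton to timed CEL, and finally $\pi_{\xset'}(\phi_1 \kAND \phi_2)$. Your explicit check that each transformation preserves the invariants consumed by the next lemma matches the paper's own remark that the reset-distinct split keeps the automata $k$-check-bounded, strongly deterministic, single-clock, and disjunction-free.
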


The proof is straightforward. We will now transform the several-clock timed CEA into a timed CEL formula. Finally, using all the previous lemmas, we will prove the Proposition~\ref{pro:cea2cel}.

\begin{proof}[Proof of Proposition~\ref{pro:cea2cel}]
	We begin by transforming the timed CEA $\cT$ into a strongly deterministic automaton $\cT'$, by using the same method introduced earlier in Lemma~\ref{lemma:reset-distinct-AND}. Note that if $\semaux{\cT} = \semaux{\phi_\cT}$ and $\semaux{\cT'} = \semaux{\phi_{\cT'}}$, then $\semaux{\PROJ{\xset'}(\phi_{\cT'})} = \semaux{\phi_\cT}$, where $\xset'$ are the variables used in the automaton $\cT$. To be able to transform the automaton into a timed CEL formula, we will separate the new automaton $\cT'$ into several single-clock timed CEA $\cT_1'\dots \cT_n'$, such that the ``conjunction'' of $\sem{\cT_1'}, \dots, \sem{ \cT_n'}$ (i.e., $\kAND$) is equivalent to $\sem{\cT'}$. This can be done by separating the automaton into $n$ single-clock timed CEA, one for each different clock. Each automaton $\cT_i'$ will change all checks over a clock different to $z_i$ by $\TRUE$, leaving only the checks using the clock $z_i$. It is clear that the intersection of every automata has the same semantics as $\cT'$.
	
	We know that each single-clock automaton $\cT_1, \dots \cT_n$ can be converted into a timed CEL formula $\varphi_1, \dots \varphi_n$. So finally, the timed CEL formula for the full automaton will be:
	\[
	\varphi = \PROJ{\xset'}(\kAND_{i=1}^{n} \ \ \varphi_i)
	\]
	Note that the operators used for the construction of the formula are $\kOR$, $\PROJ{L}$, $\ssq$, $\sks$, $\kAS$, $\twin{...}{I}$ and $\kAND$. This does not use any of the operators in $\{;, +, \timesq{\sim c}, \timessq{I}, \timeks{I}, \timesks{I}\}$. 
\end{proof} 	
	\section{Proofs from Section~\ref{sec:determinization}} \label{app:determinization}

\subsection*{Proof of Theorem~\ref{theo:determinization}}
First we will will introduce a new concept, analogous to the definition proposed in \cite{GrezRUV21}. Let $\cT= (Q, \pset, \xset, \zset, \Delta, q_0, F)$ be a timed CEA, and let $\pset_\Delta = \{P_1, \dots, P_n\}$ be the set of all predicates used in transitions of the automaton. Then, for $S \subseteq [1..n]$, we define the new CEA predicate $P_S = \bigcap_{i \in S} P_i \cap \bigcap_{i \notin S} \eset \backslash {P_i}$. We also name $\etypes{\pset_\Delta} = \{P_S \mid S \subseteq [1..n]\}$. Note that the predicates in $\etypes{P}$ are pairwise exclusive and partition the space.

Furthermore, we will recall an equivalence relation between clock constraints. Let $A = \{\gamma_1, \dots, \gamma_n\}$ be the set of all clock constraints used in the transitions of the automaton. Then, $(\equiv, \mathcal{C}_\zset)$ is an equivalence relation for clock conditions such that for all $\alpha, \beta \in \mathcal{C}_\zset$, $\alpha \equiv \beta$ if and only if for every valuation $\nu$, $\nu \models \alpha \Leftrightarrow \nu \models \beta$. Because of the equivalence relationship properties, we know that if $\alpha \nequiv \beta$, then $[\alpha] \cap [\beta] = \varnothing$.

Afterwards, we will introduce the set $\Gamma = \{\bigwedge_{i=1}^{n} \alpha_i \mid \forall i\ \alpha_i \in \{\gamma_i, \bar{\gamma_i}\}\}$, where $\bar{\gamma_i}$ is the clock condition that satisfies exactly all valuations that are not satisfied by $\gamma_i$. Note that $|\bar{\gamma_i}| \le 2 |\gamma_i|$, as $\bar{\gamma_i}$ can be obtained by changing every $\land$ by a $\lor$ and $\lor$ by $\land$, every $z \le c$ (resp. $z \ge c$) by $z > c$ (resp. $z<c$) and every $z=c$ by $z>c \lor z<c$.
Also note that similar to the set $\etypes{\pset_\Delta}$, the elements of $\Gamma / \equiv$ are pairwise exclusive and partition the space. Next we will proceed to demonstrate Theorem~\ref{theo:determinization}.

\begin{proof}[Proof of Theorem~\ref{theo:determinization}]
	We can build a deterministic timed CEA $\cT'$ based on the timed CEA $\cT$, which we will assume to be disjunction-free (see the proof of Proposition~\ref{pro:cea2cel}). We will construct the equivalent deterministic timed CEA as the following tuple: $\cT' = (Q_{det}, \pset, \xset, \zset, \Delta_{det}, Q_0, F_{det})$, where  $Q_{det} = 2^Q$, $Q_0 = (\{q_0\})$, and $F_{det} = \{D \in Q_{det} \mid \exists q_f \in F .\ q_f \in D\}$.
	For the transition relation, the elements are $(D, P, \alpha, L, Z, D')\in \Delta_{det}$, where $P \in \etypes{\pset_\Delta}$, $[\alpha] \in \Gamma / \equiv$ and $D' = \{q \mid \exists p \in D.\ \exists (p, P', \gamma, L, Z, q) \in \Delta.\ \alpha \models \gamma \land P \subseteq P'\}$. For the clock resets $Z$, we can take any of the clock resets $Z'$ of any of the transitions $(p, P', \gamma, L, Z', q)\in\Delta$ such that $p\in D$,  $\alpha \models \gamma$ and $P \subseteq P'$. As the automaton has synchronous resets, we know that these sets of clocks must be the same for every transition that follow those constraints.
	
	The automaton is deterministic because, as said earlier, $\etypes{\pset_\Delta}$ and $\Gamma / \equiv$ partition space and have pairwise exclusive elements.
	
	Let $\rho: (q_0, \nu_0) \xrightarrow{P_1, \gamma_1 / L_1, Z_1} \dots \xrightarrow{P_n, \gamma_2 / L_n, Z_n} (q_n, \nu_n)$ be an accepting run of a stream $\Stream$ over $\cT$, using transitions $(q_i, P_{i+1}, \gamma_{i+1}, L_{i+1}, Z_{i+1}, q_{i+1})$. By construction, we know that there exists a run (not necessarily accepting) $\rho': (Q_0, \nu_0') \xrightarrow{P_1', \gamma_1'/ L_1, Z_1'} \dots \xrightarrow{P_n', \gamma_n' / L_n, Z_n'} (Q_n, \nu_n')$, using transitions of the form $(Q_i, P_{i+1}', \alpha_{i+1}, Z_{i+1}', Q_{i+1})$.	
	
	Now we will prove using induction that for all $i\le n$, $q_i \in Q_i$ and $\nu_i = \nu_i'$. As a base case, $q_0 \in Q_0$ and $\nu_0 = \nu_0'$. Then, let's assume $q_i \in Q_i$ and $\nu_i = \nu_i'$, and let's prove that $q_{i+1}\in Q_{i+1}$ and $\nu_{i+1} = \nu_{i+1}'$. We know that there exists a transition $(q_i, P_{i+1}, \gamma_{i+1}, L_{i+1}, Z_{i+1}, q_{i+1}) \in \Delta$. Based on this fact, there exists a transition $(Q_i, P_{i+1}', \alpha_{i+1}, L_{i+1}, Z_{i+1}, Q_{i+1})$ such that $P_{i+1}' \subseteq P_{i+1}$, $\alpha_{i+1} \models \gamma_{i+1}$ and $\nu_i' + \delta t_{i+1} \models \alpha_{i+1}$. Using this transition, $q_{i+1} \in Q_{i+1}$. Furthermore, $Z_{i+1}' = Z_{i+1}$, so $\nu_{i+1} = \nu_{i+1}'$, that is, the two runs have the same clock valuations. Finally, $Q_n \in F_{det}$ then the run is an acceptance run.
	
	Let $\rho: (Q_0, \nu_0) \xrightarrow{P_1', \gamma_1' / L_1, Z_1'} \dots \xrightarrow{P_n', \gamma_n' / L_n, Z_n'} (Q_n, \nu_n)$ be an accepting run of a stream $\Stream$ over $\cT'$, using transitions $(Q_i, , P_{i+1}', \alpha_{i+1}, Z_{i+1}', Q_{i+1})$. We will prove that for every $i \le n$ and for every $q_i \in Q_i$, there exists a run $(q_0, \nu_0') \xrightarrow{P_1, \gamma_1/ L_1, Z_1} \dots \xrightarrow{P_i, \gamma_i / L_i, Z_i} (q_i, \nu_i')$ over $\cT$ such that $\nu_i' = \nu_i$.
	
	As a base case, $q_0 \in Q_0$ and $\nu_0' = \nu_0$. Then, we will assume that for every $q_i \in Q_i$, there exists a run $(q_0, \nu_0') \xrightarrow{P_1, \gamma_1 / L_1, Z_1} \dots \xrightarrow{P_i, \gamma_i /L_i, Z_i} (q_i, \nu_i')$ such that $\nu_i' = \nu_i$, and will prove it for $i+1$.
	
	We know that for any $q_{i+1} \in Q_{i+1}$, there exists a transition $(q, P_{i+1}, \gamma_{i+1}, L_{i+1}, Z_{i+1}, q_{i+1})$ such that $q\in Q_i$, $P_{i+1}' \subseteq P_{i+1}$, $\alpha_{i+1}\models\gamma_{i+1}$, and $Z_{i+1} = Z_{i+1}'$. Following that logic, as by assumption $\nu_i = \nu_i'$, it is true that $\nu_i' + \delta t_{i+1}\models \alpha_{i+1} \models \gamma_{i+1}$. Furthermore, the resets are the same, so $\nu_{i+1} = \nu_{i+1}'$, so there is a run $(q_0, \nu_0) \xrightarrow{P_1, \gamma_1/ L_1, Z_1} \dots \xrightarrow{P_i, \gamma_i / L_i, Z_i} (q_i, \nu_i)\xrightarrow{P_{i+1}, \gamma_{i+1}/ L_{i+1}, Z_{i+1}} (q_{i+1}, \nu_{i+1})$.
	
	Finally, as there exists $q_f\in F$ such that $q_f\in Q_n$, then there  exists a run $(q_0, \nu_0) \xrightarrow{P_1, \gamma_1 / L_1, Z_1} \dots \xrightarrow{P_n, \gamma_n/ L_n, Z_n} (q_f, \nu_n)$ over $\cT$, such that $q_f\in F$, concluding that the run is an accepting run.
	
	The size of this automaton is $|\cT'| = |Q_{det}| + \sum_{(D, P, \gamma, L,Z, D')\in \Delta_{det}} |\gamma| + |P|+ |L| + |Z|$. We know that $|Q_{det}| = 2^{|Q|}$. We also know that each one of the transitions $(D, P, \alpha, L, Z, D')\in \Delta_{det}$ have a size $|P| \le \sum_{(p, P', \gamma', L', Z', q)\in \Delta}|P'|+1$ and $|\alpha| \le \sum_{(p, P', \gamma', L', Z', q)\in \Delta}|\gamma'|+1$. Therefore, we can check that:
	\begin{alignat*}{2}
		\sum_{(D, P, \alpha, L,Z, D')\in \Delta_{det}} |\alpha| + |P|+ |L| + |Z| & \le && \sum_{(D, P, \alpha, L,Z, D')\in \Delta_{det}} \ \  \sum_{(p, P', \gamma', L', Z', q)\in \Delta}|\gamma'|+1 \\
		& && \qquad\qquad\qquad+ \sum_{(p, P', \gamma', L', Z', q)\in \Delta}|P'|+1) + |L| + |Z|\\ 
		& = && \sum_{(D, P, \alpha, L,Z, D')\in \Delta_{det}} \ \ \sum_{(p, P', \gamma', L', Z', q)\in \Delta}|\gamma'|+|P'|+2) + |L| + |Z| \\
		& = && |\Delta_{det}| \cdot \sum_{(p, P', \gamma', L', Z', q)\in \Delta}|\gamma'|+|P'|+2 \\
		& && + \sum_{(D, P, \alpha, L,Z, D')\in \Delta_{det}} |L| + |Z|
	\end{alignat*}
	We also know that the following holds.
	\begin{alignat*}{1}
		\sum_{(D, P, \alpha, L,Z, D')\in \Delta_{det}} |L| + |Z|\le &\sum_{(p, P', \gamma, L, Z', q')\in \Delta} \ \  \sum_{(D, P, \alpha L, Z', D') \in \Delta_{det}\mid p\in D} |L| + |Z'|\\
		& \le \sum_{(p, P', \gamma, L, Z', q')\in \Delta} 2^{|Q|-1+2|\Delta|}(|L| + |Z'|)
	\end{alignat*}
	We know that $|\Delta_{det}| = 2^{|Q|+2|\Delta|}$. Summing up:
	
	\begin{alignat*}{1}
		|\cT'| &\le 2^{|Q|} + 2^{|Q|+2|\Delta|} \sum_{(p, P', \gamma', L, Z', q)\in \Delta}|\gamma'|+|P'|+2) + 2^{|Q|-1+2|\Delta|}\sum_{(p, P', \gamma, L, Z', q')\in \Delta} (|L| + |Z'|)\\
		&\le 2^{|Q|} + 2^{|Q|+2|\Delta|}\sum_{(p, P', \gamma', L, Z', q)\in \Delta} (|\gamma'|+|P'| +|L| + |Z'|) \\
		&\le 2^{|Q|} + 2^{|Q|+2|\Delta|}\cdot|\cT|
	\end{alignat*}
	So finally, the size of the automaton is $\mathcal{O}(2^{|Q|} + 2^{|Q| + 2|\Delta|}\cdot|\cT|) \subseteq \mathcal{O}(2^{|Q| + 2|\Delta|}\cdot|\cT|)$.
\end{proof}

\subsection*{Proof of Theorem~\ref{theo:algo-synch-resets}}
In first place, note that all of the clock constraints in the automaton are composed of subformulas $z_i \sim c_i$, where $z_i$ is a clock and $c_i$ is a rational number. With the help of the results proven in \cite{AlurD94}, we can build a timed CEA $\cT'$ by replacing every $z_i \sim c_i$ in every clock constraint by $z_i \sim c_i \cdot d$, where $d$ is the lowest integer such that $c_i \cdot d$ is an integer for all $i$. That way, every comparison of the value of a clock $z_i$ with a number will be with an integer number. We will assume that the automaton will have the form of $\cT'$, as \cite{AlurD94} has proven the equivalence of both timed CEAs.

In second place, we will build the clock regions for the timed CEA $\cT'$ as they are presented in \cite{AlurD94}. First, we introduce a new notation. If $d$ is a rational number, then $\fract(d)$ is the fractional part of $d$, and $\lfloor d \rfloor$ is its integral part. For each $z\in \zset$, we name $c_z$ the greatest integer $c$ such that $z \sim c$ is a subformula of a clock constraint in $\cT'$. We introduce a new equivalence relationship over clock valuations, symbolized by $\approxeq$, where we say that for two clock valuations $\nu$ and $\nu'$, $\nu \approxeq \nu'$ if $\dom{\nu} = \dom{\nu'}$, and all of these hold:

\begin{itemize}
	\item For all $z\in \dom{\nu}$, we have that either $\lfloor \nu(z) \rfloor = \lfloor \nu'(z) \rfloor$, or both $\nu(z), \nu'(z) > c_z$.
	\item For all $z, z' \in \dom{\nu}$ with $\nu(z) \le c_z$ and $\nu(z') \le c_{z'}$, $\fract(\nu(z)) \le \fract(\nu(z'))$ iff  $\fract(\nu'(z)) \le \fract(\nu'(z'))$
	\item For all $z\in \dom{\nu}$ with $\nu(z) \le c_z$, $\fract(\nu(z)) = 0$ iff $\fract(\nu'(z)) = 0$.
\end{itemize}
The clock regions are the equivalence classes clock valuations over the relation $\approxeq$.  We know that the number of clock regions is bounded by $|\zset| \cdot 2^{|\zset|} \cdot \prod_{z \in \zset}(2c_z+2)$. Next, we will prove the $\textsc{PSPACE}$-completeness of the problem.

\begin{proof}[Proof of Theorem~\ref{theo:algo-synch-resets}]
	We will now prove that the problem is $\textsc{PSPACE}$-complete. For that matter, we will show that the opposite problem is $\textsc{NPSPACE}$-complete, that is, showing that there exist two runs such that the resets are different, by proving first $\textsc{NPSPACE}$-membership and then $\textsc{NPSPACE}$-hardness.
	
	We will start by proving $\textsc{NPSPACE}$-membership. Let $\cT$ be the input of a non deterministic turing machine. Then, we will write into the tape a first configuration $(q_0, q_0, R_0)$, where $R_0$ is the clock region that has all clocks in $0$.
	
	For the current configuration $(p, p', R)$ in the tape, we will then nondeterministically choose two transitions $(p, P, \gamma, L, Z, q)$ and $(p', P', \gamma', L', Z', q')$ such that $P \cap P' \ne \varnothing$, there exists a valuation $\nu\in R$ and $t \ge 0$ such that $\nu + t \models \gamma \land \gamma'$ and $L = L'$. If $Z \ne Z'$, we go to an accepting state, else we change the current configuration by $(q, q', R')$, where $R'$ is a clock region chosen nondeterministically such that there exists $\nu+t \in R'$.
	
	It is easy to see that the algorithm uses polynomial space, and it outputs the correct answer as $Z\ne Z'$ iff the timed CEA does not have synchronous resets. We have proven $\textsc{PSPACE}$-membership.
	
	We will then prove $\textsc{PSPACE}$-hardness. We will do so by reducing from the emptyness problem in time automata, a known problem that is $\textsc{PSPACE}$-complete \cite{AlurD94}. Suppose $\cA$ is a time automaton. We will transform it to a timed CEA by preserving its states, create a set of predicates such that every predicate is equivalent to reading a letter of the input word, and preserve its transitions and clocks. The only change will be on the markings of the automaton, on which every transition will mark a different variable $X_i$, for $1 \le i\le |\Delta|$. Afterwards, for every final state $p$ of the timed automaton, we will build two transitions from $p$ to $p$ reading the same predicate and having no clock constraints, but resetting different sets of clocks. In this way, if the final state is reachable in the original timed automaton, then also will be the two exiting transitions of the timed CEA, which will then not have synchronous resets. This way we have proven $\textsc{PSPACE}$-hardness and the problem is $\textsc{PSPACE}$-complete.
\end{proof}

\subsection*{Proof of Theorem~\ref{theo:windowed-sync}}
First, we will introduce the following lemma:
\begin{lemma}
	Let $\phi$ be a simple timed CEL formula. Then, there exists a single clock timed CEA with synchronous resets $\cT_r$ such that $\semaux{\cT_r} = \semaux{\phi}$ and, for every transition $\tau$ of $\cT_r$, $\tau$ resets the clock iff $\tau$ marks with at least one variable.
\end{lemma}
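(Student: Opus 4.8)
The plan is to prove this by structural induction on the simple timed CEL formula $\phi$, strengthening the statement with two auxiliary invariants maintained alongside the single-clock and reset-iff-mark conditions. For the timed CEA $\cT_r = (Q, \pset, \xset, \{z\}, \Delta, q_0, F)$ built for $\phi$ these are: (i) \emph{reset-iff-mark}: every transition $(p, P, \gamma, L, Z, q) \in \Delta$ has $Z = \{z\}$ when $L \neq \emptyset$ and $Z = \emptyset$ when $L = \emptyset$; (ii) \emph{marked endpoints}: for every $C \in \semaux{\cT_r}(\Stream)$ both $\cstart(C)$ and $\cend(C)$ carry a variable, equivalently the first and last transitions of every run have nonempty label; and (iii) \emph{trivial initial guards}: every transition leaving $q_0$ has guard $\TRUE$. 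The first observation is that synchronous resets is then free: two runs with the same output have identical labels $L_k$ at every position $k$, so by (i) they have identical reset sets $Z_k = Z_k'$, which is exactly the synchronous-resets condition. Since (i) is also the reset-iff-mark property demanded by the statement, the remaining work is to verify equivalence $\semaux{\cT_r} = \semaux{\phi}$ and that (ii), (iii) propagate through the induction.

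For the base case and the non-time operators I reuse the constructions from the proof of Proposition~\ref{pro:cel2cea} essentially verbatim, with the single change that each marking transition resets $z$ (so the $R$-transition becomes $(q_1, P_R, \TRUE, \{R\}, \{z\}, q_2)$) while the gap-skipping self-loops introduced by $\sq$, $\ks$ and the like leave $z$ untouched. Because $\kAS X$ only adds $X$ to already-marking transitions and $\kFILTER$, $\kOR$ merely refine predicates or take disjoint unions of state spaces, none of them change which transitions mark, so (i) is preserved; (ii) holds since the endpoints of $C$ are inherited from the sub-events ($\cstart(C)=\cstart(C_1)$ and $\cend(C)=\cend(C_2)$ for sequencing, both marked by induction); and (iii) holds because initial transitions are either the $\TRUE$-guarded $R$-transitions or copies of inductively trivial ones. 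The exclusion of projection and time windows from simple formulas is precisely what keeps this working: projection could delete an endpoint mark and break (ii), while a time window $\twin{\psi}{I}$ would have to measure across intermediate marked positions that (i) insists on resetting, and would moreover require a nontrivial initial guard, violating (iii).

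The heart of the construction is the single-clock handling of the time operators and of $\kAND$. For $\psi_1 \timesq{I} \psi_2$ I follow the $q_{new}$-construction of Proposition~\ref{pro:cel2cea} but reuse $z$ instead of a fresh clock: the final transitions of $\cT_{\psi_1}$ are redirected into $q_{new}$ (they mark $\cend(C_1)$ by (ii), hence already reset $z$), a non-marking self-loop on $q_{new}$ skips the gap, and the transitions leaving $q_{new}$ copy the initial transitions of $\cT_{\psi_2}$ with the extra conjunct $z \in I$. By (iii) those initial guards are trivial, so the combined guard is exactly $z \in I$; by (ii) they mark $\cstart(C_2)$ and thus reset $z$, so (i) survives. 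The key point is that $z$ was last reset at $\cend(C_1)$ and nowhere in the intervening unmarked gap, so on reaching $\cstart(C_2)$ the value $z + \delta t$ equals exactly $t_{\cstart(C_2)} - t_{\cend(C_1)}$, the quantity the semantics demands; more generally every time check in a simple formula sits between two \emph{consecutive} marked positions, so a clock reset at every mark always holds the correct gap. The contiguous variant $\timessq{I}$ is identical without the self-loop, and $\timeks{I}, \timesks{I}$ iterate this scheme, checking $z \in I$ on re-entry to each new copy whose first transition again marks and resets $z$. For $\kAND$ I take the product of $\cT_{\psi_1}$ and $\cT_{\psi_2}$ \emph{sharing} $z$, with transitions $((p_1,p_2), P_1 \cap P_2, \gamma_1 \wedge \gamma_2, L, Z_1 \cup Z_2, (q_1,q_2))$ over matching $L$; by (i) for each factor, $Z_1 = Z_2 = \{z\}$ when $L \neq \emptyset$ and $Z_1 = Z_2 = \emptyset$ otherwise, so $Z_1 \cup Z_2$ again satisfies (i).

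The main obstacle, deserving the most care, is this single-clock $\kAND$: Proposition~\ref{pro:cel2cea} needs \emph{disjoint} clocks for the two conjuncts precisely because in general they reset at different positions, so a shared clock would corrupt one factor's measurements. I must argue this cannot happen here: both conjuncts read the same stream and agree on every output label $L_k$, so invariant (i) forces them to reset $z$ at exactly the same positions; hence before each transition the clock holds the same value in both factors and $\gamma_1 \wedge \gamma_2$ is a faithful joint check. Pinning down this clock-consistency rigorously — and, more broadly, proving that resetting $z$ at every mark reproduces every gap required by the time operators' semantics — is where the bulk of the verification lies; the equivalence $\semaux{\cT_r} = \semaux{\phi}$ then follows the same run-by-run correspondence as in Proposition~\ref{pro:cel2cea}, now carrying along the single clock.
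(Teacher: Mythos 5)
Your proof is correct and follows essentially the same route as the paper's: induction over the operators of a simple formula, reusing the constructions of Proposition~\ref{pro:cel2cea}, with the key observation that in a simple formula every clock reset coincides with a marking transition, so all clocks can be identified into one and synchronous resets follows immediately from agreement of the output labels. The only difference is packaging: the paper introduces fresh clocks per operator and merges them afterwards by noting they are reset at the same positions, whereas you share a single clock from the start and make explicit the invariants (marked endpoints, trivial initial guards) that the paper uses implicitly.
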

\begin{proof}
	We will prove that for every simple timed CEL formula $\phi$, there exists a single-clock synchronous reset timed CEA $\cT$ such that its clock $z_X$ is only reset in every marking transition of $\cT$. We will prove this result using induction.
	
	As a base case, if $\phi = R$, we do the construction of Proposition~\ref{pro:cel2cea}. We know by construction that the automaton does not have any clocks, then we add one clock $Z_X$ that resets in every marking transition (that is, the only one transition that exists). The equivalence of the automata is trivial.
	
	For the inductive case, we will demonstrate that for every operator the property holds. We assume that $\phi_1, \phi_2$ are is a simple timed CEL formula such that there is a timed CEA with synchronous resets and one clock, that is reset in every marking transition. We will prove it for the aggregated formula with every operator.
	
	For the automaton $\cT_{\phi_1 \kAS X}$, we will follow the construction in the proof of Proposition~\ref{pro:cel2cea} to create an automaton such that $\semaux{\cT_{\phi_1 \kAS X}} = \semaux{\phi_1 \kAS X}$, and we will prove that $\cT_{\phi_1 \kAS X}$ has synchronous resets. Let $\rho_1: (q_0, \nu_0) \xrightarrow{P_1, \gamma_1 / L_1, Z_1} \dots \xrightarrow{P_n, \gamma_n / L_n, Z_n} (q_n, \nu_n)$ and $\rho_2: (q_0', \nu_0') \xrightarrow{P_1', \gamma_1' / L_1, Z_1'} \dots \xrightarrow{P_n', \gamma_n' / L_n, Z_n'}(q_n, \nu_n)$ be partial runs of $\cT_{\phi_1 \kAS X}$ over $\Stream$. We know that the transitions of $\cT_\phi$ are preserved, only changing the markings to include the new variable $X$. Then, for two transitions $(q_{j-1}, P_j, \gamma_j, L_j, Z_j, q_j)$ and $(q_{j-1}', P_j', \gamma_j', L_j, Z_j', q_j')$, we know that $Z_j = Z_j'$. Furthermore, $\cT_{\phi_1 \kAS X}$ as the number of clocks and their resets don't change, we know that the property is preserved.
	
	For the automaton $\cT_{\phi_1 \kFILTER X[P]}$, the proof is analogous as the one for $\phi \kAS X$, but changing the predicates instead of the variable markings.
	
	For the automaton $\cT_{\phi_1\kOR \phi_2}$, we know that there exist $\cT_{\phi_1}$ and $\cT_{\phi_2}$ such that for every run, $z_{X}$, $z_{X'}$ are the clocks that reset in every marked transition. We also know that in the construction, there are no transitions from a state in $\cT_{\phi_1}$ to a state in $\cT_{\phi_2}$ and viceversa. In that way, we know that if a clock is reset in a run passing through the states of $\cT_{\phi_1}$, then it will not affect any run passing through the states of $\cT_{\phi_2}$. Using this fact, we can perform the construction in Proposition~\ref{pro:cel2cea} of $\phi_1'$ and $\phi_2'$, but assuming that $z_{X}=z_{X'}$, that is, the resulting automaton has only one clock, $z_{X}$, that is reset in every marking transition.
	
	For the automaton $\cT_{\phi_1 \kAND \phi_2}$, we do the construction of Proposition~\ref{pro:cel2cea}. We know that the timed CEA $\cT_{\phi_1}$ and $\cT_{\phi_2}$ have one clock each, that is, $z_{X}$, $z_{X'}$ (the clocks that are reset in every marking transition). Then, let $\rho$ and $\rho'$ be two partial runs of $\cT_{\phi_1 \kAND \phi_2}$ is $(q_0, q_0')$ over the stream $\Stream$. The first run is $\rho: ((p_0, q_0), \nu_0) \xrightarrow{P_1, \gamma_1 / L_1, Z_1} \dots \xrightarrow{P_i, \gamma_i / \varnothing , Z_i} \dots\xrightarrow{P_n, \gamma_i / L_n, Z_n} ((p_0', q_0'))$ and the second is $\rho': ((p_0, q_0), \nu_0) \xrightarrow{P_1', \gamma_1' / L_1, Z_1'} \dots \xrightarrow{P_i, \gamma_i' / \varnothing , Z_i'} \dots\xrightarrow{P_n, \gamma_i' / L_n, Z_n'} ((p_0', q_0'))$. We know that $Z_i = Z_i'$, with its content being $\{z_{X}, z_{X'}\}$ if $L \ne \varnothing$ and $\varnothing$ if $L = \varnothing$. This automaton has synchronous resets, because for every index of the runs, $Z_i = Z_i'$. Moreover, as $z_{X}$ is reset in the same indexes as $z_{X'}$, we can simplify the number of clocks to two by replacing every instance of the clock $z_{X'}$ by $z_{X}$. Finally, the resulting automaton has one clock $z_{X}$, where $z_{X}$ is reset in every marking transition.
	
	For the automaton $\cT_{\phi_1 \sq \psi}$, we use the construction in Proposition~\ref{pro:cel2cea}. We then analyze a run of $\cT_{\phi_1 \sq \psi}$. That run is of the form $(q_0, \nu_0) \xrightarrow{P_1, \gamma_1 / L_1, Z_1} \dots \xrightarrow{P_i, \gamma_i / L_i, Z_i} \dots \xrightarrow{P_j, \gamma_j / L_j, Z_j} \dots \xrightarrow{P_n, \gamma_n / L_n, Z_n} (q_n, \nu_n)$. We know that $Z_k = \{z_X\}$ if $L_k \ne \varnothing$ else $\varnothing$ for $k \le i$, $Z_{k} = \varnothing$ for all $i<k<j$ and finally $Z_{k} = \{z_X'\}$ if $L_k\ne\varnothing$ else $\varnothing$ for all $j \le k$. We also know that the clock constraints after the index $i$ don't use the clock $z_X$. This way, we can change all occurences of clock $z_X'$ by the clock $z_X$, thus having the same language. The resulting timed CEA resets the clock $z_X$ at every marking transition, so it is easy to see that it has synchronous resets.
	
	The proof is analog for the formulas $\phi_1 \ssq \phi_2$.
	
	For the automaton $\cT_{\phi_1 \xsq{I} \psi}$, we use the construction in Proposition~\ref{pro:cel2cea}. We then analyze a run of $\cT_{\phi_1 \xsq{I} \psi}$. That run is of the form $(q_0, \nu_0) \xrightarrow{P_1, \gamma_1 / L_1, Z_1} \dots \xrightarrow{P_i, \gamma_i / L_i, Z_i} \dots \xrightarrow{P_j, \gamma_j / L_j, Z_j} \dots \xrightarrow{P_n, \gamma_n / L_n, Z_n} (q_n, \nu_n)$. We know that $Z_k = \{z_X\}$ if $L_k \ne \varnothing$ else $\varnothing$ for $k < i$, $Z_i = \{z_X, z_X'\}$ $Z_{k} = \varnothing$ for all $i<k<j$ and finally $Z_{i} = \{z_X''\}$ if $L_i\ne\varnothing$ else $\varnothing$ for all $j \le i$. We also know that the clock constraints after the index $i$ don't use the clock $z_X$ and after the index $j$ don't use the clock $z_X'$. This way, we can change all occurences of clock $z_X'$ and $z_X''$ by the clock $z_X$, thus having the same language. The resulting timed CEA resets the clock $z_X$ at every marking transition, so it is easy to see that it has synchronous resets.

	The proof is analog for the formula $\phi_1 \xssq{I}\phi_2$.
	
	For the automaton $\cT_{\phi_1 \ks}$, we use the construction in Proposition~\ref{pro:cel2cea}. We then analyze a run of $\cT_{\phi_1 \ks}$. That run is of the form $(q_0, \nu_0) \xrightarrow{P_1, \gamma_1 / L_1, Z_1} \dots \xrightarrow{P_i, \gamma_i / L_i, Z_i} \dots \xrightarrow{P_j, \gamma_j / L_j, Z_j} \dots \xrightarrow{P_n, \gamma_n / L_n, Z_n} (q_n, \nu_n)$. We know that the construction does not add any new clocks, and that by construction the clocks are preserved. Then, the new automaton resets the clock $z_X$ in only every marking transition, so it has synchronous resets.
	
	The proof is analog for automaton $\cT_{\phi_1 \sks}$.
	
	For the automaton $\cT_{\phi_1 \xks{I}}$, we use the construction in Proposition~\ref{pro:cel2cea}. We know that the new added clock $z_X'$ is reset in the transitions from $p$ to $q_{new}$, and that it is a marking transition. Then, we also know that the clock $z_X$ is also reset in that transition, and that the following transitions before the clock check, i.e., the transition from $q_{new}$, do not reset either the clock $z_X$ or the clock $z_X'$. By construction, there are no other transitions before the clock check. That way, We know that in those transitions, the valuation of the clock $z_X$ is the same that the valuation of the clock $z_X'$, so we can replace the clock $z_X'$ by the clock $z_X$, leaving a single clock in the automaton that is reset in every marking transition, so the automaton has synchronous resets.
	
	The proof is analog for automaton $\cT_{\phi \timesks{I}}$.
\end{proof}

Now we will proceed to prove the theorem by proving a stronger proposition: a windowed CEL formula can be translated to a timed CEA with two clocks that always resets one in the first transition and the other in every marking transition.

\begin{proof}
	We will prove this result by induction. As a base case, we know that every simple timed CEL formula is equivalent to a timed CEA with synchronous resets and one clock that resets in every marking transition. If we add one clock that is never compared in a clock condition, but is reset in all starting transitions, we achieve an equivalent timed CEA that follows the property. For the base case of CEL formulas, we know that the equivalent automata doesn't have any clocks. We can add two clocks, one that is reset in every marking transition and the other that is reset in only the first transitions. It is trivial that adding these clocks doesn't change the language accepted by the automaton, because the automaton does not have any clock conditions. For the inductive case, assume that for windowed timed CEL formulas $\phi_1$ and $\phi_2$, there exist $\cT_{\phi_1}$ and $\cT_{\phi_2}$ such that $\semaux{\phi_1} = \semaux{\cT_{\phi_1}}$ and $\semaux{\phi_2} = \semaux{\cT_{\phi_2}}$ and the property is satisfied. %

	We will prove that there exist timed CEAs that satisfy the property for any formula
	\[
	\phi \in \{\phi_1 \kAS X, \phi_1 \kFILTER X[P], \phi_1 \kOR \phi_2, \phi_1 \kAND \phi_2,  \twin{\phi_1}{\sim c}\}
	\] 
	such that $\semaux{\cT_{\phi}} = \semaux{\phi}$. Remember that the initial state has no incoming transitions, as this will allow us to reset clock $z_N$ only in the first transition of the run.
	
	For the automaton $\cT_{\phi_1 \kAS X}$, we will follow the construction in the proof of Proposition~\ref{pro:cel2cea} to create an automaton such that $\semaux{\cT_{\phi_1 \kAS X}} = \semaux{\phi_1 \kAS X}$, and we will prove that $\cT_{\phi_1 \kAS X}$ has synchronous resets. Let $\rho_1: (q_0, \nu_0) \xrightarrow{P_1, \gamma_1 / L_1, Z_1} \dots \xrightarrow{P_n, \gamma_n / L_n, Z_n} (q_n, \nu_n)$ and $\rho_2: (q_0', \nu_0') \xrightarrow{P_1', \gamma_1' / L_1, Z_1'} \dots \xrightarrow{P_n', \gamma_n' / L_n, Z_n'}(q_n, \nu_n)$ be partial runs of $\cT_{\phi_1 \kAS X}$ over $\Stream$. We know that the transitions of $\cT_\phi$ are preserved, only changing the markings to include the new variable $X$. Then, for two transitions $(q_{j-1}, P_j, \gamma_j, L_j, Z_j, q_j)$ and $(q_{j-1}', P_j', \gamma_j', L_j, Z_j', q_j')$, we know that $Z_j = Z_j'$. Furthermore, $\cT_{\phi_1 \kAS X}$ as the number of clocks and their resets don't change, we know that the property is preserved.
	
	For the automaton $\cT_{\phi_1 \kFILTER X[P]}$, the proof is analogous as the one for $\phi \kAS X$, but changing the predicates instead of the variable markings.
	
	For the automaton $\cT_{\phi_1\kOR \phi_2}$, we know that there exist $\cT_{\phi_1}$ and $\cT_{\phi_2}$ such that for every run, $z_N$ and $z_{N'}$ are the clocks that reset at the start of the run and $z_{X}$, $z_{X'}$ are the clocks that reset in every marked transition. We also know that in the construction, there are no transitions from a state in $\cT_{\phi_1}$ to a state in $\cT_{\phi_2}$ and viceversa. In that way, we know that if a clock is reset in a run passing through the states of $\cT_{\phi_1}$, then it will not affect any run passing through the states of $\cT_{\phi_2}$. Using this fact, we can perform the construction in Proposition~\ref{pro:cel2cea} of $\phi_1'$ and $\phi_2'$, but assuming that $z_{X}=z_{X'}$ and $z_{N}=z_{N'}$, that is, the resulting automaton has only two clocks, the first clock $z_{N}$ that is reset in the first transition and the second clock $z_{X}$ that is reset in every marking transition.
	
	For the automaton $\cT_{\phi_1 \kAND \phi_2}$, we do the construction of Proposition~\ref{pro:cel2cea}. We know that the timed CEA $\cT_{\phi_1}$ and $\cT_{\phi_2}$ have two clocks each, that is, $z_{N}$, $z_{N'}$ (the clocks that are reset in the first transition) and $z_{X}$, $z_{X'}$ (the clocks that are reset in every marking transition). Then, let $\rho$ and $\rho'$ be two partial runs of $\cT_{\phi_1 \kAND \phi_2}$ is $(q_0, q_0')$ over the stream $\Stream$. The first run is $\rho: ((p_0, q_0), \nu_0) \xrightarrow{P_1, \gamma_1 / L_1, Z_1} \dots \xrightarrow{P_i, \gamma_i / \varnothing , Z_i} \dots\xrightarrow{P_n, \gamma_i / L_n, Z_n} ((p_0', q_0'))$ and the second is $\rho': ((p_0, q_0), \nu_0) \xrightarrow{P_1', \gamma_1' / L_1, Z_1'} \dots \xrightarrow{P_i, \gamma_i' / \varnothing , Z_i'} \dots\xrightarrow{P_n, \gamma_i' / L_n, Z_n'} ((p_0', q_0'))$. We know that $Z_1 = Z_1' = \{z_{N}, z_{X}, z_{N'}, z_{X'}\}$, and that $Z_i = Z_i'$, with its content being $\{z_{X}, z_{X'}\}$ if $L \ne \varnothing$ and $\varnothing$ if $L = \varnothing$. This automaton has synchronous resets, because for every index of the runs, $Z_i = Z_i'$. Moreover, as $z_{N}$ is reset in the same indexes as $z_{N'}$ and $z_{X}$ is reset in the same indexes as $z_{X'}$, we can simplify the number of clocks to two by replacing every instance of the clock $z_{X'}$ by $z_{X}$ and of $z_{N'}$ by $z_{N}$. Finally, the resulting automaton has two clocks $z_{N}$ and $z_{X}$, where $z_{N}$ is reset in the first transition and $z_{X}$ is reset in every marking transition.
	
	For the automaton $\cT_{\twin{\phi_1}{I}}$, we do the construction of Proposition~\ref{pro:cel2cea}. Let $\rho: ((p_0, q_0), \nu_0) \xrightarrow{P_1, \gamma_1 / L_1, Z_1} \dots \xrightarrow{P_i, \gamma_i / \varnothing , Z_i} \dots\xrightarrow{P_n, \gamma_i / L_n, Z_n} ((p_n, q_n))$ and $\rho': ((p_0', q_0'), \nu_0) \xrightarrow{P_1', \gamma_1' / L_1, Z_1'} \dots \xrightarrow{P_i, \gamma_i' / \varnothing , Z_i'} \dots\xrightarrow{P_n, \gamma_i' / L_n, Z_n'} ((p_n', q_n'))$ be runs of $\cT_{\twin{\phi_1}{I}}$ over $\Stream$. If $z_{N}$ and $z_{X}$ are the clocks of $\cT_{\phi_1}$, then know that $Z_1 = Z_1' = \{z_{N}, z_{N'}, z_{X}\}$ and $Z_i = Z_i'$ with the value $\{z_{X}\}$ if $L_i\ne\varnothing$ and $\varnothing$ if $L_i = \varnothing$. Then, the timed CEA has synchronous resets. Furthermore, the transitions resets $z_{N}$ iff it resets $z_{N'}$, then the timed CEA such that every instance of $z_{N'}$ is replaced by $z_{N}$ is equivalent. Finally, the resulting automaton has two clocks $z_{N}$ and $z_{X}$, where $z_{N}$ is reset in the first transition and $z_{X}$ is reset in every marking transition.

\end{proof}
 	
	\section{Proofs from Section~\ref{sec:evaluation}} \label{app:evaluation}

\subsection*{Proof of Theorem~\ref{theo:algorithm}}
The proof of Theorem~\ref{theo:algorithm} requires the definition of some data structures to be able to solve the problem efficiently. First, we will introduce the \emph{Clock-Aware Enumerable Compact Set} (CAECS), the data structure that will store the intermediate outputs of the algorithm, to then define the methods to update it efficiently, and finally use it in the algorithm at the end. Note that this proof is made only for the lower equal case of the monotonic single clock deterministic timed CEA. That is because the greater equal case is analog, and is easily constructed with minimal changes to the framework proposed.

\subsection{The data structure}

Our data structure is called a \emph{Clock-Aware Enumerable Compact Set} (CAECS), extending the structures presented in \cite{MunozR22} and later used in \cite{BucchiGQRV22}. Specifically, a CAECS is a structure $\ecs = (\Omega, V, \dleft, r, \lambda)$, where $V$ is a finite set of nodes, $\dleft: V \rightarrow V$ and $r: V\rightarrow V$ are the left and right partial functions, and $\lambda: V \rightarrow \Omega$ is a labeling function. We assume that $\ecs$ forms an directed acyclic graph.

A CAECS has six types of nodes: (1) bottom nodes, (2) extended nodes, (3) union nodes, (4) reset nodes, (5) clock-check nodes, and (6) empty nodes. The graphical representation of these nodes is depicted in Figure \ref{fig:node-types}. For any node $n$ with its corresponding type, we define $\lambda(n)$ as follows:
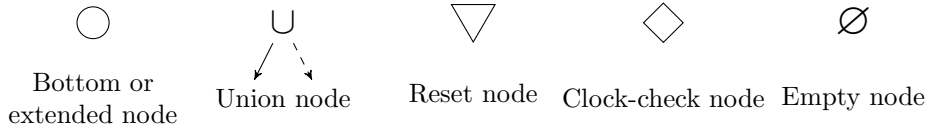
\begin{figure}[t]
	\centering
	\begin{tikzpicture}[->,
		>=stealth',
		initial text= {},
		initial distance= {2.5mm}
		]
		
		\begin{scope}[xshift=0cm]
			\node[extnode] (i) at (0, 0) {};
			\node[below of=i, text width=70pt,align=center] {Bottom or extended node};
		\end{scope}
		
		\begin{scope}[xshift=2.5cm]
			\node[unionnode] (i) at (0, 0) {};
			\node[anynode] (l) at (-0.5, -1) {};
			\node[anynode] (r) at (0.5, -1) {};
			\draw[dashed] (i) to (r);
			\draw[solid] (i) to (l);
			\node[below of=i] {Union node};
		\end{scope}
		
		\begin{scope}[xshift=5cm]
			\node[resetnode] (i) at (0, 0) {};
			\node[below of=i] {Reset node};
		\end{scope}
		
		\begin{scope}[xshift=7.5cm]
			\node[clockchecknode] (i) at (0, 0) {};
			\node[below of=i] {Clock-check node};
		\end{scope}
		
		\begin{scope}[xshift=10cm]
			\node[emptynode] (i) at (0, 0) {};
			\node[below of=i] {Empty node};
		\end{scope}
	\end{tikzpicture}
	\caption{CAECS node representations. In union nodes, the solid arrow represents the left child and the dashed arrow represents the right child.}
	\label{fig:node-types}
\end{figure}

\begin{itemize}
	\item If $n$ is a bottom node, then  $\lambda(n) = (i, t_i)$, where $i\in \bbN$ represents the index of the stream, and $t_i\in \bbR$ represents the absolute time when the open timed complex event starts. Also, $\dleft(n)$ and $\dright(n)$ are not defined. 
	\item If $n$ is an extended node, then  $\lambda(n) = (i, L)$, where $i$ represents the index of the stream, and $L$ represents the marked variables in that position for that nodes, $\dleft(n)\in V$ and $\dright(n)$ is not defined. 
	\item If $n$ is an union node, then  $\lambda(n) = \cup$, and $\dleft(n), \dright(n)\in V$. 
	\item If $n$ is a reset node, then  $\lambda(n) = t$, where $t$ represents the absolute time where the reset is performed. In this case, only $\dleft(n)$ is defined. 
	\item If $n$ is a clock-check node, then  $\lambda(n) = (t_0, c)$, where $t_0$ is the absolute time where the node was added, and $c$ is the maximum time allowed for the clock valuation. In this case, only $\dleft(n)$ is defined.
	\item If $n$ is an empty node, then  $\lambda(n) = \varnothing$.
\end{itemize}

For the semantics of the nodes of the CAECS, we will introduce a new representation of complex events. For a complex event $C = (i, j, \mu)$, instead of using the mapping $\mu: \xset \rightarrow 2^{[i..j]}$ between variables and indexes, we will use the representation of \emph{indexed complex event} $C = (i,j, \iota)$ where $\iota$ is the inverse  of $\mu$, namely, mapping indexes to set of variables. More specific, $\iota: [i..j] \rightarrow 2^\xset$ is a function such that $X \in \iota(k)$ iff $k \in \mu(X)$ for every $X \in \xset$ and $k \in [i..j]$. Furthermore, to optimize the representation, we assume that $\iota$ only maps $k \in [i..j]$ whenever $\iota(k) \neq \emptyset$. That said, those two definitions are equivalent and interchangeable, but the latter will be easier to maintain in the later presented algorithm, while having no increment on its time complexity. We will refer to the indexed complex events as complex events from now on. Moreover, an \emph{open timed complex event} is a pair $(i, \iota)$ where $i \in \bbN$ represents the index where the complex event starts and $\iota: \{i, i+1, \dots\}\rightarrow 2^\xset$ represents the mapping between indexes of the stream and variables marked in those indexes. The open timed complex event is called \textit{open} because it lacks an end index $j$: only when the timed CEA reaches a final state, the timed complex event is closed and assigned an end index. That said, the \emph{auxiliary semantics} $\semauxop{n}$ of a node $n \in V$ of the CAECS is a set of triples $(i, \iota, t)$ where $(i, \iota)$ is an open timed complex event and $t \in \bbQplus$ such that:
\begin{itemize}
	\item If $n$ is a bottom node with $\lambda(n) = (i,t_i)$, then $\semauxop{n} = \{(i, \varnothing, t_i)\}$, where $\varnothing$ is a function that maps every index to the empty set.
	\item If $n$ is an extended node with $\lambda(n) = (i',L)$, then $\semauxop{n} = \{(i, \iota', t)\mid \exists (i, \iota, t)\in \semauxop{\dleft(n)}.\, \iota' = \iota \cup \{(i', L)\}\}$ where we assume that $i' \notin \dom{\iota'}$ for every $(i, \iota', t) \in \sem{\ell(n)}$.
	\item If $n$ is a union node, then $\semauxop{n} = \semauxop{\dleft(n)}\cup \semauxop{\dright(n)}$. 
	\item If $n$ is a reset node with $\lambda(n) = t$, then $\semauxop{n} = \{(i, \iota, t) \mid (i, \iota, t') \in \semauxop{\dleft(n)}\}$. 
	\item If $n$ is a clock-check node with $\lambda(n) = (t_0, c)$, then $\semauxop{n} = \{(i, \iota, t) \in \semauxop{\dleft(n)}\mid t_0 - t \le c\}$.
	\item If $n$ is a empty node, then $\semauxop{n} = \varnothing$.
\end{itemize} 
In other words, the auxiliary semantics $\semauxop{n}$ of node $n$ contains all the timed open complex events that can be created from that node. 

Given a node $n$ of $\ecs$, the main goal of using an CAECS $\ecs$ is to enumerate the set of open timed complex event from $n$ defined as:
\[
\sem{n} = \{(i, \iota) \mid (i, \iota, t) \in \semauxop{n}\}
\]
Towards this goal, we need to impose several conditions on the CAECS. The first condition is that it must be \textit{time-ordered}, condition that will be defined after some preliminary definitions. For a node $n$, we define its \textit{maximum-reset}, denoted $\maxreset(n)$. It is defined as $\maxreset(n) = \max\{t\mid (i, \iota, t)\in \semauxop{n}\}$. The value of the \textit{maximum-reset} holds the greater absolute time where the clock was resetted in any of the possible runs passing through the node $n$. That said, a union node $n$ is \textit{time-ordered} if $\maxreset(\dleft(n)) \ge \maxreset(\dright(n))$. This simplifies the traversal of the CAECS, as we always check the left node, and only check the right node if the active clock conditions are satisfied on its \textit{maximum-reset}. Note that as the timed CEA is monotonic, this will allow us to filter the results by ignoring all of the nodes that do not follow the $\le$ constraint. For the greater-equal case, the construction is analog but ordered inversely.

The second condition is that the CAECS $\ecs$ must be \emph{$k$-bounded} for some $k\in \bbN$. Define the (left) \emph{output-depth} of a node $n$, written as $\odepth(n)$, as follows: if $n$ is a bottom or extended node, $\odepth(n) = 0$. Otherwise, $\odepth(n) = \odepth(\dleft(n)) + 1$. The output depth tell us how many non-output nodes (i.e., union, reset, or clock-check nodes) we need to traverse to the left before we find an output node (i.e., bottom or extended) that, therefore, produces part of the output. Then $\ecs$ is \emph{$k$-bounded} if $\odepth(n) \le k$ for every node $n$.

The third condition is that the CAECS $\ecs$ must be \textit{duplicate-free}, which holds if, for every union node $n$ in $\ecs$, it holds that $\{(i, \iota)\mid (i, \iota, t)\in \semauxop{\ell(n)}\} \cap \{(i, \iota) \mid (i, \iota, t)\in \semauxop{r(n)}\} = \emptyset$. This condition ensures that no union node will have the same output in two different branches.

\paragraph{Methods for managing the data structure}
To correctly manage the CAECS, we need to ensure that every operation that we perform on nodes of the CAECS will preserve a bounded $\odepth$ of the structure. For that matter, we need to check some preliminary definitions and properties.

In an instance of a CAECS structure, the reset and clock-check nodes behave in a particular way. More in detail, a sequence of reset and clock-check nodes can be simplified if they meet some particular conditions. This will be important when managing the data structure, because it allows to simplify that sequence by a new sequence of at most 2 nodes. To simplify the explanation and algorithms of the next parts, we make the following definition.

A \textit{clock-aware gadget} $g$ is a structure composed of zero to two nodes. The structure has an \textit{entry} node, named as $\entry(g)$, which corresponds to the first node inside the structure (the first node of the directed subgraph), and an \textit{exit} node, named as $\exit(g)$, which corresponds to the first node after the structure (the node pointed at by the last one of the directed subgraph). The special case is when $g$ has no nodes, in that case, $\entry(g) = \exit(g)$. We also define $\nodes(g)$ as the ordered list of nodes of which is composed the gadget, starting from the entry (if the structure has any). We will simplify the notation by saying that if $g$ is a clock-aware gadget, $\semauxop{g} = \semauxop{\entry(g)}$.

There are then five possible forms for a clock-aware gadget: the first is a void gadget (a gadget with no nodes), the second is a reset gadget (a gadget with only a reset node), the third is a clock-check gadget (a gadget with only a clock-check node), the fourth is an empty gadget (a gadget with only an empty node) and the fifth is a composed gadget (a gadget with a reset node as its entry, pointing to a clock-check node). The gadgets are listed in Figure~\ref{fig:gadget-types}.
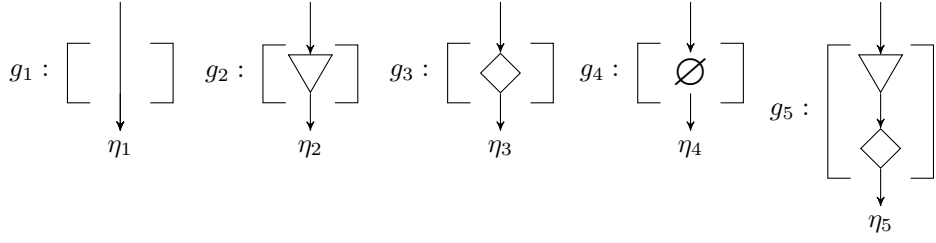
\begin{figure}[t]
	\centering
	\begin{tikzpicture}[->,
		>=stealth',
		initial text= {},
		initial distance= {2.5mm}
		]
		
		\begin{scope}[xshift=0cm]
			\node[rootnode] (i) at (0, 0) {};
			\node[minimum size=15pt] (g) at (0,-1) {};
			\node (e) at (0, -2) {$\eta_1$};
			
			\draw[solid] (i) to (e);
			\draw[solid] (g) to (e);

			\node[fit=(g)] (xyzbox) {};
			
			\draw [-, black] (xyzbox.north west) [square left brace] to node[left] {$g_1\,\colon$} (xyzbox.south west);
			\draw [-, black] (xyzbox.north east) [square right brace] to (xyzbox.south east) ;
		\end{scope}
		
		\begin{scope}[xshift=2.5cm]
			\node[rootnode] (i) at (0, 0) {};
			\node[resetnode] (g) at (0,-1) {};
			\node (e) at (0, -2) {$\eta_2$};
			
			\draw[solid] (i) to (g);
			\draw[solid] (g) to (e);

			\node[fit=(g)] (xyzbox) {};
			
			\draw [-, black] (xyzbox.north west) [square left brace] to node[left] {$g_2\,\colon$} (xyzbox.south west);
			\draw [-, black] (xyzbox.north east) [square right brace] to (xyzbox.south east) ;
		\end{scope}
		
		\begin{scope}[xshift=5cm]
			\node[rootnode] (i) at (0, 0) {};
			\node[clockchecknode] (g) at (0,-1) {\,};
			\node (e) at (0, -2) {$\eta_3$};
			
			\draw[solid] (i) to (g);
			\draw[solid] (g) to (e);

			\node[fit=(g)] (xyzbox) {};
			
			\draw [-, black] (xyzbox.north west) [square left brace] to node[left] {$g_3\,\colon$} (xyzbox.south west);
			\draw [-, black] (xyzbox.north east) [square right brace] to (xyzbox.south east) ;
		\end{scope}
		
		\begin{scope}[xshift=7.5cm]			%
			\node[rootnode] (i) at (0, 0) {};
			\node[emptynode] (g) at (0,-1) {};
			\node (e) at (0, -2) {$\eta_4$};
			
			\draw[solid] (i) to (g);
			\draw[solid] (g) to (e);

			\node[fit=(g)] (xyzbox) {};
			
			\draw [-, black] (xyzbox.north west) [square left brace] to node[left] {$g_4\,\colon$} (xyzbox.south west);
			\draw [-, black] (xyzbox.north east) [square right brace] to (xyzbox.south east) ;
		\end{scope}
		
		\begin{scope}[xshift=10cm]			%
			\node[rootnode] (i) at (0, 0) {};
			\node[resetnode] (g1) at (0,-1) {};
			\node[clockchecknode] (g2) at (0,-2) {\,};
			\node (e) at (0, -3) {$\eta_5$};
			
			\draw[solid] (i) to (g1);
			\draw[solid] (g1) to (g2);
			\draw[solid] (g2) to (e);

			\node[fit=(g1) (g2)] (xyzbox) {};
			
			\draw [-, black] (xyzbox.north west) [square left brace] to node[left] {$g_5\,\colon$} (xyzbox.south west);
			\draw [-, black] (xyzbox.north east) [square right brace] to (xyzbox.south east) ;
		\end{scope}
	\end{tikzpicture}
	\caption{Clock-aware gadget types. The gadget is the  $g_1$ corresponds to a void gadget, $g_2$ corresponds to a reset gadget, $g_3$ corresponds to a clock-check gadget, $g_4$ corresponds to an empty gadget, $g_5$ corresponds to a composed gadget. In each gadget version, the entry node is the one pointed by the top-down arrow, and the exit node is the $\eta_i$ below.}
	\label{fig:gadget-types}
\end{figure}

We also will present the following definitions. If $u_1$ and $u_2$ are two non-union nodes such that $\uleft(u_1) = u_2$, then we will say that $u_1$ and $u_2$ are \textit{in series}, and we will define $u_1 \circ u_2$ as a new clock-aware gadget such that $\semauxop{u_1\circ u_2} = \semauxop{u_1}$. Likewise, if $g_1$ and $g_2$ are two clock-aware gadgets such that $\exit(g_1) = \entry(g_2)$, we will say that $g_1$ and $g_2$ are \textit{in series}, and we will define $g_1 \circ g_2$ as a new clock-aware gadget such that $\semauxop{g_1\circ g_2} = \semauxop{g_1}$.

\begin{lemma} \label{lemma:clock-aware-nodes}
	Let $u_1$ and $u_2$ be reset or clock-check nodes \textit{in series}, and $u$ a node such that $\dleft(u_2) = u$. We can construct a new clock-aware gadget $g_{1,2}$ having $\exit(g_{1,2}) = u$ such that $\semauxop{g_{1,2}} = \semauxop{u_1 \circ u_2}$.
\end{lemma}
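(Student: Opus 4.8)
The plan is to argue by a case analysis on the types of the two nodes in series, using the fact that $\semauxop{u_1 \circ u_2} = \semauxop{u_1}$ holds by definition; the goal thus reduces to realizing the set $\semauxop{u_1}$ — which I compute by unfolding once through $\semauxop{u_2}$ and then through $\semauxop{u}$ — as the entry semantics of a gadget with at most two nodes whose exit node is $u$. Since each of $u_1,u_2$ is either a reset or a clock-check node, there are four cases. I will write $s_i$ for the label of a reset node, and recall that such a node overwrites the third coordinate of every triple with $s_i$; and I will write $(t_0^i,c_i)$ for the label of a clock-check node, recalling that it filters its child's semantics by $t_0^i - t \le c_i$, equivalently $t \ge t_0^i - c_i$.

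First I would dispatch the two cases in which the outer node $u_1$ is a reset with label $s_1$. If $u_2$ is also a reset, then unfolding gives $\semauxop{u_1} = \{(i,\iota,s_1)\mid (i,\iota,t')\in\semauxop{u}\}$, so the two resets collapse: I take $g_{1,2}$ to be the reset gadget consisting of a single reset node labelled $s_1$ with $\dleft$ pointing to $u$. If instead $u_2$ is a clock-check $(t_0^2,c_2)$, then $\semauxop{u_1}=\{(i,\iota,s_1)\mid (i,\iota,t')\in\semauxop{u},\ t_0^2 - t' \le c_2\}$, which is exactly the semantics of the composed gadget whose entry is the reset node $u_1$ pointing to the clock-check node $u_2$ pointing to $u$. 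Here no simplification is needed: the chain $u_1\circ u_2$ is already a valid two-node (composed) gadget with $\exit(g_{1,2})=u$.

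Next I would handle the two cases in which $u_1$ is a clock-check $(t_0^1,c_1)$. If $u_2$ is a reset $s_2$, then every triple in $\semauxop{u_2}$ has third coordinate $s_2$, so the outer filter $t_0^1 - s_2 \le c_1$ is a single constant condition: when it holds, $\semauxop{u_1}=\semauxop{u_2}$ and I take the reset gadget with label $s_2$ pointing to $u$; when it fails, $\semauxop{u_1}=\varnothing$ and I take the empty gadget (still with exit set to $u$, whose child pointer is irrelevant to its semantics). Finally, if $u_2$ is a clock-check $(t_0^2,c_2)$, unfolding gives $\semauxop{u_1}=\{(i,\iota,t)\in\semauxop{u}\mid t \ge \max(t_0^1-c_1,\ t_0^2-c_2)\}$, a single lower bound on $t$; I realize it with one clock-check node, e.g. with label $(t_0^1,\,c)$ where $c = t_0^1 - \max(t_0^1-c_1,\,t_0^2-c_2)$, yielding a clock-check gadget with exit $u$.

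The main obstacle — and the only step that is not bookkeeping — is the double clock-check case, and its resolution is exactly what justifies restricting to the monotonic $\le$-setting: each clock-check constrains the reset time $t$ from below, so the conjunction of two of them is again a single lower bound $t \ge \max(\cdot,\cdot)$ and is therefore expressible by one clock-check node. Were checks allowed in both directions, two of them would confine $t$ to a bounded interval, which no single clock-check node can express, and the gadget could not be kept to two nodes. In writing up, I would also note that in every case the constructed gadget has at most two nodes and satisfies $\exit(g_{1,2})=u$, which is precisely what is needed so that replacing $u_1\circ u_2$ by $g_{1,2}$ does not increase $\odepth$, and I would verify the set equalities $\semauxop{g_{1,2}}=\semauxop{u_1\circ u_2}$ in each case directly from the node semantics.
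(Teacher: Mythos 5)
Your proposal is correct and follows essentially the same four-case analysis as the paper, with the same constructions in each case (collapsing two resets to one, keeping reset-then-check as a composed gadget, and reducing check-then-reset to a single reset or an empty gadget depending on whether the constant condition holds). The only divergence is the double clock-check case, which you handle uniformly via a single node encoding $t \ge \max(t_0^1-c_1,\,t_0^2-c_2)$, whereas the paper splits into three subcases and, when the two windows are disjoint, returns an explicit empty gadget rather than a vacuous clock-check node --- semantically equivalent for this lemma, though the explicit empty gadget is what the later safety condition on clock-check nodes (that $t_0 - \maxreset(\uleft(u)) \le c$) and the output-linear enumeration argument rely on.
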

\begin{proof}
	In this proof, we will assume that $u_i'$ is a copy of the node $u_i$.
	\begin{itemize}
		\item Let $u_1$, $u_2$ be reset nodes having $\lambda(u_1) = t_1$ and $\lambda(u_2) = t_2$. Then, by definition, $\semauxop{u_2} = \{(i, \iota, t_2) \,\mid\, (i, \iota, t) \in \semauxop{u}\}$, and $\semauxop{u_1} = \{(i, \iota, t_1) \,\mid\, (i, \iota, t) \in \semauxop{u_2}\} =\{(i, \iota, t_1) \,\mid\, (i, \iota, t) \in \semauxop{u}\}$. As such, we can construct $g_{1,2}$ as being a reset gadget with $\nodes(g_{1,2}) = [u_1']$.
		\item Let $u_1$ and $u_2$ be clock-check nodes having $\lambda(u_1) = (t_1, W_1)$ and $\lambda(u_2) = (t_2, W_2)$. Then, by definition, $\semauxop{u_2} = \{(i, \iota, t)\in \semauxop{u} \,\mid\, t_2 - t \le W_2 \}$, and $\semauxop{u_1} = \{(i, \iota, t)\in \semauxop{u_2} \,\mid\, t_1 - t \le W_1\} = \{(i, \iota, t)\in \semauxop{u} \,\mid\, t_1 - W_1 \le t \,\land\, t_2 - W_2 \le t\}$. We will assume w.l.o.g. that $t_1 \ge t_2$, and will distinguish three cases: if $t_1-W_1 \le t_2-W_2$ (one time window is contained within the other), then $g_{1,2}$ is a new clock-check gadget, having $\nodes(g_{1,2}) = [u_2']$. If $t_1-W_1 \in (t_2-W_2, t_2]$ (the time windows have an intersection), then $g_{1,2}$ is a new clock-check gadget, having $\nodes(g_{1,2}) = [v]$ and $\lambda(v) = (t_2, W_1 - (t_1-t_2))$. In every other case (the time windows don't overlap), $g_{1,2}$ is the empty gadget.
		\item Let $u_1$ be a clock-check node having $\lambda(u_1) = (t_1, c)$ and $u_2$ be a reset node having $\lambda(u_2) = t_2$. Then $\semauxop{u_2} =  \{(i, \iota, t_2) \,\mid\, (i, \iota, t) \in \semauxop{u}\}$, and $\semauxop{u_1} = \{(i, \iota, t)\in \semauxop{u_2} \,\mid\, t_1 - t \le c\} = \{(i, \iota, t_2) \,\mid\, (i, \iota, t) \in \semauxop{u} \land t_1 - t_2 \le c \}$. If $t_1-t_2 \le c$, $g_{1,2}$ is a reset gadget having $\nodes(u) = [u_2']$, and in every other case, $g_{1,2}$ is an empty gadget.
		\item Let $u_1$ be a reset node and $u_2$ be a clock-check node. Then $g_{1,2}$ is a composed node such that $\nodes(g_{1,2}) = [u_1', u_2']$.
	\end{itemize}
	
\end{proof}

We have proven that any two reset or clock-check nodes \textit{in series} can be replaced by a single new clock-aware gadget having its same semantics. Now we will prove a stronger property: any two clock-aware gadgets \textit{in series} can be replaced by a single new clock-aware gadget having its same semantics.

\begin{lemma}
	If $g_1$ and $g_2$ are clock-aware gadgets \textit{in series} and $u$ is a node such that $\exit(g_2) = u$, then we can construct a new clock-aware gadget $g_{1,2}$ having $\exit(g_{1,2}) = u$ such that $\semauxop{g_{1,2}} = \semauxop{g_1 \circ g_2}$.
\end{lemma}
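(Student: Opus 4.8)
The plan is to reduce the composition of two gadgets to a chain of at most four reset/clock-check nodes sitting above $u$, and then to collapse that chain one node at a time by repeatedly invoking Lemma~\ref{lemma:clock-aware-nodes}. Recall that every clock-aware gadget has one of five shapes, so its list $\nodes(\cdot)$ is empty (void), a single reset node, a single clock-check node, a single empty node, or a reset node followed by a clock-check node (composed). Writing $g_1 \circ g_2$ as the series chain $\nodes(g_1)\,\nodes(g_2)$ terminating at $u = \exit(g_2)$, we see it is a chain of at most four nodes, each a reset, a clock-check, or (at most once) an empty node, with the bottom node pointing to $u$.

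First I would dispatch the degenerate cases. If $g_1$ is void then $\entry(g_1)=\entry(g_2)$ and $\semauxop{g_1 \circ g_2} = \semauxop{g_1} = \semauxop{g_2}$, so a fresh copy of $g_2$ (whose exit is already $u$) is the desired $g_{1,2}$; symmetrically, if $g_2$ is void then $\entry(g_2)=\exit(g_2)=u$ and $g_{1,2}$ is a copy of $g_1$. If either gadget is the empty gadget, then some node of the chain is an empty node; since $\semauxop{\text{empty}} = \varnothing$ and both the reset rule $\{(i,\iota,t):(i,\iota,t')\in\varnothing\}$ and the clock-check rule $\{(i,\iota,t)\in\varnothing : \cdots\}$ yield $\varnothing$, every node above it also has empty auxiliary semantics; hence $\semauxop{g_1\circ g_2}=\varnothing$ and we take $g_{1,2}$ to be a fresh empty gadget whose node points to $u$.

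In the main case both $g_1,g_2\in\{\text{reset},\text{clock-check},\text{composed}\}$, so the chain consists of $m\in\{2,3,4\}$ reset or clock-check nodes in series above $u$. I would collapse it by the following \emph{prepend} step, applied to the nodes of the chain from the bottom upward. Maintain a normal-form gadget $g^*$ (at most two nodes: an optional reset above an optional clock-check, or else void/empty) with $\exit(g^*)=u$, initialised to the void gadget. To prepend a reset/clock-check node $v$ sitting immediately above $\entry(g^*)$: if $g^*$ is void, return the one-node gadget $v$; if $g^*$ is empty, return empty. Otherwise set $w=\entry(g^*)$ and $u'=\dleft(w)$, and apply Lemma~\ref{lemma:clock-aware-nodes} to the nodes $v$ and $w$ (which are in series with $\dleft(w)=u'$) to obtain a gadget $g'$ with $\exit(g')=u'$ and $\semauxop{g'}=\semauxop{v\circ w}$. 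If $g^*$ is a single node, then $u'=u$ and $g'$ is already the desired normal-form gadget with exit $u$. If $g^*$ is composed, then $w$ is a \emph{reset} node and $u'=c^*$ is its clock-check child; the key observation is that merging any node with a reset node below it yields at most one node (reset$\circ$reset is a reset, and clock-check$\circ$reset is a reset or empty), so $g'$ has at most one node, and placing it above $c^*$ gives a normal-form gadget of at most two nodes (composed, a single reset, or empty) with exit $u$. The semantics are preserved in every subcase because the Lemma builds $g'$ with exit $u'=\dleft(w)$, so $g'$ sits over exactly the part of $g^*$ lying below $w$ and $\semauxop{\entry(g')}=\semauxop{v\circ w}$ computed over that part.

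Applying this prepend step to each node of the chain in turn yields, by induction on the chain length, a normal-form gadget $g_{1,2}$ with $\exit(g_{1,2})=u$ and $\semauxop{g_{1,2}}=\semauxop{g_1\circ g_2}$, which is exactly what is claimed. The main obstacle is keeping the node count bounded by two throughout the induction: this is precisely what the observation above — that anything merged with a reset collapses to a single node — guarantees, ensuring that prepending above a composed gadget never creates a three-node chain.
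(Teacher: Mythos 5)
Your proof is correct and follows essentially the same route as the paper's: dispatch the void and empty cases directly, then collapse the remaining chain of reset/clock-check nodes by iterated application of Lemma~\ref{lemma:clock-aware-nodes}. Your version is in fact more explicit than the paper's (which merely says to apply that lemma ``iteratively \ldots until the resulting clock-aware gadget is single and minimal''), since you isolate the observation that anything merged onto a reset node collapses to at most one node, which is exactly what keeps the intermediate gadget within the two-node normal form.
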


\begin{proof}
	Let $g_1$ and $g_2$ are clock-aware gadgets \textit{in series} and $u$ a node such that $\exit(g_2) = u$. We will prove that we can construct $g_{1,2}$ for any type of clock-aware gadgets.
	\begin{itemize}
		\item Let $g_1$ be an void gadget w.l.o.g. Then $g_{1,2} = g_2$.
		\item Let $g_1$ or $g_2$ be empty gadgets. Then $g_{1,2}$ is a empty gadget too.
		\item Let $g_1$ and $g_2$ be reset or clock-check gadgets. This case follows directly after Lemma \ref{lemma:clock-aware-nodes}.
		\item Let $g_1$ be a reset or clock-check gadget and $g_2$ be a composed gadget w.l.o.g. This case follows directly after iteratively applying Lemma \ref{lemma:clock-aware-nodes} to each pair of nodes \textit{in series} until the resulting clock-aware gadget is single and minimal.
		\item Let $g_1$ and $g_2$ be composed gadgets. This case is solved the same way as the previous one.
	\end{itemize}
\end{proof}

We need to ensure some properties on each node so that the $\odepth$ of every node of the CAECS is bounded by $k$ when we apply the operations. For that matter, we define a \textit{safe node}, and will enumerate the conditions that make a node $u$ a safe node. If $u$ is a bottom, extended or empty node, it is always a \textit{safe node}. If $u$ is a union node, then it is a safe node if $\odepth(u) \le 9$. If $u$ is a part of a clock-aware gadget $g$, then it is a safe node if $\exit(g)$ is either a safe union node, a bottom node or an extended node, that means, there cannot be any clock-aware gadgets \textit{in series} in the CAECS. Additionally, if $u$ is a clock-check node with $\lambda(u) = (t_0, c)$, then it is required that $t_0 - \maxreset(\uleft(u)) \le c$. Finally, if a node has a empty node as a child, then it is not safe, even if it meets the other conditions required to be safe. If every node in the CAECS is a \textit{safe node}, then the $\odepth$ of every node of the CAECS is bounded by $k=11$. Note that $k \leq 11$ which is bounded.

To allow for a \textit{output-linear} enumeration algorithm, we must provide operations to update our CAECS while ensuring that every node remains a safe node after the update. The operations are the following:
\[
\begin{array}{c}
	b \leftarrow \dbottom(i, t_i) \ \  \ \ \ \ \onode \leftarrow \dextend(n, j, L) \ \ \ \ \ \  \unode \leftarrow \dunion(n_1, n_2)\\
	b \leftarrow \dreset(n, t) \ \  \ \ \ \ \onode \leftarrow \dclockcheck(n, t_0, c)
\end{array}
\]
We want to ensure that the CAECS continues to have only safe nodes after running any operation. Furthermore, we will guarantee that all the nodes of the CAECS after using these methods not only are safe, but also they follow one of two node structures, as depicted in Figure~\ref{fig:safe-nodes}. The first structure is a clock-aware gadget with an extended node as its exit node. The second structure is a clock-aware gadget with an union node as its exit node, that has the first structure as a left child and a node with $\odepth$ of at most $6$ as a right child. All the nodes in a safe structure are required to be safe.

Aiming towards that goal, the operations are implemented as follows. They are based on the construction of the ECS in \cite{BucchiGQRV22}. The first method, $\dbottom(i, t_i)$, returns a new node $v$ such that $\semauxop{v} = \{(i, \varnothing, t_i)\}$, by adding a new bottom node to the CAECS. The second method, $\dextend(n, j, L)$, returns a new node $v$, such that $\semauxop{v} = \{(i, \iota', t)\mid \exists (i, \iota, t)\in \semauxop{n}.\, \iota' = \iota \cup \{(j, L)\}\}$, by creating the extended node $v$ and linking it to node $n$. The third method, $\dunion(n_1, n_2)$ requires that $\maxreset(n_1) = \maxreset(n_2)$, and returns a new node $v$ such that $\semauxop{v} = \semauxop{n_1}\cup \semauxop{n_2}$. This method is depicted in Figure~\ref{fig:union-details}. The fourth method, $\dreset(n, t)$ returns a new node $v$ such that $\semauxop{v} = \{(i, \iota, t) \mid (i, \iota, t') \in \semauxop{n}\}$, by getting the largest clock-aware gadget from node $n$, named $g_2$ and creating a new reset gadget $g$ such that $\semauxop{g} = \semauxop{g_1 \circ g_2}$. The fifth method, $\dclockcheck(t_0, c)$ returns a new node $v$ such that $\semauxop{v} = \{(i, \iota, t) \in \semauxop{n}\mid t_0 - t \le c\}$, by getting the largest clock-aware gadget from node $n$, named $g_2$ and creating a new clock-check gadget $g$ such that $\semauxop{g} = \semauxop{g_1 \circ g_2}$. It is very important to note that if $t_0-\maxreset(n) > c$, this method returns an empty node, because this ensures that the enumeration of any node in an union-list (introduced later) will have at least one output. We detail the methods in Algorithm \ref{alg:update-caecs}. All of these methods require that the node inputs are not empty nodes. 

\begin{lemma}
	If the node inputs to the methods $\dbottom(i, t_i)$, $\dextend(n, j, L)$, $\dunion(n_1, n_2)$, $\dreset(n, t)$, $\dclockcheck(n, t_0,c)$ are roots to a safe structure, the nodes returned by the methods are also a root to a safe structure.
\end{lemma}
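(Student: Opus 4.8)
The plan is to prove the invariant by a case analysis on the five update methods, in each case exhibiting the Type-1 or Type-2 safe structure that the returned node heads and checking that every newly created node is safe. Throughout I will use that a safe structure is defined recursively: a Type-1 structure is a clock-aware gadget whose exit is an output node, and a Type-2 structure is a clock-aware gadget whose exit is a safe union node $u$ with $\dleft(u)$ heading a Type-1 structure and $\odepth(\dright(u)) \le 6$; the recursion bottoms out at output nodes. Since a safe gadget contributes at most two nodes and a safe union node has $\odepth(u) \le 9$, every node of such a structure has $\odepth \le 11 = k$, so it suffices to re-establish the structural shape after each operation.

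The three easy cases are $\dbottom$, $\dextend$, and the reset/check pair. For $\dbottom(i,t_i)$ the returned node is a fresh bottom node, which is always safe and is the base case of the recursive safe-structure definition (a void gadget over an output node) of output-depth $0$. For $\dextend(n,j,L)$ I would create an extended node $v$ with $\dleft(v)=n$; extended nodes are always safe, $\odepth(v)=0$, and since $n$ heads a safe structure by hypothesis, $v$ heads a Type-1 structure (void gadget, extended exit). For $\dreset(n,t)$ and $\dclockcheck(n,t_0,c)$ the danger is stacking a new reset or check node on top of the gadget already sitting at the head of $n$'s structure, which would create two gadgets in series and violate safety. This is exactly what the gadget-composition results avoid: taking $g_2$ to be the maximal gadget at the head of $n$'s structure and $g_1$ the new reset (resp.\ check) gadget, Lemma~\ref{lemma:clock-aware-nodes} and its successor let me replace $g_1 \circ g_2$ by a single gadget of at most two nodes with the same exit as $g_2$. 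Hence the exit is still the output node or safe union node that $n$'s structure already had, the shape (Type-1 or Type-2) is preserved, and the gadget stays of size at most $2$. For $\dclockcheck$ I would additionally return an empty node whenever $t_0 - \maxreset(n) > c$; this both makes the output semantically empty and guarantees the clock-check safe-node condition $t_0 - \maxreset(\dleft(u)) \le c$ whenever a genuine check node is produced.

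The main obstacle is the $\dunion(n_1,n_2)$ case, and this is where the slack in the constants $9$ and $6$ is spent. Here $n_1$ and $n_2$ head safe structures with $\maxreset(n_1)=\maxreset(n_2)$, but either of them may itself be a Type-2 structure headed by a union node, so I cannot simply create a union node with children $n_1,n_2$: the left child must head a Type-1 structure and the right child must satisfy $\odepth \le 6$. The plan is to follow the bounded unfolding of the ECS union from~\cite{BucchiGQRV22} (depicted in Figure~\ref{fig:union-details}): peel off the topmost gadgets and, when a child is a Type-2 structure, expose its internal union so that one Type-1 branch becomes the left child of the new union node while the remaining material is pushed into the right child. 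Because every safe structure has output-depth at most $11$ and its Type-1 sub-branch has a small bounded depth, only a constant number of levels need to be rearranged, so the resulting right child stays within $\odepth \le 6$ while the new union node stays within $\odepth \le 9$; the time-ordering requirement $\maxreset(\dleft(u)) \ge \maxreset(\dright(u))$ holds because the two maxresets are equal. Verifying that this rearrangement keeps the structure duplicate-free and re-establishes exactly the Type-1-left/bounded-right shape is the technical heart of the argument; the other four cases then combine with it to give the lemma.
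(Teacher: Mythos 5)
Your proposal follows essentially the same route as the paper: a case analysis over the five methods, with $\dbottom$ and $\dextend$ immediate, $\dreset$/$\dclockcheck$ handled by merging the new node with the maximal gadget at the head of the input structure (plus the empty-node guard when $t_0 - \maxreset(n) > c$), and $\dunion$ resolved by the three-case rearrangement of Figure~\ref{fig:union-details} that exposes the internal unions of Type-2 inputs so the new union keeps a Type-1 left child and a right child of $\odepth \le 6$. One small point to tighten in the union case: the equality $\maxreset(n_1)=\maxreset(n_2)$ only justifies time-ordering at the \emph{top} union node, whereas the inner union created when both inputs are Type-2 has as children the right subtrees of the two original structures, whose max-resets can differ arbitrarily --- this is exactly why Algorithm~\ref{alg:update-caecs} performs the swap at line~\ref{line:maxswap}, and your argument should invoke that swap rather than the top-level equality.
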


\begin{proof}
	We will go through the different cases and prove each one.
	\begin{itemize}
		\item If the method is $\dbottom(i, t_i)$, then the outputted node is a bottom node with no children. This output is the root of a type 1 safe structure.
		\item If the method is $\dextend(n, j)$, then if $n$ is the root of a safe structure, all of its nodes are safe. When adding a new extended node $v$ pointing to $n$, $v$ is a safe node by definition and so are all of the other nodes by assumption. Also, the node $v$ is the root of a type 1 safe structure.
		\item If the method is $\dunion(n_1, n_2)$, $n_1$ and $n_2$ are required to have the same max-reset. then there are 3 cases: if $n_1$ and $n_2$ are type 1 structures (the first case of Algorithm~\ref{alg:update-caecs}), then the result is the structure \textbf{(a)} of Figure~\ref{fig:safe-nodes}, and the result is the root of a type 2 safe structure. If $n_1$ is a root of a type 1 safe structure and $n_2$ is the root of a type 2 safe structure w.l.o.g. (the third case of Algorithm~\ref{alg:update-caecs}), then the result is the structure \textbf{(b)} of Figure~\ref{fig:safe-nodes}, and the result is the root of a type 2 safe structure. If $n_1$ and $n_2$ are roots of type 2 safe structures (the second case of Algorithm~\ref{alg:update-caecs}), then w.l.o.g. the result is the structure \textbf{(c)} of Figure~\ref{fig:safe-nodes}, and the result is the root of a type 2 safe structure. The only change is when $\maxreset(n_1) < \maxreset(n_2)$, and in that case the left and right childs of the third union are inverted, but the guarantees are the same.
		\item If the method is $\dreset(n, t)$ or $\dclockcheck(n, t_0, c)$, then whether $n$ is the root of a type 1 safe structure or a  root of a type 2 safe structure, then the reset or clock-check node is composed with the gadget at the root of the input safe structure and the result is the root of a safe structure of the same type as $n$. Additionally, if the method is $\dclockcheck(n, t_0, c)$ and $t_0 - \maxreset(n) > c$, then an empty node is returned, thus enforcing the condition that every safe clock-check node must have at least one output.
	\end{itemize}
\end{proof}

\begin{figure}[t]
	\centering
	\begin{tikzpicture}[->,
		>=stealth',
		initial text= {},
		initial distance= {2.5mm}]		
		
		\begin{scope}[xshift=0cm]
			\node (g) at (0, 0) {\bracketnode{g}};
			\node[extnode, accepting below] (e) at (0,-1) {};
			
			\draw (g) to (e);
			\node[left of=g] {$s_1\,\colon$};
			
		\end{scope}
		
		\begin{scope}[xshift=2.5cm]
			\node (g1) at (0, 0) {\bracketnode{g}};
			\node[unionnode] (u) at (0, -1) {};
			\node (g2) at (-0.5,-2) {\bracketnode{g}};
			\node[extnode, accepting below] (e1) at (-0.5,-3) {};
			\node[anynode] (e2) at (0.5,-2) {$\eta_1$};

			\draw[solid] (g1) to (u);
			\draw[solid] (u) to (g2);
			\draw[solid] (g2) to (e1);
			\draw[dashed] (u) to (e2);
			\node[left of=g1] {$s_2\,\colon$};
		\end{scope}
	\end{tikzpicture}
	\caption{Node structures returned by any CAECS update method. Each \bracketnode{g} figure is a clock-aware gadget. The node returned is the root of the tree. $s_1$ corresponds to the first node structure, and $s_2$ corresponds to the second safe node structure. The node $\eta_1$ is a node with $\odepth \le 6$.
	}
	\label{fig:safe-nodes}
\end{figure}
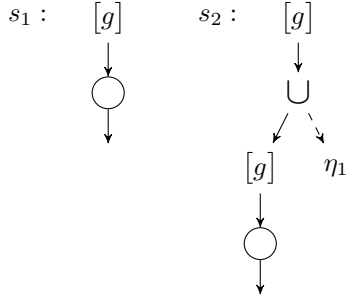

\begin{figure}[t]
	\centering
	\begin{tikzpicture}[->,
		>=stealth',
		initial text= {},
		initial distance= {2.5mm}]		
		
		\begin{scope}[xshift=0cm]
			\node[unionnode, label=west:{$r\,\colon$}] (u) at (0, -2.5) {};
			\node[label=west:{$n_1\,\colon$},inner xsep=0] (g1) at (-0.75,-3.5) {\bracketnode{g_1}};
			\node[label=west:{$n_2\,\colon$},inner xsep=0] (g2) at (0.75,-3.5) {\bracketnode{g_2}};
			\node[extnode, accepting below] (e1) at (-0.75,-4.5) {};
			\node[extnode, accepting below] (e2) at (0.75,-4.5) {};
			
			\draw[solid] (u) to (g1);
			\draw[dashed] (u) to (g2);
			\draw[solid] (g1) to (e1);
			\draw[solid] (g2) to (e2);
			
			\node at (0, -6) {\textbf{(a)}};
		\end{scope}
		
		\begin{scope}[xshift=3cm]
			\node[unionnode, label=west:{$r\,\colon$}] (u1) at (0, -1) {};
			
			\node[label=west:{$n_1\,\colon$},inner xsep=0] (g1) at (-1,-2) {\bracketnode{g_1}};
			\node[extnode, accepting below] (e1) at (-1,-3) {};
			
			\node[unionnode] (u2) at (1.25, -1.25) {};
			\node[label=west:{$n_2\,\colon$},inner xsep=0] (g2) at (2,-2.125) {\textcolor{gray}{\bracketnode{g_2}}};
			\node (g23) at (0,-2.75) {\bracketnode{g_{2,3}}};
			
			\node[grayunionnode] (u3) at (2, -3) {};
			\node (g22) at (3, -2) {\bracketnode{g_2'}};
			
			\node (n) at (3,-3.5) {$\eta$};
			\node (g3) at (1,-3.5) {\textcolor{gray}{\bracketnode{g_3}}};
			\node[extnode, accepting below] (e2) at (1, -4.5) {};
			
			\draw[solid] (u1) to (g1);
			\draw[solid] (g1) to (e1);
			\draw[dashed] (u1) to (u2);
			\draw[solid] (u2) to (g23);
			\draw[dashed] (u2) to (g22);
			\draw[solid, gray] (g2) to (u3);
			\draw[dashed] (u3) to (n);
			\draw[solid] (g22) to (n);
			\draw[solid] (g23) to (e2);
			\draw[solid, gray] (u3) to (g3);
			\draw[solid, gray] (g3) to (e2);
			
			\node at (1, -6) {\textbf{(b)}};
			
		\end{scope}

		\begin{scope}[xshift=9cm]
			\node[unionnode, label=west:{$r\,\colon$}] (u1) at (0, 0) {};
			\node[unionnode] (u2) at (1, -0.5) {};
			\node[unionnode] (u3) at (2, -1) {};
			\node (g12) at (-2,-2.5) {\bracketnode{g_{1,2}}};
			\node[extnode, accepting below] (e1) at (-1,-4) {};

			\node [label=west:{$n_1\,\colon$},inner xsep=0] (g1) at (-1,-0.5) {\textcolor{gray}{\bracketnode{g_1}}};
			\node[grayunionnode] (u4) at (-1, -2) {};
			\node (g2) at (-1,-3) {\textcolor{gray}{\bracketnode{g_2}}};
			\node (g11) at (0,-2) {\bracketnode{g_1'}};
			\node (n1) at (0, -3) {$\eta_1$};
			
			\node (g34) at (1,-3) {\bracketnode{g_{3,4}}};
			\node (g33) at (3,-1.75) {\bracketnode{g_3'}};
			
			\node (n2) at (3,-3.625) {$\eta_2$};
			
			\node [label=west:{$n_2\,\colon$},inner xsep=0] (g3) at (2.25,-1.75) {\textcolor{gray}{\bracketnode{g_3}}};
			\node[grayunionnode] (u5) at (2.25, -2.75) {};
			\node (g4) at (1.75,-3.625) {\textcolor{gray}{\bracketnode{g_4}}};
			\node[extnode, accepting below] (e2) at (1.75,-4.5) {};

			\draw[solid] (u1) to (g12);
			\draw[solid] (g12) to (e1);
			\draw[dashed] (u1) to (u2);
			\draw[dashed] (u2) to (u3);
			\draw[solid, gray] (g1) to (u4);
			\draw[solid, gray] (u4) to (g2);
			\draw[solid, gray] (g2) to (e1);
			\draw[dashed, gray] (u4) to (n1);
			\draw[solid] (g11) to (n1);
			\draw[solid] (u3) to (g11);
			\draw[solid] (u2) to (g34);
			\draw[dashed] (u3) to (g33);
			\draw[solid] (g33) to (n2);
			\draw[solid, gray] (g3) to (u5);
			\draw[dashed, gray] (u5) to (n2);
			\draw[solid, gray] (u5) to (g4);
			\draw[solid, gray] (g4) to (e2);
			\draw[solid] (g34) to (e2);
			
			\node at (0.25, -6) {\textbf{(c)}};
			
		\end{scope}
	\end{tikzpicture}
	\caption{Result of the union operation over nodes $n_1$ and $n_2$. The method returns the node $r$ in each case. Some cases are omitted when $\maxreset(\uright(\exit(n_1))) < \maxreset(\uright(\exit(n_2)))$, but that case is very similar to the ones depicted. Here, $\eta$, $\eta_1$ and $\eta_2$ are safe nodes with $\odepth \le 6$. Also, $g_{i,j}$ is a gadget resulting of the composing of gadgets $g_i$ and $g_j$, and $g_i'$ is a copy of the gadget $g_i$. You can check that the result of the union operation only contains safe nodes and that it returns a safe structure.}
	\label{fig:union-details}
\end{figure}
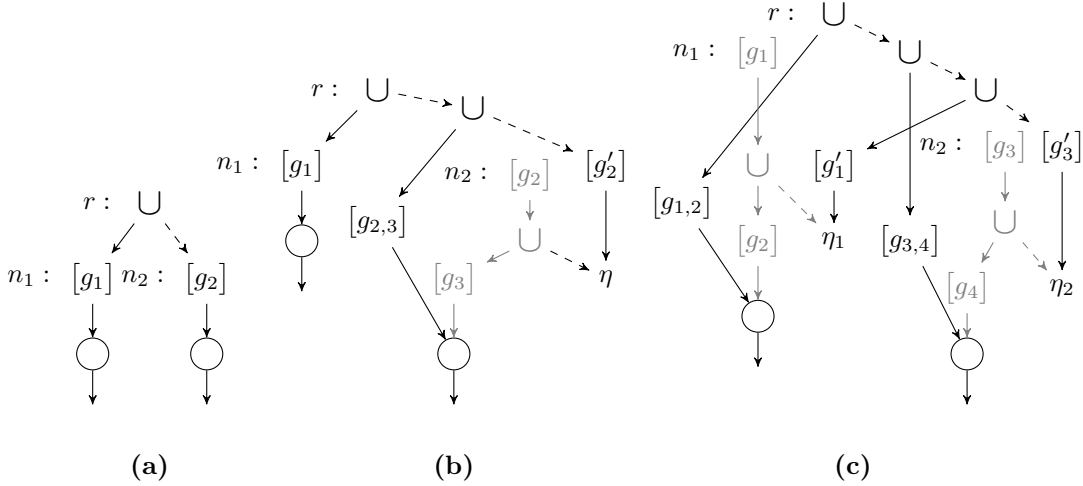

\algdef{SE}[DOWHILE]{Do}{doWhile}{\algorithmicdo}[1]{\algorithmicwhile\ #1}%
\algnewcommand{\LineComment}[1]{\State \(\triangleright\) #1}

\begin{algorithm*}[t]
	\caption{Methods for updating the CAECS. All of these methods return safe nodes. In this algorithm, $\getgadget(n)$ returns the largest valid gadget starting from the node $n$, and $\mergegadgets(g_1, g_2)$ returns a new gadget corresponding to $g_1\circ g_2$.}\label{alg:update-caecs}
	\smallskip
	\begin{varwidth}[t]{0.80\textwidth}
		\begin{algorithmic}[1]

			\LineComment{Creates a new bottom node}
			\Procedure{{\dbottom}}{$i$, $t_i$}%
			\State $b \gets \createbottom(i, t_i)$
			\State \Return $b$
			\EndProcedure	
			
			\State
			
			\LineComment{Returns the union of two nodes}
			\LineComment{Requires that $\max(n_1) = \max(n_2)$}
			\Procedure{{\dunion}}{$n_1$, $n_2$}
			\State $g_1 \gets \getgadget(n_1)$
			\State $g_2 \gets \getgadget(n_2)$
			\If {$\left(\begin{array}{l}
					\disbottom(\exit(g_1))\lor\\
					\disextend(\exit(g_1))\\
					\land \disbottom(\exit(g_2))\lor\\ \disextend(\exit(g_2))
				\end{array}\right)$}
			\State $u \gets \createunion()$
			\State $\dleft(u) \gets \entry(g_1)$
			\State $\dright(u) \gets \entry(g_2)$
			\State \Return $u$
			\ElsIf {$\left(\begin{array}{c}
					\disunion(\exit(g_1)) \\
					\land \disunion(\exit(g_2))
				\end{array}\right)$}
			\State $u_1, u_2, u_3 \gets \createunions()$
			\State $g_3 \gets \getgadget(\dleft(\exit(g_1)))$
			\State $g_4 \gets \getgadget(\dleft(\exit(g_2)))$
			\State $g_5 \gets \getgadget(\dright(\exit(g_1)))$
			\State $g_6 \gets \getgadget(\dright(\exit(g_2)))$
			\State $g_{1+3} \gets \mergegadgets(g_1, g_3)$
			\State $g_{2+4} \gets \mergegadgets(g_2, g_4)$
			\State $g_{1+5} \gets \mergegadgets(g_1, g_5)$
			\State $g_{2+6} \gets \mergegadgets(g_2, g_6)$
			\If{$\left(\begin{array}{c}
					\maxreset(\entry(g_{5}))\\
					 < \maxreset(\entry(g_{6}))
				\end{array}\right)$} \label{line:maxswap}
			\State $\swap(g_5, g_6)$
			\EndIf
			\State $\dright(u_1) \gets u_2$
			\State $\dright(u_2) \gets u_3$
			\State $\exit(g_{1+3}) \gets \exit(g_3)$
			\State $\exit(g_{2+4}) \gets \exit(g_4)$
			\State $\exit(g_{1+5}) \gets \exit(g_5)$
			\State $\exit(g_{2+6})\gets \exit(g_6)$
			\State $\dleft(u_1) \gets \entry(g_{1+3})$
			\State $\dleft(u_2) \gets \entry(g_{2+4})$
			\State $\dleft(u_3) \gets \entry(g_{1+5})$
			\State $\dright(u_3) \gets \entry(g_{2+6})$
			
			\State \Return $u_1$
			
			\algstore{myalg}

		\end{algorithmic}
	\end{varwidth}
	\begin{varwidth}[t]{0.80\textwidth}
		\begin{algorithmic}[1]
			
			\algrestore{myalg}
			\Else
			\If{$\disunion(\exit(g_1))$}
			\State $\swap(g_1, g_2)$
			\EndIf
			\State $u_1,u_2 \gets \createunions()$
			
			\State $g_3 \gets \getgadget(\dleft(\exit(g_2)))$
			\State $g_4 \gets \getgadget(\dright(\exit(g_2)))$
			
			\State $g_{2+3}\gets \mergegadgets(g_2, g_3)$
			\State $g_{2+4}\gets \mergegadgets(g_2, g_4)$
			\State $\dleft(u_1) \gets u_2$
			
			\State $\exit(g_{2+3}) \gets \exit(g_3)$
			\State $\exit(g_{2+4}) \gets \exit(g_4)$
			
			\State $\dleft(u_1) = \entry(g_1)$
			\State $\dleft(u_2) = \entry(g_{2+3})$
			\State $\dright(u_2) = \entry(g_{2+4})$
			\State \Return $u_1$
			
			\EndIf
			\EndProcedure
			
			\State

			\LineComment{Creates a new extended node}
			\Procedure{{\dextend}}{$n$, $j$, $L$}
			\State $e \gets \createextend(j, L)$
			\State $\dleft(e) \gets n$
			\State \Return $e$
			\EndProcedure
			
			\State
			
			\LineComment{Adds a new reset node}
			\Procedure{{\dreset}}{$n$, $t$}
			\State $g_1 \gets \createresetgadget(t)$
			\State $g_2 \gets \getgadget(n)$
			\State $g_{1+2}\gets \mergegadgets(g_1, g_2)$
			\State $\exit(g_{1+2}) \gets \exit(g_2)$
			\State \Return $\entry(g_{1+2})$
			\EndProcedure
			
			\State
			
			\LineComment{Adds a new clock-check node}
			\Procedure{{\dclockcheck}}{$n$, $t_0$, $c$}
			\If{$t_0 - \maxreset(n) > c$} \label{line:maxcheck}
			\State $n' \gets \createempty()$
			\State $\uleft(n') \gets n$
			\State \Return $n'$
			\EndIf
			\State $g_1 \gets \createclockcheckgadget(t_0, c)$
			\State $g_2 \gets \getgadget(n)$
			\State $g_{1+2}\gets \mergegadgets(g_1, g_2)$
			\State $\exit(g_{1+2}) \gets \exit(g_2)$
			\State \Return $\entry(g_{1+2})$
			\EndProcedure

		\end{algorithmic}
	\end{varwidth}
	\smallskip
\end{algorithm*}

\paragraph{Union-lists and their operations}
Now we define the second data structure for the streaming evaluation algorithm. 
We introduce the notion of \textit{union-list}, a structure to store the pointers from nodes of the CAECS to states of the automaton while reading the stream. The \textit{union-list} is a list $u_0\dots u_n$ of nodes of the CAECS. It is required that the first element of the union-list is the root of a type 1 structure, and that all the nodes comply with $\maxreset(u_0) \ge \maxreset(u_j)$ for all $j < n$ and $\maxreset(u_i)>\maxreset(u_{i+1})$ for all $1 \le i \le n-1$, meaning that the \textit{union-list} is ordered by strictly decreasing max-reset. Note that as the input automaton is monotonic, the order of the union list will allow us to easily discard nodes that will not satisfy a time constraint. Only the lower equal case is demonstrated as the greater equal case is analog, but ordered inversely.

The \textit{union-list} has 5 main methods to perform its updates: $\newul(u_0)$, $\insertul(ul, u)$, $\mergeul(ul)$, $\ulclockcheck(\ulist, t_0, c)$ and $\ulreset(\ulist, t)$. The first method, $\newul(u_0)$ creates a new \textit{union-list} with the node $u_0$. The node $u_0$ is required to be a type 1 structure to comply with the previously stated requirement. The second method, $\insertul(ul, u)$ inserts node $u$ in the \textit{union-list} so that the max-resets continue to be ordered. If there already is a node $u'\in ul$ such that $\maxreset(u') = j$, then the node $u'$ is replaced by the node returned by $\dunion(u, u')$. The third method, $\mergeul(ul)$, returns a new node $u''$ such that $\semauxop{u''} = \semauxop{u_0}\cup \dots \cup \semauxop{u_n}$, where $u_0\dots u_n$ are the nodes of the \textit{union-list}. More in detail, if the \textit{union-list} has only one element, it returns that element, and if it has $n>1$ elements it creates $n-1$ union nodes and places the \textit{union-list} nodes so that the time-ordered property is followed (as shown in Figure~\ref{fig:merge-ul}). The fourth method, $\ulclockcheck(\ulist, t_0, c)$, applies $\dclockcheck(n, t_0, c)$ to every node $n$ of the union-list, and returns a new union-list with the nodes in the same order, but filtering out the nodes that become empty nodes. If every node of the union list is an empty node after performing $\ulclockcheck$, then the union-list is deleted. It is important to note that as the union-list is time-ordered, then only the nodes to the right can be left as empty. It is also easy to note that the resulting union-list is also time-ordered, and the first node is the root of a type 1 structure.
The fifth method, $\ulreset(\ulist, t)$ applies $\dreset(n, t)$ to every node $n$ of the union-list. Then it creates a new union-list with the following procedure: first, it instantiates the union-list with the first resulting node as its first node (as it is guaranteed to have $\odepth \le 2$). Afterwards, it calls $\dunion$ repeatedly over the following nodes (as they have the same $\maxreset$), and leaves the resulting node as the second node of the new union-list. Finally, it returns the new union-list. That union-list is time-ordered because all of its nodes have the same $\maxreset$, and it is easy to see that its first node is the root of a type 1 structure (it contains no unions).
\begin{figure}[t]
	\centering
	\begin{tikzpicture}[->,
		>=stealth',
		initial text= {},
		initial distance= {2.5mm}]		
		
		\begin{scope}[xshift=0cm]
			\node[unionnode] (j0) at (0,0) {};
			\node[unionnode] (j1) at (1, -0.5) {};
			\node[] (ji) at (2, -1) {\dots};
			\node[unionnode] (jn) at (3, -1.5) {};
			\node[] (u0) at (-1, -0.5) {$u_0$};
			\node[] (u1) at (0, -1) {$u_1$};
			\node[] (un1) at (2, -2) {$u_{n-1}$};
			\node[] (un) at (4, -2) {$u_n$};
			
			\draw (j0) to (u0);
			\draw (j1) to (u1);
			\draw (jn) to (un1);
			\draw[dashed] (jn) to (un);
			\draw[dashed] (j0) to (j1);
			\draw[dashed] (j1) to (ji);
			\draw[dashed] (ji) to (jn);
			
		\end{scope}
		
	\end{tikzpicture}
	\caption{Node structure returned by $\mergeul(ul)$. The subtree $u_i$ corresponds to the $i$\textsuperscript{th} element of the \textit{union-list}.}
	\label{fig:merge-ul}
\end{figure}
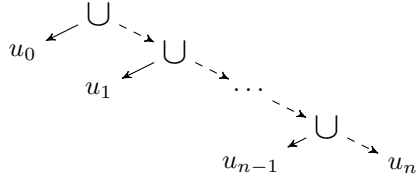

\begin{lemma}
	The result of $\mergeul(ul)$ is the root of a safe structure.
\end{lemma}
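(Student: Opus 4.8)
The plan is to show the node returned by $\mergeul(ul)$ matches the second safe structure $s_2$ of Figure~\ref{fig:safe-nodes} (with a void top gadget), degenerating to $s_1$ when the list is a singleton. First I would fix notation: writing $ul = u_0 u_1 \dots u_m$, recall that $\mergeul$ produces the right-leaning chain of union nodes $j_0, \dots, j_{m-1}$ of Figure~\ref{fig:merge-ul}, where $\dleft(j_i) = u_i$ and $\dright(j_i) = j_{i+1}$ (with $\dright(j_{m-1}) = u_m$), and returns $j_0$. Two invariants feed the argument: (i) each $u_i$ is the root of a safe structure, since every entry of a union-list is the output of one of the CAECS update methods, which by the previous lemma return roots of safe structures; and (ii) by the definition of a union-list, $u_0$ is specifically the root of a \emph{type-$1$} structure, and the list is ordered by strictly decreasing $\maxreset$.

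The key computation is an $\odepth$ bound obtained by unwinding the definition along the left spine of each of the two admissible shapes. A type-$1$ structure is a clock-aware gadget (at most two nodes, and never an empty gadget inside a safe structure) above an extended node, so its root has $\odepth \le 2$; a type-$2$ structure is a gadget above a union node whose left child is a type-$1$ root, so its root has $\odepth \le 2 + 1 + 2 = 5$. Hence $\odepth(u_i) \le 5$ for every $i$ and $\odepth(u_0) \le 2$. Since $\odepth$ counts only left children and $\dleft(j_i) = u_i$, we get $\odepth(j_i) = \odepth(u_i) + 1$, so $\odepth(j_0) \le 3$ and $\odepth(j_i) \le 6$ for all $i \ge 1$; in particular every $j_i$ is a union node with $\odepth(j_i) \le 9$ and is therefore safe. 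The decisive point is that the chain may be arbitrarily long without inflating $\odepth$ at the root, precisely because its long branch extends to the \emph{right}.

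I would then close by pattern-matching. The returned node $j_0$ has a void top gadget, its left child $u_0$ is a type-$1$ structure root by invariant~(ii), and its right child ($j_1$, or $u_m$ when $m \le 1$) has $\odepth \le 6$; this is exactly the shape $s_2$, whose right child is only required to be a safe node of $\odepth \le 6$ rather than itself a fully structured subtree. That every node of the whole subtree is safe follows since each $j_i$ is a safe union node and each $u_i$-subtree consists of safe nodes by invariant~(i); and for $m = 0$ the method returns $u_0$, already a type-$1$ safe structure. I expect the only friction to be bookkeeping rather than depth: one must read off from the union-list definition that $u_0$ is type-$1$ (so it can legitimately serve as the left child of $s_2$) while the remaining $u_i$ may be type-$2$, and track the constants so that $\odepth(j_i) \le 6$ lands exactly on the bound demanded of the right child. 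For the ambient CAECS invariant I would also remark that the strictly decreasing $\maxreset$ order keeps every $j_i$ time-ordered, as $\maxreset(\dleft(j_i)) = \maxreset(u_i) \ge \maxreset(u_{i+1}) = \maxreset(\dright(j_i))$.
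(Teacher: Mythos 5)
Your proof is correct and takes essentially the same route as the paper's: both rest on the $\odepth$ bounds of $2$ for the type-$1$ head and $5$ for the remaining entries, together with the observation that the right-leaning union chain adds only one left step per union node, so each $j_i$ has $\odepth(j_i)=\odepth(u_i)+1\le 6$ and the result matches the shape $s_2$. The paper merely packages this as an induction on the length of the union-list rather than as an explicit computation along the chain.
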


\begin{proof}
	We will prove this result by induction. If the union-list has only one element, then the result of $\mergeul(ul)$ is a safe structure, because the only element of $ul$ is the root of a safe structure.
	
	If the union-list has exactly two elements, then we know that the first element has an $\odepth$ of at most 2, and the second element has an odepth of at most $5$. Then, when adding a union node with the first element as a left child and the second element as a right child, the resulting structure is a type 2 safe structure.
	
	Assume that if $ul$ has at most $n\ge 2$ elements, then $\mergeul(ul)$ returns the root of a safe structure. Let's prove it for a length of $n+1$ elements.
	
	We know that $\mergeul(ul')$ returns the root of a safe structure, for $ul'$ the union-list containing the first $n$ elements of $ul$. We remove from the resulting structure the subtree corresponding to the $n$\textsuperscript{th} node, so that the last union has no longer a right child. We add a new union node, in its place, so its left child is the $n$\textsuperscript{th} node and its right child is the $n+1$\textsuperscript{th} node. This new tree is equal to the tree that is returned by $\mergeul(ul)$.
	
	We know that the $n$\textsuperscript{th} and the $n+1$\textsuperscript{th} node have an odepth of at most $5$ because they are roots of a safe structure. Then, the added union node has an odepth of at most $6$, and all the other nodes are already safe nodes (because they are safe structures). So the returned node is the root of a type 2 safe structure.
\end{proof}

\paragraph{Enumeration iterators}
If $\ecs$ is a CAECS with only safe nodes, and $n$ is a node of $\ecs$, we want to be able to enumerate the set $\sem{n}$, that is, we want to be able to enumerate all the open complex events starting from node $n$. We define the following problem (necessary to demonstrate Theorem~\ref{theo:algorithm}):

\begin{center}
		\begin{tabular}{rl}
			\hline\\[-2ex]
			\textbf{Problem:} & $\EnumCAECS$\\
			\textbf{Input:} & A CAECS $\ecs$, and a node $n$ of the CAECS\\
			\textbf{Output:} & Enumerate $\sem{n}$
			\\\\[-2.2ex]
			\hline
		\end{tabular}
\end{center}

To ensure that we are able to enumerate the complex events from a node of the CAECS efficiently, we define Lemma~\ref{theo:output-linear}, that ensures that if a CAECS is $k$-bounded, then the listing of the results that the CAECS represents can be done in output-linear fashion.

\begin{lemma} \label{theo:output-linear}
	Fix a value $k \ge 1$. Let $\ecs$ be an CAECS that is duplicate-free and $k$-bounded. For every node $n$ of $\ecs$, the set $\sem{n}$ can be enumerated with output-linear delay. 
\end{lemma}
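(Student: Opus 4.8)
The plan is to generalise the output-linear iterator for enumerable compact sets of \cite{MunozR22} (as used in \cite{BucchiGQRV22}) to the two auxiliary node types introduced here, namely reset and clock-check nodes. I would classify bottom and extended nodes as \emph{output} nodes (they create or grow the mapping $\iota$), union nodes as the \emph{branching} nodes, and reset, clock-check and empty nodes as \emph{auxiliary} nodes that do not touch $(i,\iota)$ but only act on the time component $t$. The iterator performs a left-to-right depth-first traversal of the DAG rooted at $n$ using a stack that holds the current spine; it emits a triple $(i,\iota,t)$ of $\semauxop{n}$ whenever it reaches a bottom node, reconstructing $\iota$ from the labels of the extended nodes collected along the spine, and the top level projects each emitted triple to the open complex event $(i,\iota)\in\sem{n}$. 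On each node type the iterator mirrors the recursive definition of $\semauxop{\cdot}$: a bottom node yields its single triple, an extended node appends $(j,L)$ to $\iota$, a union node enumerates its left child and then its right child, a reset node forwards its child while overwriting $t$, and a clock-check node forwards only those triples of its child that pass its test.

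Correctness reduces to a structural induction showing the iterator yields exactly $\semauxop{n}$, and hence $\sem{n}$ after projection. That each $(i,\iota)$ is produced exactly once is precisely the \emph{duplicate-free} hypothesis: at every union node the left and right children contribute disjoint sets of open complex events, so concatenating the two enumerations never repeats an output and no explicit de-duplication step is needed.

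For the delay bound I would reuse the analysis of \cite{MunozR22} for the bottom/extended/union fragment, which already yields output-linear delay once the structure is $k$-bounded, duplicate-free and \emph{full} (every reachable node has non-empty semantics): $k$-boundedness ($\odepth(m)\le k$) bounds by $k$ the number of consecutive auxiliary nodes on any left spine, so that the walk to the next output node has bounded length, while fullness guarantees that whenever the traversal switches to the right child of a union it immediately reaches a fresh output, so the $O(k)$ spine walks are always amortised against a newly produced complex event $C$ of size $\Theta(|\iota|)$. Fullness is supplied by the safe-node invariants of the construction — no safe node has an empty node as a child, hence no empty node is reachable from a non-empty root. Reset nodes are harmless for this count because they have a single child and leave $(i,\iota)$ untouched.

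The only genuine obstacle is the clock-check node, whose filtering could, if done naively, discard an arbitrarily long run of triples and thereby blow up the delay before the next surviving output. The fix is to make the traversal \emph{clock-aware}: while the iterator is inside the scope of a clock-check with parameters $(t_0,c)$, it carries the induced lower bound $\theta = t_0 - c$ on the time component and, before descending into a node $m$, tests $\theta$ against $\maxreset(m)$. Since every triple of $m$ has reset time at most $\maxreset(m)$, a node with $\maxreset(m) < \theta$ contains no surviving triple and can be pruned in $O(1)$; conversely, if $\maxreset(m)\ge\theta$ then the triple attaining $\maxreset(m)$ survives, so $m$ does contain a passing output. The time-ordering invariant $\maxreset(\dleft(u))\ge\maxreset(\dright(u))$ makes this test coherent at unions, and the safe-node condition $t_0-\maxreset(\uleft(u))\le c$ on every clock-check guarantees its child is never pruned wholesale; moreover a reset node fixes $t$ to a constant below it, so any enclosing clock-check resolves that entire subtree uniformly rather than filtering it piecewise. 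With this pruning the iterator never enters a subtree that yields nothing, the amortisation of the base analysis survives the addition of clock-checks, and the delay between consecutive outputs stays bounded by $d\cdot|C_{k+1}|$ for a constant $d$ depending only on $k$, which is exactly output-linear delay.
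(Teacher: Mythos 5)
Your proposal is correct and follows essentially the same route as the paper: a DFS-style iterator whose delay is controlled by the $k$-bound on $\odepth$, with duplicate-freeness removing the need for deduplication, and with union/clock-check subtrees pruned via the $\maxreset$ test against the active lower bound $t_0-c$ (exactly what the paper's \textsc{traverse} procedure and the safe-node condition $t_0-\maxreset(\uleft(u))\le c$ implement). Your write-up is in fact somewhat more explicit than the paper's about why the pruning test is both sound and complete and about how resets resolve an enclosing clock-check uniformly, but these are refinements of the same argument rather than a different one.
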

\begin{proof}
	Recall the definition of output-linear delay: if $C_1,\dots, C_n$ are all the complex events that will be outputted sequentially in an enumeration phase, then there exists a constant $d$ such that it will take $d\cdot |C_i|$ time to output the event $C_i$.
	
	We know that for every result $C_i$ there exists a path from $n$ to a bottom node that encodes that result. As $\ecs$ is $k$-bounded, we know that the number of nodes to get from any node to a bottom or extended node will be at most $k$. So, following just one path from $n$ to a bottom node, we will take at most $k$ steps to arrive to the next extended node. Then, the time complexity of outputting the event $C_i$ is $\mathcal{O}(k\cdot |C_i|) = \mathcal{O}(|C_i|)$, which is output-linear.

	Note that every clock-check node is required to be able to output at most one valid open complex event, so if there is an union node $u$ in the path from $n$ to the bottom node $b$ of result $C_1$, then the right path must either output at least one result, or there must exist a clock-check node $c$ before that union such that the right path of the union does not follow the clock constraint. In that case, then we can just avoid the right path, avoiding steps that will not output any event and maintaining the output-linear delay. As such, we can perform a DFS over the $\ecs$ starting from node $n$ to output all of the results with output-linear delay. Finally, as the $\ecs$ is duplicate-free, then we will not have to take extra time at filtering the duplicates in the results.
	
\end{proof}

We have proven that if the CAECS is $k$-bounded then it is possible to output all its results in output-linear fashion. Towards that end, we will define iterators for every type of node, similar to those defined in \cite{MunozR22}, to enumerate the (open) complex events from every node with output-linear delay. The iterators are depicted in Algorithm~\ref{alg:node-iterators}, which we will explain next. We defined enumerators for every type of node of a CAECS (except the empty node). Each type of iterator provide with three main methods: the method \textsc{create} initializes the iterator with variables that are necessary for it to function. The method \textsc{next} tells us whether there is a complex event sharing the same prefix up until the node pointed at by the iterator, which has not been outputted yet, and if so, it switches the pointers to the next event. The method \textsc{print} outputs the current complex event that the iterators are capturing. Also, the union iterator contains a special method, \textsc{traverse}, that takes some nested union nodes and returns a list with all of the non-union nodes pointed at by the union nodes (it flattens a sequence of union nodes into a list). As a result, when passing a node of a CAECS to the \textsc{create} function of its respective iterator, it creates a structure that allows to output all the complex events that begin in that node, using the \textsc{next} and \textsc{print} methods.
\begin{algorithm*}[t]
	\caption{Enumeration over a node $u$ from some CAECS $\ecs$. Based on the enumeration algorithm in \cite{MunozR22} where $\ustack$ is a stack of nodes.
	}\label{alg:node-iterators}
	\smallskip
	\begin{varwidth}[t]{0.80\textwidth}
		\begin{algorithmic}[1]
			\LineComment{Bottom node iterator}
			\Procedure{{create}}{$v, t_0, c$}
			\State $\hasnext \gets {\bf true}$
			\EndProcedure	
			
			\State
			
			\Procedure{{next}}{}
			\If{$\hasnext = \bf{true}$}
			\State $\hasnext \gets \bf{false}$
			\State \Return $\bf{true}$
			\EndIf
			\State \Return $\bf{false}$
			\EndProcedure
			
			\State
			
			\Procedure{{print}}{}
			\LineComment{Doesn't do anything}
			\EndProcedure
			
			\State
			
			\LineComment{Extended node iterator}
			\Procedure{{create}}{$v, t_0, c$}
			\State $u \gets \textproc{create}(\uleft(v), t_0, c)$
			\EndProcedure	
			
			\State
			
			\Procedure{{next}}{}
			\If{$u.\textproc{next}$}
			\State \Return $\bf{true}$
			\EndIf
			\State \Return $\bf{false}$
			\EndProcedure
			
			\State
			
			\Procedure{{print}}{}
			\State ${\bf print}(\lambda(v))$
			\State $u.\textproc{print}$
			\EndProcedure
			
			\State
			\LineComment{Clock-check node iterator}
			
			\Procedure{{create}}{$v, t_0, c$}
			\State Assume $\lambda(v) := (t_0', c')$
			\If{$t_0 = {\bf{null}}\,\lor\, c = \bf{null}$}				
			\State $u \gets \textproc{create}(\uleft(v), t_0', c')$
			\Else
			\If{$t_0 - c \le t_0' - c'$} \label{line:time-window-contain} \label{line:clock-check-1}
			\State $u \gets \textproc{create}(\uleft(v), t_0', c')$
			\ElsIf{$t_0'-c' < t_0-c \le t_0'$} \label{line:clock-check-2}
			\State $u \gets \textproc{create}(\uleft(v), t_0', c-(t_0-t_0'))$ \label{line:time-window-intersection}
			\EndIf
			
			\EndIf
			\EndProcedure	
			
			\State
			\Procedure{{next}}{}
			\If{$u.\textproc{next}$}
			\State \Return $\bf{true}$
			\EndIf
			\State \Return $\bf{false}$
			\EndProcedure
			
			\State
			
			\Procedure{{print}}{}
			\State $u.\textproc{print}$
			\EndProcedure
			\algstore{myalg}
		\end{algorithmic}
	\end{varwidth}
	\begin{varwidth}[t]{0.80\textwidth}
		\begin{algorithmic}[1]
			\algrestore{myalg}
			\LineComment{Reset node iterator}
			
			\Procedure{{create}}{$v, t_0, c$}
			\State $u \gets \textproc{create}(\uleft(v), \bf{none}, \bf{none})$
			\EndProcedure	
			
			\State
			
			\Procedure{{next}}{}
			\If{$u.\textproc{next}$}
			\State \Return $\bf{true}$
			\EndIf
			\State \Return $\bf{false}$
			\EndProcedure
			
			\State
			
			\Procedure{{print}}{}
			\State $u.\textproc{print}$
			\EndProcedure
			
			\State
			\LineComment{Union node iterator}
			\Procedure{{create}}{$v, t_0, c$}
			\State $\ustack\gets\text{\sf push}(\ustack, v)$
			\State $\ustack\gets\textsc{traverse}(\ustack, t_0, c)$
			\State $u \gets \textsc{create}(\text{\sf top}(\ustack))$
			\EndProcedure	
			
			\State
			
			\Procedure{{next}}{}
			\If{$u.\textsc{next} = \bf{false}$}
			\State $\ustack\gets\text{\sf pop}(\ustack)$
			\If{$\text{\sf length}(\ustack) = 0$}
			\State \Return $\bf{false}$
			\ElsIf{$\disunion(\text{\sf top}(\ustack))$}
			\State $\ustack\gets\textsc{traverse}(\ustack, t_0, c)$
			\EndIf
			\State $u \gets \textsc{create}(\text{\sf top}(\ustack))$
			\State $u.\textsc{next}$
			\EndIf
			\State \Return $\bf{true}$
			\EndProcedure
			
			\State
			
			\Procedure{{print}}{}
			\State $u.\textsc{print}$
			\EndProcedure
			
			\State
			
			\Procedure{{traverse}}{$\ustack, t_0, c$}
			\While{$\disunion(\text{\sf top}(\ustack))$}
			\State $u \gets \text{\sf top}(\ustack)$
			\State $\ustack \gets \text{\sf pop}(\ustack)$
			\If{$t_0-\maxreset(\uright(u)) \le c$}
			\State $\ustack\gets\text{\sf push}(\ustack, \uright(u))$
			\EndIf
			\State $\ustack\gets\text{\sf push}(\ustack, \uleft(u))$
			\EndWhile
			\State \Return $\ustack$
			\EndProcedure
			
			\State
			
			\Procedure{{enumerate}}{$v$}
			\State $u \gets \textsc{create}(v, {\bf{null}}, {\bf{null}})$
			\While{$u.\textsc{next} = \bf{true}$}
			\State $u.\textsc{print}$
			\EndWhile
			\EndProcedure
		\end{algorithmic}
	\end{varwidth}
	\smallskip
\end{algorithm*}

\paragraph{Evaluation algorithm}
We will now present the algorithm to perform the update of the data structure and the enumeration of the results. We want to be able to update the structure in constant time in data complexity, and to enumerate in output-linear delay. We already have an algorithm that allows for the latter, so we will present an algorithm that makes the former possible. 

The algorithm, presented in Algorithm~\ref{alg:evaluation}, will take the following steps: for every new label and timestamp from the stream (line~\ref{line:evaluation-while}), it will add a new bottom node to the data structure (line~\ref{line:evaluation-add-bottom}), then it will execute the transitions from every active state and store the results in a new data structure (lines~\ref{line:evaluation-trans-start}~through~\ref{line:evaluation-trans-end}). More in detail, for every union-list corresponding to a different active state, it will perform the merge of the union-list to have a single node that has an equivalent output to all of the nodes of the union-list (line~\ref{line:evaluation-merge}), then it will perform all the existing transitions from the active state, adding to the CAECS the extended, clock-check and reset nodes as needed. It is important to note that the first transitions that are performed are the transitions that contain the reset of the clock, because that way we can ensure that the union-list will be time-ordered.
The correctness of Algorithm~\ref{alg:evaluation} follows by induction over the stream (similarly than in~\cite{BucchiGQRV22}).

As all the operations are proven to return safe structures, then the CAECS is safe at all times and the enumeration in the output phase will be output-linear, as proven in Lemma~\ref{theo:output-linear}. Also, we want to ensure that the update-time of the algorithm is constant. For that, we will first demonstrate that the size of the union-lists is always bounded, and furthermore that the update time is constant. We will start by the boundedness of the union-lists in the following lemma:

\begin{lemma}
	If the deterministic timed CEA that is given as input to the Algorithm~\ref{alg:evaluation} is 
	\[
	\cT = (Q, \pset, \xset, \zset, \Delta, q_0, F),
	\] 
	then the size of every union-list in that algorithm is bounded by $|Q|+2$.
\end{lemma}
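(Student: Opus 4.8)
The plan is to prove the bound by induction on the stream position $j$, maintaining the invariant that, at the end of processing the event $(e_j,t_j)$, the union-list attached to every active state of $\cT$ contains at most one node per distinct value of $\maxreset$, and that every such value is either the current timestamp $t_j$ or equals $\maxreset(N)$ for the single merged node $N$ associated with some state of $\cT$ at the previous step. Since each union-list is rebuilt from scratch at every step (the results are stored in a fresh data structure), it suffices to control which $\maxreset$ values can enter a given target list during the transition phase of Algorithm~\ref{alg:evaluation}.

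First I would record the effect of the five union-list operations on the set of $\maxreset$ values occurring in a list. The operation $\mergeul$ collapses an entire list into a single node whose $\maxreset$ is the maximum of the list, so after the merge performed at the start of the transition phase each active state $p$ is represented by exactly one node $N_p$ carrying a single value $\maxreset(N_p)$; across all states this yields at most $|Q|$ distinct values. The operation $\ulreset(\ulist,t_j)$ overwrites the time component of every triple with $t_j$ (by the semantics of reset nodes), so all its output nodes share the value $t_j$ and collapse, via $\dunion$, into the two-node list described in its definition. The operation $\ulclockcheck$ only deletes a suffix of nodes (using that the list is time-ordered and that $\cT$ is monotonic), hence never increases the list length. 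Finally, $\insertul(\ulist,u)$ either adds a node with a genuinely new $\maxreset$ or, when a node $u'$ of equal $\maxreset$ already exists, replaces it by $\dunion(u,u')$; in either case the list still has exactly one node per distinct $\maxreset$.

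With these facts the inductive step is a counting argument. When the algorithm processes $(e_j,t_j)$, the nodes inserted into the union-list of a target state $q$ arise only from transitions $p \to q$ of $\cT$ taken from the merged nodes $N_p$. A reset transition contributes, through $\ulreset$, a node whose $\maxreset$ is $t_j$; a non-reset transition contributes, through $\dextend$ and $\dclockcheck$, a node whose $\maxreset$ equals $\maxreset(N_p)$, which is strictly smaller than $t_j$ because timestamps are strictly increasing. Since reset transitions are processed first and all carry the same value $t_j$, they collapse to a single front node, while the non-reset transitions contribute at most one distinct value per predecessor state, i.e. at most $|Q|$ values drawn from $\{\maxreset(N_p)\}$. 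Because $\insertul$ keeps one node per distinct $\maxreset$, the list for $q$ has at most $1+|Q|$ distinct-valued nodes; allowing for the distinguished front node $u_0$, whose value is permitted to tie with $u_1$, yields the stated bound $|Q|+2$.

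The main obstacle I expect is not the counting itself but the bookkeeping needed to make the invariant airtight through the whole update loop: I must verify that the reset-first processing order is exactly what guarantees the time-ordered shape assumed by $\ulclockcheck$ and $\mergeul$, that no operation ever splits a merged node back into several $\maxreset$ values, and that the transient lists built within a single step never exceed the bound before the final merge. Establishing these requires threading the time-ordered and ``one node per $\maxreset$'' properties simultaneously through each of the five list methods and through $\dunion$, $\dreset$ and $\dclockcheck$, which is where the argument becomes delicate.
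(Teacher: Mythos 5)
There is a genuine gap in the counting step. Your argument rests on the claim that every transition $p \to q$ contributes at most one $\maxreset$ value to the target list, namely $t_j$ (reset case) or $\maxreset(N_p)$ for the merged node $N_p$ (non-reset case). That is true for \emph{marking} transitions ($L \neq \varnothing$), where \textsc{ExecTrans} builds a single node on top of the merged node and wraps it in a singleton union-list. But for \emph{non-marking} transitions ($L = \varnothing$) the algorithm sets $\ulist' \gets \ulist$, applies $\ulclockcheck$/$\ulreset$ node by node, and then calls \textsc{Add}$(q,\nnode',\ulist')$; when $q$ is not yet a key of $T'$, \textsc{Add} installs the \emph{entire} list $\ulist'$ as $T'[q]$. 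In the non-reset case that list carries over all the distinct $\maxreset$ values of $T[p]$ (up to $|Q|+2$ of them by induction), not just $\maxreset(N_p)$. If further transitions then $\linsert$ merged nodes with values $\maxreset(T[p'])$ that are not already present in the carried-over set, your local tally gives at best something like $(|Q|+2) + 1 + |Q|$, not $|Q|+2$. So the per-predecessor, one-value-each bookkeeping does not close on its own.

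The paper escapes this with a global invariant rather than a local count. It defines $\gmaxset{j} = \{\maxreset(T_j[q]) \mid q \in \ckeys(T_j)\}$, the set of per-state \emph{maximum} max-resets at step $j$ (so $|\gmaxset{j}| \le |Q|$), and proves by induction that any value $t_k \in \maxset(T_i[q])$ with $t_k \le t_j$ for $j<i$ must belong to $\gmaxset{j}$. Taking $j$ so that $t_j$ is the largest max-reset in the list (or the third-largest when the head carries the current timestamp) forces \emph{all} the old values in that one list --- whether they arrived via a carried-over list or via freshly merged nodes from several predecessors --- into a single set of size at most $|Q|$, and the two front nodes account for the $+2$. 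To repair your proof you would need to replace the ``one value per predecessor'' step with some version of this cross-state, cross-time claim; your stated invariant (every value is $t_j$ or some $\maxreset(N)$ from the previous step) is essentially that claim for $j=i-1$, but the justification you give for propagating it does not cover the whole-list carry-over branch of \textsc{Add}.
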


\begin{proof}
	In first place, if $\ulist = u_1,\dots,u_n$  we define $\maxset(ul) = \{\maxreset(u_i)\,|\, i \le n\}$, that contains the max-resets of every node on an union list, and we define $\maxreset(ul) = \max\{\maxset(ul)\}$ the maximum max-reset in an union list. We then can define $\gmaxset{j} = \{\maxreset(T_j[q])\,|\,q\in\ckeys(T_j)\}$ for a hash table $T_j$ of union lists at reading label in position $j$ on the stream.
	
	After these initial definitions, we will try to prove the following claim: let $j<i$, for $t_k\in \maxset(T_i[q])$, if $t_k \le t_j$, then $t_k \in \gmaxset{j}$. In other words, if one of the maximum resets of the union list of state $q$ in position $i$ is the timestamp at position $k$, then it must be the maximum reset of any union list of the previous indexes $j<i$. Intuitively, if $t_k \in \maxset(T_i[q])$ then there must exist a transition from $p$ to $q$ without reset of the clock such that $t_k$ was the maximum reset of $T_{i-1}[q]$, and so on. This is because when we execute the transitions from index $i-1$ to index $i$, the new union list can only nodes with a maximum reset that was the maximum reset of a union list in index $i-1$, or the new value of the timestamp $t_i$. We will prove this result by induction.
	
	As the base case, let $j = i-1$, $t_k\in \maxset(T_i[q])$ such that $t_k \le t_j$. Then there is a transition $(p, P, \gamma, L, Z, q)\in \Delta$ such that $t_k = \maxreset(T_{i-1}[p])$. Then $t_k\in \gmaxset{i-1} = \gmaxset{j}$.	For the inductive case, we suppose that if $t_k \le t_{j+1}$, then $t_k\in \gmaxset{j+1}$. We will prove that if $t_k\le t_j$, then $t_k\in \gmaxset{j}$. As $t_k \le t_{j} \le t_{j+1}$, then $t_k\in \gmaxset{j+1}$. We then know that there exists a state $q$ such that $t_k \in \maxset(T_{j+1}[q])$. Then there exists a transition $(p, \ell, \gamma, L, Z, q)\in \Delta$ such that $t_k = \maxreset(T_j[p])$. Therefore, $t_k \in \gmaxset{j}$.
	
	We proved that for $t_k\in \maxset(T_i[q])$, if $j < i$ and $t_k \le t_j$, then $t_k \in \gmaxset{j}$. We will now prove that $|T_i[q]| \le |Q| + 2$ for all $i \ge 1$ and all $q\in Q$. Intuitively, every max reset in $T_i[q]$ must be the max reset of a state in a previous index. Then, the maximum number of values is constrained to the maximum number of max resets of the previous index, that is, the number of states. Let $t_j = \maxreset(T_i[q])$. If $j<i$, then $t_j \ge t_k$ for all $t_k\in \maxset(T_i[q])$. Then for all $t_k$ in $\maxset(T_i(q))$, $t_k\in \gmaxset{j}$. As such, $|T_i[q]| = |\maxset(T_i[q])| \le |\gmaxset{j}| \le |Q|$. If $j = i$, we will start by defining $T_i[q] = u_1u_2\dots u_n$. We know that $\maxreset(u_1) = t_i$. We also know that $\maxreset(u_2) \le \maxreset(u_1)$, and $\maxreset(u_3) < \maxreset(u_2)$, so $\maxreset(u_3) < t_i$. Let $t_m=\maxreset(u_3)$. Then for every $l \ge 3$, we know that $\maxreset(u_l) \le t_m < t_i$, so based on the previous result, $\maxreset(u_l) \in \gmaxset{m}$. Finally, $\{t_k\in\maxset(T_i[q]) \,|\, t_k\le t_m\} \subseteq \gmaxset{j}$,  and $|\{t_k\in \maxset(T_i[q]) \,|\, t_k > t_m\}| = 2$, therefore $|\maxset(T_i[q])| \le |\gmaxset{j}| + 2 \le |Q| + 2$
\end{proof}

Having already demonstrated the boundedness of the union-lists, it only left to prove that the update time of the CAECS at reading a new event $e$ of the stream is constant, i.e., it only depends on $|\cT|$ and $|e|$. We will present the following lemma:

\begin{lemma}
	The update time of the CAECS at reading a new label of the stream is bounded by $\mathcal{O}(|Q|(|Q|+|\Delta|)|e|)$, where $e$ is the next event returned by $\yield(\Stream)$.
\end{lemma}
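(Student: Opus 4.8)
The plan is to account for the total cost of Algorithm~\ref{alg:evaluation} when a single timed event $(e,t)$ is consumed, and to show that this cost depends only on $|\cT|$ and $|e|$, never on the current stream position. First I would recall the three phases of one update step: adding a single new bottom node via $\dbottom$; merging each active union-list via $\mergeul$; and firing the outgoing transitions of every active state, thereby populating a fresh hash table of union-lists for the next event. Since the keys of this hash table are states of $\cT$, there are at most $|Q|$ active union-lists to process, and I would bound the work contributed by each of them separately and then sum.

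For the per-operation analysis, the crux is that each primitive CAECS operation $\dextend$, $\dreset$, $\dclockcheck$, and $\dunion$ runs in constant time. This follows from the safe-structure invariant established by the preceding lemmas: every node has $\odepth \le 11$, so $\getgadget$ returns a gadget of bounded size, $\mergegadgets$ combines two bounded gadgets (by repeatedly applying Lemma~\ref{lemma:clock-aware-nodes}) in constant time, and each branch of the union procedure in Algorithm~\ref{alg:update-caecs} rewires only a constant number of nodes and gadgets while preserving the structures of Figure~\ref{fig:safe-nodes}. Hence every fired transition performs $O(1)$ work on the CAECS itself, beyond the predicate test.

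For the merging phase I would invoke the boundedness lemma proved just above, which guarantees that every union-list has at most $|Q|+2$ nodes; since $\mergeul$ creates one union node per list element and each creation is constant time, merging one union-list costs $O(|Q|)$, hence $O(|Q|^2)$ over all active states. The remaining cost per transition is the predicate test $e \models P$, which is $O(|e|)$ by assumption, together with the insertion of the resulting node into the next union-list: because $\insertul$ scans a list of size $O(|Q|)$ to keep the max-resets strictly decreasing, possibly performing one constant-time $\dunion$ on a collision, each insertion costs $O(|Q|)$. Summing over the at most $|\Delta|$ transitions yields $O(|\Delta|(|e|+|Q|))$, and adding the $O(|Q|^2)$ merging cost and the single $\dbottom$ call gives a total of $O(|Q|^2 + |\Delta||Q| + |\Delta||e|)$, which (using $|e|\ge 1$ and $|Q|\ge 1$) is dominated by $O(|Q|(|Q|+|\Delta|)|e|)$.

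I expect the main obstacle to be arguing rigorously that the gadget-level operations $\getgadget$ and $\mergegadgets$ never cascade beyond a constant number of nodes. This rests entirely on the safe-structure invariant being maintained across every update method, so the real work is threading those invariants carefully through each case of Algorithm~\ref{alg:update-caecs} and through $\ulclockcheck$ and $\ulreset$; once that is in hand, the time bound is a routine summation over the $O(|Q|)$ active states and $O(|\Delta|)$ transitions.
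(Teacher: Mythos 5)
Your proposal is correct and follows essentially the same route as the paper's proof: both charge $O(|Q|)$ per fired transition via the $|Q|+2$ bound on union-list length, $O(|e|)$ per predicate test, and $O(|Q|^2)$ for merging the at most $|Q|$ active union-lists, relying on the safe-structure invariant to make the primitive gadget operations constant-time. The only cost you do not itemize is the ordering of the hash-table keys ($O(|Q|\log|Q|)$), which the paper counts explicitly but which is absorbed by your $O(|Q|^2)$ term, so the final bound is unaffected.
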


\begin{proof}
	The number of calls to the for loops in the procedure \textproc{ExecTrans} is bounded by the number of transitions and the time to check if the event is part of the predicate, that is, $|\Delta|\cdot |e|$. In the worst case, the first case of the for loop of $\textproc{ExecTrans}$ is $\mathcal{O}(|Q|)$, as the union-list has at most $|Q| + 2$ elements and the only operation that is not $\mathcal{O}(1)$ is \textproc{Add} that is called once. The second case is $\mathcal{O}(3|Q|) = \mathcal{O}(|Q|)$, because $\ulclockcheck$, $\mergeul$ and $\textproc{Add}$ are all $\mathcal{O}(|Q|)$. For $\textproc{ExecResetTrans}$, we have the same time complexity in the for loop, so the total complexity of each iteration of the for loop is $\mathcal{O}(|Q|)$.
	
	Then, in the worst case, we call $|Q|$ times the procedures \textproc{ExecTrans} and \textproc{ExecResetTrans}, having in each one a call to $\mergeul$, adding a complexity of $\cO(|Q|^2)$. As the for loop is called $\mathcal{O}(|Q|)$ times, this adds also a complexity of $\mathcal{O}(|Q|\cdot |\Delta|)$. Finally, the ordering of the keys in line~\ref{line:enumeration-ord-keys} is at most $\mathcal{O}(|Q|\log|Q|)$. So at the end, the time complexity of the update phase of the algorithm is $\mathcal{O}(|Q|\log|Q| + |Q|(|Q|+|e|\cdot|\Delta|)) \subseteq \mathcal{O}(|Q|(|Q| + |e|\cdot |\Delta|)) \subseteq \mathcal{O}(|\cT|^2\cdot |e|)$. This complexity only depends on the size of the automaton and of the event.
\end{proof}

\begin{algorithm*}[t]
	\caption{Evaluation of an deterministic timed CEA $\cT$  over a timed stream $\Stream$.}\label{alg:evaluation}
	\smallskip
	\begin{varwidth}[t]{0.66\textwidth}
		\begin{algorithmic}[1]
			\Procedure{Evaluation}{$\cT,\Stream$}
			\State $j \gets -1$
			\State $T \gets \varnothing$
			\State $\Delta_Z \gets \{(p, P, \gamma, L, Z, q)\in \Delta\,|\,Z\ne\varnothing\}$
			\State $\Delta_\varnothing \gets \{(p, P, \gamma, L, \varnothing, q)\in \Delta\}$
			\While{$(e, t) \gets \yield(\Stream)$}\label{line:evaluation-while}
			\State $j \gets j+1$
			\State $T' \gets \varnothing$
			\State $\ulist \gets \linit(\dbottom(j, t))$\label{line:evaluation-add-bottom}
			\State $\textproc{ExecTrans}(q_0, \ulist, e, j, t, \Delta_Z)$ \label{line:evaluation-trans-start}
			\For{$p \in \ckeys(T)$}
			\State $\textproc{ExecTrans}(p, T[p], e, j, t, \Delta_Z)$
			\EndFor
			\State $\textproc{ExecTrans}(q_0, \ulist, e, j, t, \Delta_\varnothing)$
			\For{$p \in \cordkeys(T)$} \label{line:enumeration-ord-keys}
			\State $\textproc{ExecTrans}(p, T[p], e, j, t, \Delta_\varnothing)$ \label{line:evaluation-trans-end}
			\EndFor
			\State $T \gets T'$
			\State $\textproc{Output}(j)$
			\EndWhile
			\EndProcedure
			\State
			\Procedure{Add}{$q, \nnode, \ulist$}
			\If{$\disempty(\nnode)$}
			\State \Return
			\EndIf
			\If{$q \in \ckeys(T')$}
			\State $T'[q] \gets \linsert(T'[q], \nnode)$
			\Else
			\State $T'[q] \gets \ulist$
			\EndIf
			\State \Return
			\EndProcedure
			\State
			\Procedure{Output}{$j$}
			\For{$p \in \cordkeys(T)$}
			\If{$p \in F$}
			\State $\nnode \gets \lmerge(T[p])$
			\State $\textproc{Enumerate}(\nnode, j)$
			\EndIf
			\EndFor
			\EndProcedure
			\algstore{myalg}
		\end{algorithmic}
	\end{varwidth}
	\begin{varwidth}[t]{0.80\textwidth}
		\begin{algorithmic}[1]
			\algrestore{myalg}
			\Procedure{ExecTrans}{$p, \ulist, e, j, t, \Delta_O$}
			\State $\nnode \gets \lmerge(\ulist)$ \label{line:evaluation-merge}
			\For{$(p, P, \gamma, L, Z, q)\in \Delta_O$ such that $e \in P$}
			\If{$L \ne \varnothing$}
			\State $\nnode' \gets \dextend(\nnode, j, L)$
			\If{$\gamma \ne \TRUE$}
			\State Assume $\gamma := X \le c$
			\State $\nnode' \gets \dclockcheck(\nnode', t, c)$
			\EndIf
			\If{$Z \ne \varnothing \,\land\, \disempty(\nnode')$}
			\State $\nnode' \gets \dreset(\nnode', t)$
			\EndIf
			\If{$\neg\, \disempty(n')$}
			\State $\ulist' \gets \newul(\nnode')$
			\State $\textproc{Add}(q, \nnode', \ulist')$
			\EndIf
			\Else
			\State $\ulist' \gets \ulist$
			\If{$\gamma \ne \TRUE$}
			\State Assume $\gamma := X \le c$
			\State $\ulist' \gets \ulclockcheck(\ulist', t, c)$
			\EndIf
			\If{$Z \ne \varnothing \,\land\, \neg\, \disempty(\ulist')$}
			\State $\ulist' \gets \ulreset(\ulist', t)$
			\EndIf
			\If{$\neg\, \disempty(\ulist')$}
			\State $\nnode' \gets \mergeul(\ulist')$
			\State $\textproc{Add}(q, \nnode', \ulist')$
			\EndIf
			\EndIf
			\EndFor
			\State \Return
			\EndProcedure
		\end{algorithmic}
	\end{varwidth}
	\smallskip
\end{algorithm*} 	

\end{document}